\newcolumntype{Y}{>{\centering\arraybackslash}X}
\DeclarePairedDelimiter\ceil{\lceil}{\rceil}
\DeclarePairedDelimiter\floor{\lfloor}{\rfloor}
\DeclareMathOperator*{\argmin}{arg\,min}
\newtheorem{theorem}{Theorem}
\newtheorem{lemma}{Lemma}
\newtheorem{proposition}{Proposition}
\newtheorem{corollary}{Corollary}
\theoremstyle{definition}
\newtheorem{definition}{Definition}
\newtheorem{remark}{Remark}
\newcommand{\MMS}[2]{(#1, #2)\text{-MMS}}
\newcommand{\ins}[1]{\langle #1 \rangle}
\newcommand{\Targ}{\floor{\frac{2n}{3}}}
\newcommand{\ie}{\emph{i.e.}\xspace}
\newcommand\nocaption{
    \renewcommand\p@subfigure{}
    \renewcommand{\thesubfigure}{\arabic{figure}\alph{subfigure}}
}
\newcommand{\repeatcaption}[2]{%
  \renewcommand{\thefigure}{\ref{#1}}%
  \captionsetup{list=no}%
  \caption{#2}%
}
\newcommand{\repeatsubcaption}[2]{%
  \renewcommand{\thesubfigure}{\ref{#1}}%
  \captionsetup{list=no}%
  \caption{#2}%
}
\title{Guaranteeing Maximin Shares: Some Agents Left Behind}
\author{
Hadi Hosseini\\
Pennsylvania State University\\
University Park, PA, USA\\
\texttt{hadi@psu.edu}
\and
Andrew Searns\\
Rochester Institute of Technology\\
Rochester, NY, USA\\
\texttt{abs2157@rit.edu}
}
\date{}
\begin{document}

\maketitle

\begin{abstract}
The maximin share (MMS) guarantee is a desirable fairness notion for allocating indivisible goods. While MMS allocations do not always exist, several approximation techniques have been developed to ensure that all agents receive a fraction of their maximin share. 
We focus on an alternative approximation notion, based on the population of agents, that seeks to guarantee MMS for a fraction of agents. We show that no optimal approximation algorithm can satisfy more than a constant number of agents, and discuss the existence and computation of MMS for all but one agent and its relation to approximate MMS guarantees. We then prove the existence of allocations that guarantee MMS for $\frac{2}{3}$ of agents, and devise a polynomial time algorithm that achieves this bound for up to nine agents. 
A key implication of our result is the existence of allocations that guarantee $\text{MMS}^{\ceil{3n/2}}$, \ie, the value that agents receive by partitioning the goods into $\ceil{\frac{3}{2}n}$ bundles, improving 
the best known guarantee of $\text{MMS}^{2n-2}$. Finally, we provide empirical experiments using synthetic data.
\end{abstract}

\section{Introduction}
Fair division deals with the allocation of a set of resources to a set of agents in a fair manner \citep{brams1996fair,foley1967resource,hosseini2020fair}. One of its most notable application domains deals with the allocation of \textit{indivisible} (and non-shareable) goods. These applications arise, for example, in dividing inheritance among heirs and dispute resolution \citep{brams1996fair}, task assignment or course allocation \citep{budish2011combinatorial}, and have been  popularized in recent years due to publicly accessible platforms such as \textit{Spliddit} \citep{goldman2014spliddit}.

The most desirable fairness notion, envy-freeness, requires that each agent weakly prefers his own allocation to that of all other agents. A weaker notion, called proportionality, requires that each agent receives $\frac{1}{n}$ of her valuation of all goods, where $n$ is the total number of agents. Unfortunately, in the indivisible domain neither of these notions are guaranteed to exist: consider a single good and two interested agents. Neither envy-freeness nor proportionality can be satisfied as one agent is destined to remain empty handed.

A third fairness notion, proposed by  \cite{budish2011combinatorial}, is the \textit{maximin share} (MMS) guarantee, which can be seen as a generalization of the \textit{cut-and-choose} protocol \citep{brams1996fair,brams1995envy}.
In a nutshell, the maximin share is the value that an agent can guarantee by dividing the goods into $n$ bundles, assuming that all other agents choose a bundle before she does. 
There is no reason to believe that such a bound can always be satisfied for \textit{all} agents. It turns out that an MMS allocation does not always exist \citep{kurokawa2018fair}, and even when it exists, computing an MMS partition is intractable \citep{Bouveret2016}. 
Thus, several approximation techniques were developed to guarantee that each agent receives a fraction of her MMS. 
While these techniques are compelling, they may leave a large fraction of agents without their MMS guarantee.

We propose to circumvent this obstacle by taking an orthogonal direction based on the degree of fairness in the society of agents. Our goal is to find allocations that give a large fraction of agents their maximin share guarantee, possibly leaving a small fraction of agents behind (since MMS allocations do not always exist).
This approach is related to ordinal approximations of MMS and is of interest in several application domains: resources in a hospital must be distributed among tasks/procedures in a manner that a subset of critical tasks can be fully accomplished; senior college students (as a fraction of all students) must be guaranteed seats in courses to avoid graduation delays.

Unfortunately, envy-freeness and proportionality cannot be approximated in this dimension.  
For instance, one may be interested in minimizing the fraction of envious agents. Envy-freeness (similarly proportionality) may leave almost all but one agent unsatisfied: consider $n$ goods and $n$ agents with identical valuations, having $1- (n-1)\epsilon$ value for one good and $\epsilon$ for the rest. Any distribution of goods will leave $n-1$ agents envious.
Therefore, focusing on the maximin share guarantee we ask the following questions:

\begin{quote}\textit{
What fraction of agents can be guaranteed their MMS value?
And can we compute allocations that guarantee MMS for the majority of the agents?
}
\end{quote}

\paragraph{Our Contributions.}

We propose a novel fairness framework, called $\MMS{\alpha}{\beta}$, wherein $\alpha$ fraction of agents receive $\beta$ approximations of their MMS value and investigate the interplay between the two approximation parameters. We establish the connection between $\MMS{\alpha}{\beta}$ and ordinal approximations of $\text{MMS}$ (\Cref{prop:MMSAlpha}), and investigate the computational boundaries of $\alpha$ and $\beta$ as follows:

\begin{itemize}[leftmargin=*]
    \item \textbf{Optimal MMS}: We prove the existence of a family of instances where almost all agents do not receive their MMS by any optimal-MMS allocation which aims to maximize $\beta$ for all agents. We give an intricate counterexample that uses a \textit{quadratic} number of goods, where $n-3$ agents ($n-4$ if $n$ is odd) do not receive their MMS value in any optimal-MMS allocation (\Cref{thm:optimal_mms_failure}).
    
    \item \textbf{Computing $\MMS{\frac{n-1}{n}}{\beta}$}: We devise a polynomial-time algorithm achieving $\MMS{\alpha}{\beta}$ when $\alpha =  \frac{n-1}{n}$ (\Cref{thm:GeneralN}). 
    A consequence of this result is a tight approximation for $n \leq 4$ that immediately implies an algorithm for computing $\text{MMS}^{n+1}$ for $n < 4$ (\Cref{cor:MMS4}).
    
    \item \textbf{Existence of $\MMS{\frac{2}{3}}{1}$}: We prove the existence of $\MMS{\frac{2}{3}}{1}$ for any number of agents (\cref{thm:two_thirds_existence}), and provide an algorithm that achieves this bound in polynomial time for $n<9$ (\cref{thm:two_thirds_poly}).
    A key implication of our result is the existence of allocations that guarantee $\text{MMS}^{\lceil3n/2\rceil}$, \ie, the value that an agent receives by partitioning the goods into $\lceil\frac{3}{2}n\rceil$ bundles (\cref{cor:MMS3n2}). Our result significantly improves the best known guarantee of $\text{MMS}^{2n-2}$ \citep{aigner2019envy}.
\end{itemize}

On the experimental front, we show that a simplified, polynomial-time variant of our algorithm satisfies a large fraction of agents on the most `difficult' instances and this fraction grows as the ratio of goods to agents increases.

\paragraph{Related Work.}

On a high level, our approach is related to notions defined to measure the degree of fairness in a society of agents whether it pertains to minimizing the maximum or sum of envy \citep{chevaleyre2007reaching,chen2017ignorance,nguyen2014minimizing}, minimizing envy ratio \citep{lipton2004approximately}, balancing the amount of envy experienced in a society \citep{tadenuma1995refinements}, or promoting fairness through equitable allocations where agents receive the same level of utility \citep{schneckenburger2017atkinson,freeman2019equitable}.
Another closely related line of work suggests the notion of counting instances of envy (pairs of agents)---among several other plausible measures---as opposed to those that measure the intensity of envy \citep{feldman1974fairness}.
Similar ideas were briefly mentioned by \citep{chevaleyre2017distributed} and \citep{netzer2016distributed}; all regarding the degree of envy in a society.

On a technical level, the non-existence of MMS allocations \citep{procaccia2014fair} and its intractability \citep{Bouveret2016,woeginger1997polynomial} has given rise to a number of approximation techniques. These algorithms guarantee that each agent receives an approximation of their maximin share. Recently, \cite{nguyen2017approximate} gave a Polynomial Time Approximation Scheme (PTAS) for a notion defined as \textit{optimal-MMS}, that is, the largest value, $\beta$, for which each agent $i$ receives the value of $\beta\text{MMS}_{i}$. Since the number of possible partitions is finite, an optimal-MMS allocation always exists, and it is an MMS allocation if $\beta \geq 1$. The current best results guarantee $\beta \geq 2/3$ \citep{kurokawa2018fair,garg2018approximating} and $\beta \geq 3/4$ \citep{garg2019improved,ghodsi2018fair} in general, and $\beta \geq 7/8$ \citep{amanatidis2017approximation} and $\beta \geq 8/9$ \citep{gourves2019maximin} in the case of three agents .

\section{Preliminaries} \label{sec:preliminaries}
Let $N = \{1,\ldots, n\}$ be a set of agents and $M$ denote a set of $m$ indivisible goods, where $m > n$. We denote the value of agent $i\in N$ for good $g\in M$ by $v_{i}(g) \geq 0$. We assume that the valuation functions are \textit{additive}; that is, for each subset $G\subseteq M$, $v_{i}(G) = \sum_{g\in G} v_{i}(g)$.
An \textbf{instance} of the problem is $I = \ins{N, M, V}$ where $V$ is the valuation profile of agents.
An allocation $A = (A_1, \ldots, A_n)$ is an $n$-partition of $M$ that allocates the bundle of goods in $A_{i}$ to each agent $i\in N$.

\begin{definition}[\textbf{Envy-freeness}]
	An allocation $A$ is \emph{envy-free} (EF) if for every pair of agents $i,j \in N$, $v_i(A_i) \geq v_i(A_j)$. An allocation $A$ is \emph{envy-free up to one good} (EF1) if for every pair of agents $i,j \in N$ such that $A_j \neq \emptyset$, there exists some good $g \in A_j$ such that $v_i(A_i) \geq v_i(A_j \setminus \{g\})$. 
    An allocation $A$ is \emph{envy-free up to any good} (EFX) if for every pair $i,j \in N$ such that $A_j \neq \emptyset$, for any good $\forall g \in A_j$, $v_i(A_i) \geq v_i(A_j \setminus \{g\})$.
    These definitions are due to \citet{foley1967resource}, \citet{budish2011combinatorial}, and \citet{caragiannis2016unreasonable} respectively.
\end{definition}

\begin{definition}[\textbf{Maximin Share Guarantee}]
Let $\Pi_{k}(M)$ denote the set of $k$-partitions of $M$. 
The \textbf{$k$-maximin share guarantee} of agent $i\in N$ on $\Pi_{k}(M)$ is 
$$\text{MMS}_{i}^{k}(M) = \max_{(A_{1}, A_{2}, \ldots A_{k})\in \Pi_{k}(M)}\min_{j \in [k]} v_{i}(A_{j}),$$
where $[k] = \{1,\ldots, k\}$. Intuitively, this is the minimum value that can be guaranteed if agent $i$ partitions the goods into $k$ bundles and chooses the least valued bundle.  

An allocation $A = (A_1, \ldots, A_k) \in \Pi_{k}(M)$ is an \textit{MMS$^{k}$ allocation} if and only if $\forall i\in N, v_{i}(A_{i}) \geq \text{MMS}_{i}^{k}$. 
Note that  $\text{MMS}_{i}^{n}(M) \leq \frac{v_{i}(M)}{n}$ since  proportionality implies MMS.
When it is clear from the context, we write $\text{MMS}_{i}$ instead of $\text{MMS}_{i}^{n}(M)$ and simply refer to it as agent $i$'s \textit{MMS value}.
\end{definition}

\begin{definition}[\textbf{Optimal-MMS}]
While an MMS allocation does not always exist, an optimal relaxation of MMS guarantees that agents receive a fraction of their MMS value~\citep{nguyen2017approximate}.
Given an instance $I = \ins{N, M, V}$, the optimal-MMS value is defined by %
$$\lambda^{*}(I) = \max_{(A_{1}, \ldots, A_{n})\in \Pi_{n}(M)}\min_{i}\frac{v_{i}(A_{i})}{\text{MMS}_{i}^{n}}.$$ 
By definition, an allocation that gives each agent a $\lambda^{*}(I)$ fraction of its $\text{MMS}_{i}^{n}(M)$ value is guaranteed to exist.
\end{definition}

We now provide a set of relevant lemmas and observations that will be used throughout the paper.

\paragraph{Ordered instance.}
An instance is \textit{ordered} when all agents agree on the linear ordering of the goods, irrespective of their valuations. 
Formally, $I$ is an \emph{ordered instance} if there exists an ordering of goods, $(g_{1}, g_{2}, \ldots, g_{m})$ such that for all agents $i\in N$ we have 
$v_{i}(g_{1}) \geq v_{i}(g_{2}) \geq \ldots \geq v_{i}(g_{m})$.
\citet{Bouveret2016} showed that ordered instances are the `hardest' for achieving MMS.
In fact, these instances are the only known structures for which MMS does not exist~\citep{kurokawa2018fair}. 
The next lemma states that given an \textit{un}ordered instance, it is always possible to generate a corresponding ordered instance in polynomial time. Furthermore, if the ordered instance admits an MMS allocation, the original instance also admits an MMS allocation which can be computed in polynomial-time.

\begin{lemma}[\citet{barman2017approximation}]\label{lem:order}
Let $I' = \ins{N, M, V'}$ be an ordered instance constructed from the original instance $I = \ins{N, M, V}$. 
Given allocation $A'$ on $I'$, a corresponding allocation $A$ on $I$ can be computed in polynomial time such that for all $i\in N, v_{i}(A_{i}) \geq v'_{i}(A'_{i})$.
\end{lemma}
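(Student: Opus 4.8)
The plan is to make the construction of $I'$ explicit and then exhibit a greedy ``slot-filling'' procedure that transforms $A'$ into $A$. To build $I'$, for each agent $i$ let $v'_i(g_j)$ be the $j$-th largest value in the multiset $\{v_i(g) : g \in M\}$; this takes $O(nm\log m)$ time (one sort per agent) and makes $(g_1,\ldots,g_m)$ a common decreasing order for all agents, so $I'$ is ordered. I would note in passing that $\text{MMS}^k_i$ depends only on the multiset of agent $i$'s values, hence $\text{MMS}^k_i(I') = \text{MMS}^k_i(I)$ for every $i$ and $k$; together with the lemma this gives the ``furthermore'' remark about transporting MMS allocations back to $I$.

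For the conversion itself, think of $g_1,\ldots,g_m$ as abstract \emph{slots}, and let $\pi(j)$ be the unique agent with $g_j \in A'_{\pi(j)}$. Process the slots in increasing order $j=1,2,\ldots,m$; at slot $j$ let agent $\pi(j)$ take, from among the real goods in $M$ not yet taken, one she values most, and add it to $A_{\pi(j)}$. After all $m$ slots are processed each good has been assigned exactly once, so $A = (A_1,\ldots,A_n)$ is a valid allocation, and the procedure runs in $O(m^2)$ time on top of the initial sort.

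The crux is the claim that the good agent $\pi(j)$ takes at slot $j$ is worth (to her) at least $v'_{\pi(j)}(g_j)$. This is the step where I expect the work to lie, and it is where the ordered structure is used. Before slot $j$ is processed, exactly $j-1$ goods have been removed from $M$ --- one per earlier slot, independently of which agents owned those slots --- so at least $m-(j-1) \geq 1$ real goods remain. Now, among the top $j$ goods for agent $\pi(j)$ (breaking ties so as to fix exactly $j$ of them) at most $j-1$ have been removed, so at least one survives; that surviving good has value to $\pi(j)$ at least the $j$-th largest value of $v_{\pi(j)}$, which is exactly $v'_{\pi(j)}(g_j)$. Since agent $\pi(j)$ picks a most-valued available good, her pick at slot $j$ is worth at least $v'_{\pi(j)}(g_j)$.

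Finally, summing the claim over all slots owned by a fixed agent $i$ yields $v_i(A_i) = \sum_{j:\pi(j)=i} v_i(\text{good picked at slot } j) \geq \sum_{j:\pi(j)=i} v'_i(g_j) = v'_i(A'_i)$, which is the desired conclusion. The only genuinely delicate point is the order-statistic inequality combined with the bookkeeping that exactly $j-1$ goods are gone when slot $j$ is reached; the rest is routine.
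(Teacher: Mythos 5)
Your proposal is correct, and it is essentially the canonical picking-sequence argument behind this lemma: the paper itself does not reprove it but cites \citet{barman2017approximation}, where the proof is exactly this construction (sort each agent's values to form $I'$, then let the owner of slot $g_j$ in $A'$ greedily take her favorite remaining real good, using the order-statistic observation that at most $j-1$ of her top $j$ goods can be gone when slot $j$ is reached). Your handling of the tie-breaking and the bookkeeping that exactly $j-1$ goods are removed before slot $j$ is sound, so nothing further is needed.
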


\paragraph{Scale invariance.}
The scale invariance property of MMS states that if an agent's valuations are scaled by a factor, then its MMS value scales by the same factor.

\begin{lemma}[\citet{ghodsi2018fair}]\label{lem:scale_invariance}
Let $I = \ins{N, M, V}$ be an instance and $c > 0$ be a real scalar. Let $I' = \ins{N, M, V'}$ be constructed so that $v'_{i}(g) = cv_{i}(g)$ for all $g \in M$. Then $\text{MMS}_{i}^{'k}(M) = c\text{MMS}_{i}^{k}(M)$.
\end{lemma}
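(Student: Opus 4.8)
The plan is to unwind the definition of $\text{MMS}_i^k$ directly and push the scalar $c$ through the $\min$ and $\max$ operators, using only additivity of the valuations and positivity of $c$.

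First I would fix an agent $i \in N$ and an arbitrary $k$-partition $(A_1, \ldots, A_k) \in \Pi_k(M)$. Since $M$ is unchanged between $I$ and $I'$, the feasible set $\Pi_k(M)$ is literally the same for both instances, so there is nothing to reconcile there. By additivity of $v'_i$ and the hypothesis $v'_i(g) = c\,v_i(g)$, for each bundle $A_j$ we have $v'_i(A_j) = \sum_{g \in A_j} v'_i(g) = c \sum_{g \in A_j} v_i(g) = c\,v_i(A_j)$.

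Next I would take the minimum over $j \in [k]$: because $c > 0$, multiplication by $c$ is monotone, so $\min_{j \in [k]} v'_i(A_j) = \min_{j \in [k]} c\,v_i(A_j) = c \min_{j \in [k]} v_i(A_j)$. Finally I would take the maximum over all partitions in $\Pi_k(M)$; again $c>0$ lets the scalar commute with $\max$, yielding
$$\text{MMS}_i^{'k}(M) = \max_{(A_1,\ldots,A_k)\in\Pi_k(M)} \min_{j\in[k]} v'_i(A_j) = c \max_{(A_1,\ldots,A_k)\in\Pi_k(M)} \min_{j\in[k]} v_i(A_j) = c\,\text{MMS}_i^k(M),$$
which is the claim.

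There is no real obstacle here: the statement is an immediate consequence of additivity together with the fact that scaling by a \emph{positive} constant preserves both $\min$ and $\max$. The only point worth flagging explicitly is that positivity of $c$ is essential — for $c \le 0$ the scalar would not commute with the inner $\min$ or the outer $\max$ in the required way — and that the domain of optimization is identical for both instances since only the valuations, not the good set $M$, are altered.
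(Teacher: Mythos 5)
Your proof is correct: it is the standard (and essentially only) argument — additivity gives $v'_i(A_j) = c\,v_i(A_j)$ bundle-wise, and positivity of $c$ lets the scalar commute with the inner $\min$ and outer $\max$ over the unchanged partition set $\Pi_k(M)$. The paper itself does not prove this lemma but imports it from \citet{ghodsi2018fair}, so there is nothing to compare against; your argument fills that gap correctly and completely.
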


Thus, an instance $I = \ins{N, M, V}$ and a real value $k$ can be scaled to form a new instance $I' = \ins{N, M, V'}$, such that for each agent $i\in N$, $v'_{i}(g) = \frac{k}{v_{i}(M)} v_{i}(g)$, and $v'_{i}(M) = k$.

\paragraph{Valid reduction.} We call the act of removing a set $A_i \subseteq M$ of goods and an agent $i$ a valid reduction if the following two conditions hold:
i) $v_i(A_i) \geq \text{MMS}_{i}^{n}(M)$ and
ii) $\forall j \in N \setminus \{i\}, \text{MMS}_{j}^{n-1}(M \setminus A_i) \geq \text{MMS}_{j}^{n}(M)$.

\begin{lemma}[\citet{garg2018approximating}]\label{lem:n_goods_reduction}
Given an ordered instance $I = \ins{N, M, V}$ with $|N|=n$ such that $\text{MMS}_{i}^{n} \leq 1$, if $v_{i}(\{g_{n},g_{n+1}\}) \geq 1$, then removing $A_{i} = \{g_{n}, g_{n+1}\}$ and agent $i$ forms a valid reduction. Similarly, the removal of $\{g_{1}\}$ and agent $i$ forms a valid reduction if $v_{i}(\{g_{1}\}) \geq 1$.
\end{lemma}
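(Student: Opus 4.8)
The statement packages four sub‑claims: conditions (i) and (ii) of a valid reduction, for each of $A_i=\{g_n,g_{n+1}\}$ and $A_i=\{g_1\}$. Conditions (i) I would dispatch immediately: the normalization $\text{MMS}_i^n(M)\le 1$ is without loss by scale invariance (\Cref{lem:scale_invariance}), and then $v_i(\{g_n,g_{n+1}\})\ge 1\ge\text{MMS}_i^n(M)$ and $v_i(\{g_1\})\ge 1\ge\text{MMS}_i^n(M)$ are exactly the hypotheses. So all the work is in condition (ii): for each $j\in N\setminus\{i\}$, show $\text{MMS}_j^{n-1}(M\setminus A_i)\ge\text{MMS}_j^n(M)$. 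Fix such a $j$, write $\mu:=\text{MMS}_j^n(M)$, and fix an $n$‑partition $(P_1,\dots,P_n)$ of $M$ with $v_j(P_k)\ge\mu$ for all $k$. The plan in every case is to surgically reshape this partition into an $(n-1)$‑partition of $M\setminus A_i$ whose bundles are all still worth at least $\mu$.

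I would do the easy cases first. For $A_i=\{g_1\}$: relabel so $g_1\in P_n$, keep $P_1,\dots,P_{n-1}$, and pour the leftover goods $P_n\setminus\{g_1\}$ into one of them; the surviving bundles only gained value, and together they cover $M\setminus\{g_1\}$. For $A_i=\{g_n,g_{n+1}\}$, if a single bundle contains both goods the same trick applies (delete that bundle, redistribute its other goods). Otherwise $g_n\in P_a$ and $g_{n+1}\in P_b$ with $a\ne b$; relabel $a=n-1$, $b=n$. The natural move is to merge these two bundles and delete the two goods — take $P_1,\dots,P_{n-2}$ together with $(P_{n-1}\cup P_n)\setminus\{g_n,g_{n+1}\}$ — which already works whenever $v_j(P_{n-1})+v_j(P_n)-v_j(g_n)-v_j(g_{n+1})\ge\mu$.

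The residual case, where that merged bundle falls short, is the crux, and I would resolve it by extracting structure. If $v_j(P_{n-1})+v_j(P_n)-v_j(g_n)-v_j(g_{n+1})<\mu$, then from $v_j(P_{n-1}),v_j(P_n)\ge\mu$ one gets $v_j(P_{n-1})<v_j(g_n)+v_j(g_{n+1})$ and likewise for $P_n$. Since each of $g_1,\dots,g_n$ has value at least $v_j(g_n)\ge v_j(g_{n+1})$, a bundle holding $g_n$ (resp. $g_{n+1}$) together with any second good of $\{g_1,\dots,g_{n+1}\}$ would be worth at least $v_j(g_n)+v_j(g_{n+1})$; hence $P_{n-1}$ and $P_n$ each contain exactly one good of $\{g_1,\dots,g_{n+1}\}$. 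That forces all $n-1$ goods $g_1,\dots,g_{n-1}$ into the $n-2$ bundles $P_1,\dots,P_{n-2}$, so some $P_c$ with $c\le n-2$ contains two of them, say $g_p,g_q$ with $p<q\le n-1$. Now rebuild: discard $P_c$, keep the other $n-3$ of $P_1,\dots,P_{n-2}$, and set $B_1=(P_{n-1}\setminus\{g_n\})\cup\{g_q\}\cup(P_c\setminus\{g_p,g_q\})$ and $B_2=(P_n\setminus\{g_{n+1}\})\cup\{g_p\}$. Because $v_j(g_q)\ge v_j(g_n)$ and $v_j(g_p)\ge v_j(g_{n+1})$, both $B_1$ and $B_2$ are worth at least $\mu$, and these $n-1$ bundles partition $M\setminus\{g_n,g_{n+1}\}$.

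I expect the main obstacle to be precisely this last case — in particular the deduction that $P_{n-1}$ and $P_n$ each carry only one ``large'' good, which is what frees up a third bundle to cannibalize. The remaining points are routine bookkeeping: verifying that the rebuilt family is literally a partition of the intended ground set (using $g_p,g_q\in P_c$, so $\{g_q\}\cup(P_c\setminus\{g_p,g_q\})\cup\{g_p\}=P_c$), and handling the small‑$n$ boundary — for $n\le 2$ the residual case is vacuous, since there is no room to crowd $g_1,\dots,g_{n-1}$ into $P_1,\dots,P_{n-2}$, so the merge step always suffices.
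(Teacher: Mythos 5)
The paper does not prove this lemma at all: it is imported verbatim from \citet{garg2018approximating} (and is essentially the classical reduction of Bouveret and Lemaître), so there is no in-paper argument to compare yours against. Your reconstruction is correct and complete: condition (i) is indeed immediate from the hypothesis $v_i(A_i)\ge 1\ge\text{MMS}_i^n(M)$ (no appeal to scale invariance is even needed, since $\text{MMS}_i^n\le 1$ is assumed), and for condition (ii) your case analysis is sound. In particular, the crux case is handled properly: from $v_j(P_{n-1}\cup P_n)-v_j(g_n)-v_j(g_{n+1})<\mu$ and $v_j(P_{n-1}),v_j(P_n)\ge\mu$ you correctly conclude that each of $P_{n-1},P_n$ holds exactly one good of $\{g_1,\dots,g_{n+1}\}$, the pigeonhole then yields a bundle $P_c$ with two of $g_1,\dots,g_{n-1}$, and the swap $g_q\mapsto g_n$, $g_p\mapsto g_{n+1}$ preserves both the partition structure and the value bounds because the instance is ordered (for agent $j$ as well, by the definition of an ordered instance). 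Your treatment of the small-$n$ boundary is also right: for $n=2$ the residual case is contradictory, so the merge always succeeds. This is the same mechanism used in the cited source, so there is nothing to flag.
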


\paragraph{Normalized instance.} An instance is normalized if all of the following properties hold:
i) the instance is ordered,
ii) it is scaled so that $v_{i}(M) = n$, and
iii) it is reduced so that $v_{i}(g_{1}) < 1$ and $v_{i}(\{g_{n}, g_{n+1}\}) < 1$.
By combining Lemma~\ref{lem:order}, \ref{lem:scale_invariance}, and \ref{lem:n_goods_reduction}, we may assume that instances are normalized. We prove this claim in Lemma~\ref{lem:normalize}.

\section{Approximating Maximin Share}\label{sec:alpha-beta}
We introduce a fairness concept that allows for interpolation between two dimensions in approximating MMS pertaining to the fraction of agents $\alpha$ that receive a $\beta$ approximation of their maximin share.

\begin{definition}[\textbf{$\MMS{\alpha}{\beta}$}] 
An allocation $A$ guarantees $\MMS{\alpha}{\beta}$ if $\alpha \in (0,1]$ fraction of agents receive at least their $\beta \in (0,1]$ approximation of their $\text{MMS}_{i}^{n}$.
Formally, given an instance $I = \ins{N, M, V}$, an allocation $A$ guarantees $\MMS{\alpha}{\beta}$ if there exists a subset $N'\subseteq N$ with $|N'| \geq \floor{\alpha |N|}$ such that for all $i\in N', v_{i}(A_{i}) \geq \beta \text{MMS}_{i}^{n}$. 
We say that $\MMS{\alpha}{\beta}$ exists if for any instance $I = \ins{N, M, V}$, for \textit{every} subset $N' \subseteq N$ of agents such that $|N'| = \floor{\alpha |N|}$, there exists an allocation $A$ such that for all $i \in N'$, $v_{i}(A_{i}) \geq \beta \text{MMS}_{i}^{n}$. 
\end{definition}
Previous MMS approximation results can be seen in this context as efforts to tighten the approximation bound for all agents ($\alpha = 1$). Notably, greedy algorithms exist to compute $\MMS{1}{\frac{2}{3}}$ \citep{barman2017approximation,garg2018approximating} and $\MMS{1}{\frac{3}{4}}$ \citep{ghodsi2018fair} allocations.

\begin{remark} It is crucial to highlight two key distinctions: First, in contrast to  previous works \citep{ortega2018social,segal2019democratic,NYMAN2020115}, the definition of $\MMS{\alpha}{\beta}$ existence does not only hold for a fixed subset of agents. Rather, it is a stronger concept that holds for every subset of $\floor{\alpha n}$ agents.
Second, $\MMS{\alpha}{\beta}$ enables a social planner to pre-select the $N'\subset N$ of $\floor{\alpha n}$ agents--independent of their preferences--according to some priority ordering over the agents or by selecting the agents uniformly at random. 
For instance, a higher priority may be given to senior college students; a practice that is already common in most course allocation procedures.
\end{remark}

\subsection{$\MMS{\alpha}{1}$ Implies  $\text{MMS}^{k}$ for $k\geq \lceil \frac{n}{\alpha} \rceil$}

\citet{budish2011combinatorial} showed that approximating the competitive equilibrium from equal incomes (A-CEEI) guarantees $\text{MMS}^{n+1}(M)$, \ie adding a dummy agent and asking all agents to partition the goods into $n+1$ bundles. However, this result does not imply the existence of $\text{MMS}^{n+1}$ in allocating indivisible goods because allocations achieved by A-CEEI may have excess supply or excess demands. This approach can result in infeasible allocations in fair division settings that do not allow the addition of excess goods. Therefore, the existence of $\text{MMS}^{k}$ for $n+1 \leq k \leq 2n-2$ remains an open problem.
In \cref{prop:MMSAlpha} we show the relation between $\text{MMS}^{k}$ with $\MMS{\alpha}{\beta}$.

\begin{proposition}\label{prop:MMSAlpha}
The existence of $\MMS{\alpha}{1}$ implies the existence of $\text{MMS}^{k}$ for $k\geq \lceil \frac{n}{\alpha} \rceil$. 
\end{proposition}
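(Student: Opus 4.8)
The plan is to prove the contrapositive-flavored implication directly: assuming $\MMS{\alpha}{1}$ exists, I want to show that for any instance and any agent, her $\text{MMS}^{k}$ value is met for $k \geq \lceil n/\alpha \rceil$. The natural strategy is a \emph{padding} argument. Given an instance $I = \ins{N, M, V}$ with $|N| = n$, construct an auxiliary instance $I' = \ins{N', M, V'}$ by adding $k - n$ dummy agents, so that $|N'| = k \geq \lceil n/\alpha \rceil$, where each dummy agent copies the valuation of some fixed real agent (or has an arbitrary valuation — the exact choice should not matter). The key observation linking the two instances is that for each original agent $i \in N$, $\text{MMS}_{i}^{n}(M)$ computed in $I$ equals $\text{MMS}_{i}^{k}(M)$ computed in $I'$ when $k$ is the number of agents in $I'$ — wait, that is false; more partitions into fewer bundles gives a larger min. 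So I need to be careful: $\text{MMS}_{i}^{k}(M) \leq \text{MMS}_{i}^{n}(M)$ since $k \geq n$. That inequality is in fact exactly what helps us.

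The core of the argument: since $\MMS{\alpha}{1}$ exists, applying it to $I'$ with $|N'| = k$ agents, there is an allocation $A'$ of $M$ among the $k$ agents of $I'$ that satisfies a subset $N''$ of size $\lfloor \alpha k \rfloor$ with $v'_{i}(A'_{i}) \geq \text{MMS}_{i}^{'k}(M)$ for all $i \in N''$. The crucial point is the freedom in the definition of existence: $\MMS{\alpha}{1}$ existence holds for \emph{every} subset of $\lfloor \alpha k \rfloor$ agents, so I may choose $N'' \supseteq N$ — this requires $\lfloor \alpha k \rfloor \geq n$, which is where $k \geq \lceil n/\alpha \rceil$ enters (one should double-check the floor/ceiling bookkeeping: $k \geq \lceil n/\alpha \rceil$ gives $\alpha k \geq n$, hence $\lfloor \alpha k \rfloor \geq n$ since $n$ is an integer). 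So the allocation $A'$ gives every original agent $i \in N$ at least $\text{MMS}_{i}^{'k}(M)$. Then I observe $\text{MMS}_{i}^{'k}(M) = \text{MMS}_{i}^{k}(M)$ (the dummies don't affect agent $i$'s own MMS computation, which only depends on $M$, $v_i$, and the bundle count $k$), so each original agent receives at least her $\text{MMS}_{i}^{k}(M)$ value in $A'$. Restricting $A'$ to the original $n$ agents may leave some goods on dummy agents; reassign those goods arbitrarily among the real agents, which only increases everyone's value under additivity. This yields an allocation of $M$ to $N$ in which every agent gets at least $\text{MMS}_{i}^{k}(M)$ — i.e., an $\text{MMS}^{k}$ allocation — establishing existence of $\text{MMS}^{k}$.

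The step I expect to be the main obstacle — or at least the one demanding the most care — is the bookkeeping around the definition of $\MMS{\alpha}{1}$ existence and making sure I am entitled to pick $N'' \supseteq N$. The definition says: for \emph{every} subset $N'$ of size $\lfloor \alpha |N'_{\text{total}}| \rfloor$ there is an allocation satisfying all agents in that subset. Applied to $I'$ with $k$ agents, I need $n \leq \lfloor \alpha k \rfloor$ so that $N$ can be extended to a satisfied subset of exactly the right size by throwing in arbitrary extra agents (dummies, or real agents — it doesn't matter which since I'll reassign goods afterward anyway). The inequality $k \geq \lceil n/\alpha \rceil \Rightarrow \alpha k \geq \alpha \lceil n/\alpha \rceil \geq n$ is the needed arithmetic, and since $n \in \mathbb{Z}$, $\lfloor \alpha k \rfloor \geq n$ follows. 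The remaining subtlety is purely notational: verifying that $\text{MMS}_{i}^{'k}(M) = \text{MMS}_{i}^{k}(M)$, which is immediate from the definition of the $k$-maximin share depending only on the number of bundles and the agent's own valuation over $M$, and that the final re-distribution of leftover goods preserves the guarantee by additivity and non-negativity of valuations.
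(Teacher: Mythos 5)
Your proposal is correct and follows essentially the same padding argument as the paper's own proof: add dummy agents so the total is $k \geq \lceil n/\alpha \rceil$, invoke the ``every subset'' clause of the $\MMS{\alpha}{1}$ existence definition to satisfy all original agents, and note that each then receives at least $\text{MMS}_{i}^{k}(M)$. Your extra bookkeeping (checking $\lfloor \alpha k \rfloor \geq n$ and reassigning goods left with dummies) only makes explicit steps the paper leaves implicit.
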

\begin{proof}
Suppose that $\MMS{\alpha}{1}$ exists. Given an instance $I = \ins{N, M, V}$ with $n$ agents, we construct an instance $I'$ from $I$ by adding $\ceil{\frac{n}{\alpha}} - n$ dummy agents. Therefore, $|N'| = \ceil{\frac{n}{\alpha}}$. Since $\MMS{\alpha}{1}$ exists, for every subset $N'' \subseteq N'$ of size $\floor{\alpha |N'|}$, there exists an allocation which satisfies $\MMS{\alpha}{1}$ on that set of agents. Thus, we may choose the $\floor{\alpha |N'|}$ agents to contain exactly the well-defined original set of agents in $N$, that is, $N'' \coloneqq N$. Hence, each agent $i \in N$ receives at least $v_{i}(A_{i}) \geq \text{MMS}^{|N'|}$, which implies $\text{MMS}^{\ceil{\frac{n}{\alpha}}}$ for agents in $N$.
\end{proof}

\subsection{Failure of Optimal MMS Algorithms} \label{sec:optimalMMS}
There are two primary motivations behind the approximation parameter $\alpha$ (the fraction of agents). First, \cref{prop:MMSAlpha} states that fixing $\beta = 1$ immediately implies an ordinal approximation of MMS by partitioning goods into $k > n$ bundles. Second, our next theorem shows that maximizing the value of $\beta$ for all agents may result in only a small fraction of agents ($\alpha$) receiving their MMS value. 
\cref{thm:optimal_mms_failure} shows that there exists a family of instances where any optimal-MMS allocation can only give a small constant number of agents their MMS value. Therefore, an optimal-MMS algorithm (e.g. \citet{nguyen2017approximate}) will result in an $\MMS{\alpha}{\lambda^{*}}$ allocation such that $\alpha$ goes to zero as the number of agents, $n$, increases.

\begin{theorem}\label{thm:optimal_mms_failure}
For every $n \geq 4$, there exists an instance with $O(n^{2})$ goods where every optimal-MMS allocation guarantees at most $3$ ($4$ if $n$ is odd) agents their MMS value.
\end{theorem}

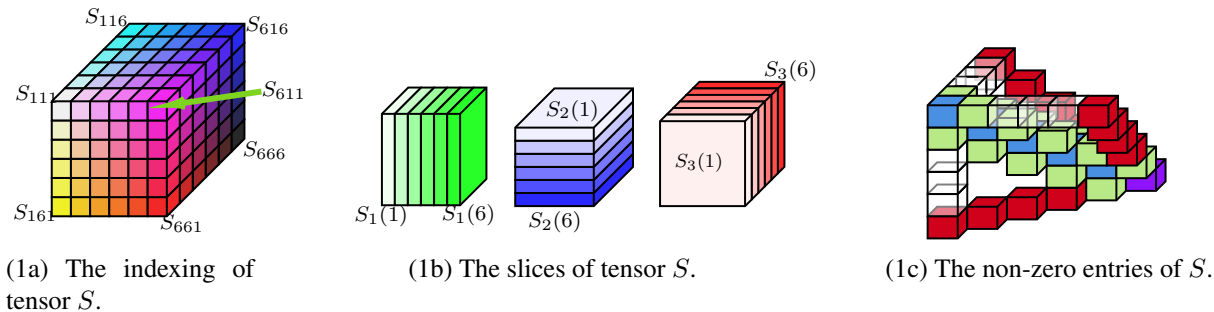
\begin{figure*}[t]
\nocaption
\centering \scriptsize
    \begin{subfigure}[t]{0.2\textwidth}
        \centering 
        \subfile{images/S_Shape_fig}
        \caption{The indexing of tensor $S$.}
        \label{fig:tensor_shape}
    \end{subfigure} \hfill
    \begin{subfigure}[t]{0.33\textwidth}
        \centering 
        \tikzset{every picture/.style={line width=0.75pt}} 
    \begin{tikzpicture}[x=0.75pt,y=0.75pt,yscale=-.44,xscale=.44]

\draw  [fill={rgb, 255:red, 255; green, 40; blue, 40 }  ,fill opacity=1 ] (388.56,45) -- (396.12,37.44) -- (493.56,37.44) -- (493.56,134.88) -- (486,142.44) -- (388.56,142.44) -- cycle ; \draw   (493.56,37.44) -- (486,45) -- (388.56,45) ; \draw   (486,45) -- (486,142.44) ;
\draw  [fill={rgb, 255:red, 255; green, 80; blue, 80 }  ,fill opacity=1 ] (381,52.56) -- (388.56,45) -- (486,45) -- (486,142.44) -- (478.44,150) -- (381,150) -- cycle ; \draw   (486,45) -- (478.44,52.56) -- (381,52.56) ; \draw   (478.44,52.56) -- (478.44,150) ;
\draw  [fill={rgb, 255:red, 255; green, 120; blue, 120 }  ,fill opacity=1 ] (373.56,60) -- (381.12,52.44) -- (478.56,52.44) -- (478.56,149.88) -- (471,157.44) -- (373.56,157.44) -- cycle ; \draw   (478.56,52.44) -- (471,60) -- (373.56,60) ; \draw   (471,60) -- (471,157.44) ;
\draw  [fill={rgb, 255:red, 255; green, 160; blue, 160 }  ,fill opacity=1 ] (366,67.56) -- (373.56,60) -- (471,60) -- (471,157.44) -- (463.44,165) -- (366,165) -- cycle ; \draw   (471,60) -- (463.44,67.56) -- (366,67.56) ; \draw   (463.44,67.56) -- (463.44,165) ;
\draw  [fill={rgb, 255:red, 255; green, 200; blue, 200 }  ,fill opacity=1 ] (358.56,75) -- (366.12,67.44) -- (463.56,67.44) -- (463.56,164.88) -- (456,172.44) -- (358.56,172.44) -- cycle ; \draw   (463.56,67.44) -- (456,75) -- (358.56,75) ; \draw   (456,75) -- (456,172.44) ;
\draw  [fill={rgb, 255:red, 255; green, 240; blue, 240 }  ,fill opacity=1 ] (351,82.56) -- (358.56,75) -- (456,75) -- (456,172.44) -- (448.44,180) -- (351,180) -- cycle ; \draw   (456,75) -- (448.44,82.56) -- (351,82.56) ; \draw   (448.44,82.56) -- (448.44,180) ;
\draw  [fill={rgb, 255:red, 240; green, 255; blue, 240 }  ,fill opacity=1 ] (32.18,74.18) -- (62.36,44) -- (77.18,44) -- (77.18,148.82) -- (47,179) -- (32.18,179) -- cycle ; \draw   (77.18,44) -- (47,74.18) -- (32.18,74.18) ; \draw   (47,74.18) -- (47,179) ;
\draw  [fill={rgb, 255:red, 200; green, 255; blue, 200 }  ,fill opacity=1 ] (47,74.18) -- (77.18,44) -- (92,44) -- (92,148.82) -- (61.82,179) -- (47,179) -- cycle ; \draw   (92,44) -- (61.82,74.18) -- (47,74.18) ; \draw   (61.82,74.18) -- (61.82,179) ;
\draw  [fill={rgb, 255:red, 160; green, 255; blue, 160 }  ,fill opacity=1 ] (62,74.18) -- (92.18,44) -- (107,44) -- (107,148.82) -- (76.82,179) -- (62,179) -- cycle ; \draw   (107,44) -- (76.82,74.18) -- (62,74.18) ; \draw   (76.82,74.18) -- (76.82,179) ;
\draw  [fill={rgb, 255:red, 120; green, 255; blue, 120 }  ,fill opacity=1 ] (77,74.18) -- (107.18,44) -- (122,44) -- (122,148.82) -- (91.82,179) -- (77,179) -- cycle ; \draw   (122,44) -- (91.82,74.18) -- (77,74.18) ; \draw   (91.82,74.18) -- (91.82,179) ;
\draw  [fill={rgb, 255:red, 80; green, 255; blue, 80 }  ,fill opacity=1 ] (92,74.18) -- (122.18,44) -- (137,44) -- (137,148.82) -- (106.82,179) -- (92,179) -- cycle ; \draw   (137,44) -- (106.82,74.18) -- (92,74.18) ; \draw   (106.82,74.18) -- (106.82,179) ;
\draw  [fill={rgb, 255:red, 40; green, 255; blue, 40 }  ,fill opacity=1 ] (107,74.18) -- (137.18,44) -- (152,44) -- (152,148.82) -- (121.82,179) -- (107,179) -- cycle ; \draw   (152,44) -- (121.82,74.18) -- (107,74.18) ; \draw   (121.82,74.18) -- (121.82,179) ;
\draw  [fill={rgb, 255:red, 40; green, 40; blue, 255 }  ,fill opacity=1 ] (185,164.97) -- (229.97,120) -- (320,120) -- (320,135.03) -- (275.03,180) -- (185,180) -- cycle ; \draw   (320,120) -- (275.03,164.97) -- (185,164.97) ; \draw   (275.03,164.97) -- (275.03,180) ;
\draw  [fill={rgb, 255:red, 80; green, 80; blue, 255 }  ,fill opacity=1 ] (185,149.95) -- (229.97,104.97) -- (320,104.97) -- (320,120) -- (275.03,164.97) -- (185,164.97) -- cycle ; \draw   (320,104.97) -- (275.03,149.95) -- (185,149.95) ; \draw   (275.03,149.95) -- (275.03,164.97) ;
\draw  [fill={rgb, 255:red, 120; green, 120; blue, 255 }  ,fill opacity=1 ] (185,134.92) -- (229.97,89.95) -- (320,89.95) -- (320,104.97) -- (275.03,149.95) -- (185,149.95) -- cycle ; \draw   (320,89.95) -- (275.03,134.92) -- (185,134.92) ; \draw   (275.03,134.92) -- (275.03,149.95) ;
\draw  [fill={rgb, 255:red, 160; green, 160; blue, 255 }  ,fill opacity=1 ] (185,119.97) -- (229.97,75) -- (320,75) -- (320,90.03) -- (275.03,135) -- (185,135) -- cycle ; \draw   (320,75) -- (275.03,119.97) -- (185,119.97) ; \draw   (275.03,119.97) -- (275.03,135) ;
\draw  [fill={rgb, 255:red, 200; green, 200; blue, 255 }  ,fill opacity=1 ] (185,104.97) -- (229.97,60) -- (320,60) -- (320,75.03) -- (275.03,120) -- (185,120) -- cycle ; \draw   (320,60) -- (275.03,104.97) -- (185,104.97) ; \draw   (275.03,104.97) -- (275.03,120) ;
\draw  [fill={rgb, 255:red, 240; green, 240; blue, 255 }  ,fill opacity=1 ] (185,89.95) -- (229.97,44.97) -- (320,44.97) -- (320,60) -- (275.03,104.97) -- (185,104.97) -- cycle ; \draw   (320,44.97) -- (275.03,89.95) -- (185,89.95) ; \draw   (275.03,89.95) -- (275.03,104.97) ;

\draw (31.5,191) node  [align=left] {$\displaystyle S_{1}( 1)$};
\draw (131.5,190) node  [align=left] {$\displaystyle S_{1}( 6)$};
\draw (250.5,70) node  [align=left] {$\displaystyle S_{2}( 1)$};
\draw (230.5,198) node  [align=left] {$\displaystyle S_{2}( 6)$};
\draw (395.5,130) node [scale=0.9] [align=left] {$\displaystyle S_{3}( 1)$};
\draw (500.5,25) node  [align=left] {$\displaystyle S_{3}( 6)$};

\end{tikzpicture}
        \caption{The slices of tensor $S$.}
        \label{fig:slices}
    \end{subfigure} \hfill
    \begin{subfigure}[t]{0.32\textwidth}
    \centering
    \subfile{images/T_Shape_fig}
    \caption{The non-zero entries of $S$.}
    \label{fig:T_shape}
    \end{subfigure}
   \caption{\small The shape and slices of order $3$ tensors with dimensions $(6 \times 6 \times 6)$.
   \label{fig:tensors}}
\end{figure*}

\begin{proof}[Proof sketch.]
The proof is by a construction inspired by those proposed by \citet{kurokawa2016can,kurokawa2018fair}, but it includes a few intricate modifications.\footnote{The details of the construction, the necessary lemmas, and proofs in this section are relegated to the appendix.}
We illustrate our construction by separating agents into two groups who value items according to perturbations of a common valuation matrix. Any optimal-MMS allocation must allocate items according to the rows or columns the $n \times n$ valuation matrix. Half of the agents have $\text{MMS}_{i} = 1$ by selecting the rows of the matrix; these agents have value $1-\epsilon$ for each column of the matrix except the last. Similarly, the other half of the agents form an MMS partition by selecting the columns of the matrix. Any optimal-MMS allocation must give each agent at least $1-\epsilon$ value but this can only be achieved if the allocation corresponds to either the rows or the columns of the matrix. Thus half of the agents (except one lucky agent who receives the last row/column) are destined to receive only $1-\epsilon < \text{MMS}_{i}$ value.

We extend this counterexample to a constant number of satisfied agents by dividing agents into $\floor{\frac{n}{2}}$ groups. We guarantee that only one group can be fully satisfied at a time using an intricate generalization of the valuation matrix to higher dimensions. The tensors in our construction are partitioned according to $\floor{\frac{n}{2}}$ sets of sheets (See Fig.~\ref{fig:slices}). 
Each group of two agents selects a different set of slices to compute $\text{MMS}_{i} = 1$. For all other slices (except the last slice in each set: $S_{j}(n)$), these agents will have $1 - \epsilon$ value.
Fig.~\ref{fig:T_shape} depicts the non-zero entries of the valuation tensor. The total number of non-zero goods is proportional to the number of agents multiplied by the number of groups, and thus, is quadratic in the number of agents.
\end{proof}


\Cref{thm:optimal_mms_failure} illustrates that if the goal is to reach an optimal-MMS threshold, the fraction of agents, $\alpha$, who receive their MMS guarantee, $\beta = 1$, goes to zero as the number of agents increases. Thus, we ask for what values of $\alpha$, $\MMS{\alpha}{1}$ is guaranteed to exist?

\section{Computing $\MMS{\alpha}{\beta}$ for $n-1$ Agents} \label{sec:approx}
Our main goal in this section is devising solutions for computing $\MMS{\alpha}{\beta}$ by removing a single agent.

\subsection{A $\MMS{\frac{2}{3}}{1}$ Algorithm for Three Agents}
It is worth noting that although MMS always exists for $n=2$ and can be achieved through the cut-and-choose protocol \citep{Bouveret2016}, computing such an allocation remains hard~\citep{amanatidis2017approximation}.
Nonetheless, for two agents an allocation that guarantees $\text{MMS}^{3}$ to each agent can be computed in polynomial time through an EF1 allocation~\citep{aigner2019envy}. We use this result to obtain the following theorem.

\begin{proposition} \label{prop:computing_n=3}
For three agents, $\MMS{\frac{2}{3}}{1}$ always exists and can be computed in polynomial time.
\end{proposition}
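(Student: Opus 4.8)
The plan is to reduce the three-agent problem to the two-agent guarantee recalled immediately above. Fix an arbitrary instance $I = \ins{N, M, V}$ with $N = \{1,2,3\}$. Since $\floor{\frac{2}{3}\cdot 3} = 2$, proving that $\MMS{\frac{2}{3}}{1}$ exists and is computable in polynomial time amounts to exhibiting, for \emph{every} pair of agents, a polynomial-time-computable allocation in which both members of the pair receive at least their $\text{MMS}_i^{3}(M)$ value; by relabeling the agents it suffices to treat the pair $\{1,2\}$.

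First I would invoke the two-agent procedure of \citet{aigner2019envy} on the sub-instance consisting of agents $1$ and $2$ and the entire good set $M$. This produces in polynomial time an EF1 partition $(B_1,B_2)$ of $M$ with $v_1(B_1)\ge \text{MMS}_1^{3}(M)$ and $v_2(B_2)\ge \text{MMS}_2^{3}(M)$. The one thing to verify at this step is that the quantity guaranteed there, namely the $3$-maximin share, is exactly the target $\text{MMS}_i = \text{MMS}_i^{n}(M)$ of the present setting, which holds because $n=3$ (and because $\text{MMS}_i^{3}(M)$ depends only on agent $i$'s valuation and on $M$, not on the other agents). I would then output the allocation $A=(B_1,B_2,\emptyset)$, i.e.\ give the empty leftover bundle to agent $3$. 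Since an allocation is simply an $n$-partition of $M$, an empty bundle is legitimate, and because agent $3$ need not be among the $\floor{\alpha n}=2$ satisfied agents, the two inequalities above already certify that $A$ guarantees $\MMS{\frac{2}{3}}{1}$ for $\{1,2\}$.

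Running this construction for each of the three pairs (by symmetry/relabeling) establishes the stronger, "for every subset" form of existence, and the whole procedure inherits polynomial running time from the two-agent algorithm. I do not expect any genuinely hard step: the only points needing care are (i) that discarding the third agent with an empty bundle is permissible, and (ii) that the bound from the cited two-agent result is stated against $\text{MMS}^{3}$, which coincides with $\text{MMS}_i$ here. If a self-contained argument were desired, step one could be replaced by a direct proof that an EF1 split of $M$ between two agents gives each agent at least her $\text{MMS}^{3}$ value; the mildly delicate case there is a single dominant good, handled by first passing to an ordered instance via \cref{lem:order} and then analyzing a round-robin / bag-filling allocation.
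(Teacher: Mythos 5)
Your argument is exactly the paper's: run the two-agent EF1 procedure of \citet{aigner2019envy} on an arbitrary pair to give each member at least $\text{MMS}^{3}_{i}$, leave the third agent with the remainder, and note that $\text{MMS}^{3}_{i}$ is precisely the target since $n=3$. The extra care you take about the empty third bundle and about quantifying over all pairs is welcome but does not change the route; the proposal is correct and matches the paper's proof.
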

\begin{proof}
By the definition of $\MMS{\alpha}{\beta}$, we can select any arbitrary subset of agents $N'\subset N$ such that $|N'| = 2$. Then, run an EF1 algorithm on $N'$, which outputs an allocation $A$. 
Both agents in $N'$ are guaranteed to receive their $\text{MMS}^{3}$, i.e., for each $i\in N'$ we have $v_{i}(A_{i}) \geq \text{MMS}^{3}_{i}$, implying that $\frac{2}{3}$ of agents receive their MMS.
\end{proof}

By the construction proposed by \citet{kurokawa2018fair}, an MMS allocation does not always exist for three agents; thus, $\MMS{\frac{2}{3}}{1}$ is a tight bound for $n = 3$.

\subsection{A General Algorithm for $n \geq 4$}
To extend the analysis of approximate MMS for $n-1$ agents, we first provide an important lemma that enables us to focus on normalized instances in the remainder of this paper.
This lemma states that we can employ valid reductions  repeatedly (see \Cref{sec:preliminaries}) for computing $\MMS{\alpha}{\beta}$ allocations.

\begin{restatable}{lemma}{lemnormalize}\label{lem:normalize}
Given an instance $I = \ins{N, M, V}$, we can compute a normalized instance $I' = \ins{N', M', V'}$ in polynomial time such that any $\MMS{\alpha}{\beta}$ allocation on $I'$ implies $\MMS{\alpha}{\beta}$ on $I$.
\end{restatable}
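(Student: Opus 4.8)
The plan is to chain together the three structural lemmas already available (\Cref{lem:order}, \Cref{lem:scale_invariance}, and \Cref{lem:n_goods_reduction}) and argue that each transformation preserves, in the reverse direction, the existence of an $\MMS{\alpha}{\beta}$ allocation. Concretely, starting from an arbitrary instance $I = \ins{N,M,V}$, I would produce the normalized instance in three stages: (i) apply \Cref{lem:order} to obtain an ordered instance $I^{(1)}$; (ii) apply the scaling of \Cref{lem:scale_invariance} with $c = n/v_i(M)$ per agent so that $v_i(M) = n$ for every $i$, yielding $I^{(2)}$; (iii) repeatedly apply \Cref{lem:n_goods_reduction} to eliminate any agent $i$ (together with a block of goods $A_i$) for which $v_i(g_1) \geq 1$ or $v_i(\{g_n,g_{n+1}\}) \geq 1$, each such step a valid reduction, until no such agent remains; call the result $I' = \ins{N',M',V'}$. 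Each stage is polynomial time, and the number of reduction steps in (iii) is at most $n$, so the whole procedure runs in polynomial time.

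The core of the argument is the "lifting" direction: given an $\MMS{\alpha}{\beta}$ allocation $A'$ on $I'$, reconstruct an $\MMS{\alpha}{\beta}$ allocation $A$ on $I$. I would handle the stages in reverse order. For stage (iii): if agent $i$ with bundle $A_i$ was removed in a valid reduction, then by condition (i) of valid reduction $v_i(A_i) \geq \text{MMS}_i^{n}$, so agent $i$ is satisfied at ratio $\beta = 1 \geq \beta$ for any $\beta \in (0,1]$; and by condition (ii), $\text{MMS}_j^{n-1}(M \setminus A_i) \geq \text{MMS}_j^{n}(M)$ for every remaining agent $j$, so any bundle meeting the $\beta$-fraction threshold in the reduced instance still meets it in the larger instance. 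Hence if $A'$ satisfies some subset $N'' \subseteq N' $ with $|N''| \geq \floor{\alpha|N'|}$, then putting the removed agents back (each satisfied) gives a subset of size $\geq |N''| + (\text{number removed}) \geq \floor{\alpha n}$ satisfied in the instance before that reduction. One must be slightly careful that the count works out: since $\floor{\alpha(n-1)} + 1 \geq \floor{\alpha n}$, re-adding one satisfied agent at a time preserves the $\MMS{\alpha}{\beta}$ guarantee, and this is exactly what is needed since we also want the statement to hold for arbitrary pre-selected subsets. For stage (ii): scale invariance (\Cref{lem:scale_invariance}) says $\text{MMS}_i'^{n} = c\,\text{MMS}_i^{n}$ and additivity gives $v_i'(A_i) = c\,v_i(A_i)$, so the ratio $v_i(A_i)/\text{MMS}_i^{n}$ is unchanged; the same allocation works verbatim. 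For stage (i): \Cref{lem:order} directly gives a polynomial-time map from $A'$ on the ordered instance to $A$ on $I$ with $v_i(A_i) \geq v_i'(A_i') \geq \beta\,\text{MMS}_i'^{n} = \beta\,\text{MMS}_i^{n}$, using that the ordering construction leaves $\text{MMS}_i^{n}$ unchanged (each agent's multiset of values is the same).

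The main subtlety — and the step I expect to require the most care — is the bookkeeping around the agent counts and the "for every subset" strength of the $\MMS{\alpha}{\beta}$ existence notion. Because \Cref{lem:normalize} is phrased as "any $\MMS{\alpha}{\beta}$ allocation on $I'$ implies $\MMS{\alpha}{\beta}$ on $I$" (an allocation-level statement, not an existence statement), I would keep the argument at the level of individual allocations and only track which agents are satisfied, showing that satisfaction is inherited upward through each of the three inverse transformations and that the removed agents are always among the satisfied ones. A second point to verify is that \Cref{lem:n_goods_reduction} is stated for instances with $\text{MMS}_i^{n} \leq 1$, which is precisely what stage (ii) arranges (since $\text{MMS}_i^{n} \leq v_i(M)/n = 1$), so the stages must be composed in this order; I would state this dependency explicitly. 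Finally I would note that stages (i) and (ii) commute with the "ordered" requirement and that stage (iii) preserves orderedness (removing $g_1$ or $\{g_n,g_{n+1}\}$ from an ordered instance leaves it ordered), so the output genuinely satisfies all three defining properties of a normalized instance.
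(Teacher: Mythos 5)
Your overall route---order the instance, scale it, apply valid reductions, then lift the allocation back through each inverse transformation using the counting inequality $\floor{\alpha(n-1)}+1 \geq \floor{\alpha n}$---is the same as the paper's, and your lifting argument (removed agents are satisfied because $v_i(A_i) \geq \text{MMS}_i^{n}(M)$; the remaining agents' MMS only weakly increases under valid reductions; scale invariance and \Cref{lem:order} handle the scaling and ordering stages) matches the paper's proof.

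There is, however, one genuine gap in your stage (iii): you scale \emph{once} and then reduce repeatedly, whereas the paper rescales so that $v_i(\hat{M}) = \hat{n}$ \emph{inside} the reduction loop, before each application of \Cref{lem:n_goods_reduction}. This matters for two reasons. First, \Cref{lem:n_goods_reduction} is stated under the hypothesis $\text{MMS}_i^{n} \leq 1$ for the \emph{current} instance; after one reduction a surviving agent $j$ may have $\text{MMS}_j^{n-1}(M \setminus A_i) > 1$ (if $v_j(A_i)$ is small, the bound $\text{MMS}_j^{n-1} \leq \frac{n - v_j(A_i)}{n-1}$ can exceed $1$), so your claim that each subsequent step ``is a valid reduction'' by that lemma is not justified as written---in particular condition (i) of a valid reduction, $v_i(A_i) \geq \text{MMS}_i$ with respect to the instance being reduced, can fail. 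Second, even if one patches the satisfaction argument by comparing everything to the original $\text{MMS}_i^{n}(M) \leq 1$, the instance your procedure outputs does not satisfy the clause ``scaled so that $v_i(M) = n$'' in the definition of a normalized instance, and that property is used downstream (e.g., in \Cref{lem:efx_generaln} and the bag-filling arguments). A single rescale at the end does not repair this, because scaling up can push $v_i(g_1)$ or $v_i(\{g_{n}, g_{n+1}\})$ back above $1$ and re-enable reductions; the clean fix is the paper's interleaved loop (rescale, reduce, relabel, repeat), which terminates in at most $n$ iterations since each pass removes an agent. Everything else in your write-up---the bookkeeping of satisfied agents, keeping the argument at the level of a single allocation, and the observation that reductions preserve orderedness---is correct and aligns with the paper.
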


We now focus attention on designing a polynomial-time algorithm that guarantees $\beta$ MMS for $n-1$ agents. The algorithm relies on removing an arbitrary agent and computing an EFX allocation for the remaining $n-1$ agents. By \cref{lem:order} an ordered instance can be generated (and easily converted back) from an unordered instance. A simple variant to the envy-graph algorithm satisfies EFX on ordered instances \citep{barman2017approximation}.
We show that applying this procedure to normalized (and thus ordered) instances satisfies $\MMS{\frac{n-1}{n}}{\frac{1}{2}(\frac{n+2}{n-1})}$. 
Since $\beta$ depends on $n$, we cannot trivially extend this result to any (not normalized) instances. Nonetheless, we prove that together with \cref{lem:normalize} and \cref{lem:order} we can compute  $\MMS{\frac{n-1}{n}}{\frac{1}{2}(\frac{n+2}{n-1})}$ for any instance in polynomial-time. 
The full proof of the theorem, along with necessary discussions, is deferred to the appendix.

\begin{restatable}{theorem}{thmGeneralN}\label{thm:GeneralN}
Given any instance of $n$ agents, $\MMS{\frac{n-1}{n}}{\frac{1}{2}(\frac{n+2}{n-1})}$ can be computed in polynomial time.
\end{restatable}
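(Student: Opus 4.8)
The plan is to reduce the general case to normalized instances via \cref{lem:normalize}, so that we may assume $I = \ins{N, M, V}$ is ordered, scaled so that $v_i(M) = n$ (hence $\text{MMS}_i^n \le 1$), and reduced so that $v_i(g_1) < 1$ and $v_i(\{g_n, g_{n+1}\}) < 1$ for every $i$. The crux is then: pick an arbitrary agent to drop, say agent $n$, and compute an EFX allocation $(A_1, \dots, A_{n-1})$ of $M$ among the remaining $n-1$ agents using the envy-graph variant of \citet{barman2017approximation} that is guaranteed to produce EFX allocations on ordered instances. We must show that every agent $i \in \{1, \dots, n-1\}$ receives $v_i(A_i) \ge \frac{1}{2}\cdot\frac{n+2}{n-1}$, which (since $\text{MMS}_i^n \le 1$ after normalization) gives the claimed $\beta$-approximation for $n-1 = \frac{n-1}{n}\cdot n$ agents. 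Finally, by \cref{lem:order} the ordered-instance allocation is converted back to one on the original (pre-ordering) instance without loss of value, and \cref{lem:normalize} carries the guarantee all the way back to $I$; since each of these steps runs in polynomial time and the EFX computation does too, the whole procedure is polynomial.

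The heart of the argument is the value bound for an EFX allocation on a normalized instance with $n-1$ agents sharing $M$. The idea is as follows. Fix agent $i$. Since the allocation is EFX, for every other agent $j$ and every good $g \in A_j$ we have $v_i(A_i) \ge v_i(A_j \setminus \{g\})$, i.e. $v_i(A_i) \ge v_i(A_j) - \min_{g \in A_j} v_i(g)$. Because the instance is ordered and normalized, the largest good has value $v_i(g_1) < 1$, and more usefully every good except possibly the very first few is small; I would argue that the good removed in the EFX comparison can always be taken to have value at most the average value of a bundle, or at most some controlled quantity, so that $v_i(A_j) \le v_i(A_i) + c$ for a small $c$. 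Summing the inequality $v_i(A_j) \le v_i(A_i) + (\text{small term})$ over all $j \ne i$ and adding $v_i(A_i)$ gives $n \ge v_i(M) = \sum_{j} v_i(A_j) \le (n-1)v_i(A_i) + (n-1)v_i(A_i) + \dots$, rearranged to a lower bound on $v_i(A_i)$. The precise bookkeeping — which good is removed in each comparison, how the reduction conditions $v_i(g_1) < 1$ and $v_i(\{g_n, g_{n+1}\}) < 1$ enter, and whether one treats the agent with the smallest bundle separately — is what produces exactly $\frac{1}{2}\cdot\frac{n+2}{n-1}$ rather than a weaker constant; I expect that one isolates the agent $j^\star$ whose bundle $A_{j^\star}$ is least valued by $i$ (this bundle has value at most $v_i(M)/(n-1)$, and the good removed from it is at most $v_i(M)/((n-1)|A_{j^\star}|)$), and combines this with the ordered structure to bound the removed goods in the other $n-2$ comparisons.

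The main obstacle I anticipate is precisely this last calculation: controlling $\min_{g \in A_j} v_i(g)$ uniformly across all bundles $A_j$. In the worst case a bundle could consist of a single large good, in which case EFX gives $v_i(A_i) \ge 0$, which is useless; so the argument must exploit that on an \emph{ordered} instance, if $A_j$ is a singleton $\{g_k\}$ with $v_i(g_k)$ large, then $k$ is small, there are few such bundles, and the many remaining goods $g_{k+1}, \dots, g_m$ are spread among the other agents in a way that forces $v_i(A_i)$ up anyway — essentially a counting/pigeonhole argument on how the top goods can be distributed among $n-1$ agents. A clean way to handle this is to invoke the known fact (implicit in \citet{barman2017approximation, garg2018approximating}) that an EFX allocation on an ordered normalized instance already guarantees each agent a $\tfrac{1}{2}$-type fraction, and then sharpen the constant using the normalization bounds $v_i(g_1) < 1$, $v_i(\{g_n,g_{n+1}\}) < 1$; but I would first try to derive the bound directly from the EFX inequalities plus $v_i(M) = n$ as sketched. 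Everything else — the reductions to normalized/ordered form, the conversion back, and the polynomial-time claim — follows routinely from \cref{lem:order}, \cref{lem:scale_invariance}, \cref{lem:n_goods_reduction}, and \cref{lem:normalize}.
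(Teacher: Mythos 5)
Your high-level plan is exactly the paper's: normalize via \cref{lem:normalize}, drop one arbitrary agent, run the ordered-instance envy-graph algorithm of \citet{barman2017approximation} to get an EFX allocation among the remaining $n-1$ agents, sum the EFX inequalities against $v_i(M)=n$, and undo the ordering/normalization at the end. However, the entire technical content of the theorem lives in the step you explicitly leave open — bounding the removed goods $\min_{g\in A_j} v_i(g)$ — and the specific ideas you offer for closing it do not work. Isolating the bundle least valued by $i$ and bounding its removed good by $v_i(M)/((n-1)|A_{j^\star}|)$ gives nothing for a singleton bundle, and an "average value of a bundle" bound on the removed good is far too weak (it would only yield a constant around $\nicefrac{1}{2}$, not $\frac{1}{2}\cdot\frac{n+2}{n-1}$).

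The paper's resolution (its Lemma on EFX for normalized instances) has three ingredients you are missing. First, it splits the $n-1$ agents by whether their bundle is a singleton: for the $s$ singleton bundles it does not use EFX at all, but instead bounds their \emph{total} value by $s$, since normalization gives $v_i(g_1)<1$; these bundles are simply subtracted from $v_i(M)=n$. Second, for the remaining multi-good bundles it uses the structure of the envy-graph run on an ordered instance: the first $n-1$ goods land in distinct bundles, so every bundle is non-empty and the least-valued good removed from a multi-good bundle is a \emph{distinct} good of rank at least $n$. Third, this is exactly where the reduction condition $v_i(\{g_n,g_{n+1}\})<1$ enters: the two largest removed goods sum to less than $1$ and every other removed good is worth less than $\tfrac12$. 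Summing $((n-1)-s)\,v_i(A_i)\ge (n-s)-1-\tfrac12(n-s-4)$ and simplifying gives $v_i(A_i)\ge\frac12\cdot\frac{n-s+2}{n-s-1}\ge\frac12\cdot\frac{n+2}{n-1}$. Without the singleton/multi-good split and the distinctness-plus-rank argument for the removed goods, the calculation you sketch cannot produce the stated constant, so as written the proposal has a genuine gap at its central step.
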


By \cref{prop:MMSAlpha}, we can add a dummy agent when $n=3$ and select the original set of agents to obtain the following.

\begin{corollary} \label{cor:MMS4}
For $n = 3$ agents, computing an allocation satisfying $\text{MMS}^{4}_{i}, \forall i\in N$ can be done in polynomial time.\footnote{Independently, \citet{aigner2019envy} provided an algorithm based on envy-free matchings for computing $\text{MMS}^{n+1}$ for 3 agents.}
\end{corollary}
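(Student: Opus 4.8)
The plan is to apply Proposition~\ref{prop:MMSAlpha} together with Theorem~\ref{thm:GeneralN} in the special case $n=3$. First I would observe that when $n = 3$, the bound $\frac{1}{2}\left(\frac{n+2}{n-1}\right)$ from Theorem~\ref{thm:GeneralN} evaluates to $\frac{1}{2}\cdot\frac{5}{2} = \frac{5}{4} > 1$, so Theorem~\ref{thm:GeneralN} actually yields a genuine $\MMS{\frac{2}{3}}{1}$ allocation computable in polynomial time for three agents (this recovers Proposition~\ref{prop:computing_n=3} constructively, and more importantly gives the $\beta = 1$ guarantee needed here). The key point is that $\frac{n-1}{n} = \frac{2}{3}$ when $n=3$, so $\MMS{\frac{2}{3}}{1}$ is precisely $\MMS{\frac{n-1}{n}}{1}$ in this case.

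Next I would invoke the construction in the proof of Proposition~\ref{prop:MMSAlpha}: given an instance $I = \ins{N, M, V}$ with $|N| = 3$, form $I'$ by adding $\ceil{\frac{n}{\alpha}} - n = \ceil{\frac{3}{2/3}} - 3 = \ceil{\frac{9}{2}} - 3 = 5 - 3 = 2$ dummy agents, so that $|N'| = 5$. Since $\MMS{\frac{2}{3}}{1}$ can be computed in polynomial time on the $5$-agent instance $I'$ (by Theorem~\ref{thm:GeneralN} applied with $n = 5$, where the removed agent is chosen to be one of the dummies — though one must check the bound: for $n=5$, $\frac{1}{2}\cdot\frac{7}{4} = \frac{7}{8} < 1$, so this route does \emph{not} directly give $\beta = 1$).

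Because of that last subtlety, the cleaner route — and the one I would actually carry out — is to stay entirely at $n=3$: run the polynomial-time algorithm of Theorem~\ref{thm:GeneralN} on the original three-agent instance $I$, choosing $N' \subseteq N$ with $|N'| = 2$. Since $\frac{5}{4} \geq 1$, every agent in $N'$ receives at least $\text{MMS}^3_i$. But $\MMS{\frac{2}{3}}{1}$ existence (for \emph{every} $2$-subset) lets us instead apply Proposition~\ref{prop:MMSAlpha} with $\alpha = \frac{2}{3}$: add $5 - 3 = 2$ dummy agents to get a $5$-agent instance, use the fact that $\MMS{\frac{2}{3}}{1}$ holds for every subset of $\floor{\frac{2}{3}\cdot 5} = 3$ agents, and pick that subset to be exactly the three original agents. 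Each original agent $i$ then receives $v_i(A_i) \geq \text{MMS}^5_i \geq \text{MMS}^4_i$, where the last inequality holds because $\text{MMS}^k_i$ is nonincreasing in $k$ (finer partitions of the same goods can only lower the minimum bundle value). This establishes $\text{MMS}^4_i$ for all $i \in N$, and every step — the algorithm of Theorem~\ref{thm:GeneralN}, adding dummies, and the conversion back to $I$ via Lemma~\ref{lem:order} — runs in polynomial time.

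The main obstacle is the bookkeeping around which $n$ the bound $\frac{1}{2}\left(\frac{n+2}{n-1}\right)$ is applied at and ensuring it is $\geq 1$ there: the argument only works if we apply Theorem~\ref{thm:GeneralN} at $n = 3$ (getting $\beta = \frac{5}{4} \geq 1$) and then push up to $k = 4$ bundles via the dummy-agent reduction, rather than applying it at $n = 5$ where $\beta < 1$. One should state explicitly that $\MMS{\frac{2}{3}}{1}$ existence for three agents (which is what Theorem~\ref{thm:GeneralN} delivers at $n=3$) is exactly the hypothesis $\MMS{\alpha}{1}$ needed to feed into Proposition~\ref{prop:MMSAlpha} with $\alpha = \frac{2}{3}$, yielding $\text{MMS}^{\ceil{3/\alpha}} = \text{MMS}^{\ceil{9/2}} = \text{MMS}^5$, hence $\text{MMS}^4$ a fortiori.
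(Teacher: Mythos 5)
Your final step is reversed, and this breaks the proof. Since $\text{MMS}^{k}_{i}$ is nonincreasing in $k$ (more bundles can only lower the minimum), we have $\text{MMS}^{5}_{i} \leq \text{MMS}^{4}_{i}$, not $\geq$ as you wrote. So the guarantee $v_{i}(A_{i}) \geq \text{MMS}^{5}_{i}$ that you obtain from \cref{prop:MMSAlpha} with $\alpha = \frac{2}{3}$ is strictly \emph{weaker} than the claimed $\text{MMS}^{4}_{i}$; it does not follow ``a fortiori.'' The route through $\MMS{\frac{2}{3}}{1}$ for three agents can only ever deliver $\text{MMS}^{\ceil{3/(2/3)}} = \text{MMS}^{5}$, so no amount of bookkeeping rescues this path to $\text{MMS}^{4}$. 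Your earlier ``stay at $n=3$ with $|N'|=2$'' detour also does not help: it gives two of the three agents their $\text{MMS}^{3}$, which says nothing about all three agents receiving $\text{MMS}^{4}$.

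The paper's argument adds exactly \emph{one} dummy agent, not two. On the resulting $4$-agent instance, \cref{thm:GeneralN} gives $\beta = \frac{1}{2}\cdot\frac{4+2}{4-1} = 1$, i.e., a polynomial-time $\MMS{\frac{3}{4}}{1}$ allocation in which $3$ of the $4$ agents receive their $\text{MMS}^{4}$; choosing the removed agent to be the dummy, all three original agents receive $\text{MMS}^{4}_{i}(M)$. You were close to this when you checked the $n=5$ bound and found $\frac{7}{8} < 1$ --- the resolution is that the target $k=4$ corresponds to one dummy and the fraction $\frac{n-1}{n} = \frac{3}{4}$ at $n=4$ (where $\beta$ is exactly $1$), rather than the fraction $\frac{2}{3}$ you carried over from \cref{prop:computing_n=3}.
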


\begin{remark}
Theorem~\ref{thm:GeneralN} immediately illustrates an intriguing interpolation between the approximation ratios of $\alpha$ and $\beta$: for $n < 7$, a better approximation of MMS values ($\beta$) can be achieved in polynomial time by sacrificing only one agent. 
Recall that the best approximation algorithms to date guarantee only $\MMS{1}{\frac{3}{4}}$~\citep{ghodsi2018fair} for general additive valuations. 
Table~\ref{tab:approx} shows the interpolation between $\alpha$ and $\beta$ for $n=4$ to $n=7$.

\begin{table}[t]\small
\centering
\begin{tabular}{c|c}
    $n$ & $\MMS{\alpha}{\beta}$ \\\hline
    4   & $\MMS{3/4}{1}$ \\
    5   & $\MMS{4/5}{7/8}$ \\
    6   & $\MMS{5/6}{4/5}$ \\
    7   & $\MMS{6/7}{3/4}$
\end{tabular}
\caption{\small Approximation bounds of $(\alpha, \beta)$ for various $n < 8$.}
\label{tab:approx}
\end{table}
\end{remark}

\section{The Existence of $\MMS{\frac{2}{3}}{1}$ Allocations}\label{sec:twothirdMMS}
\cref{thm:GeneralN} optimizes the fraction of agents, $\alpha$, and provides an efficient approach in achieving approximate MMS for $n-1$ agents.
Another plausible, and often practical, approach aims at maximizing the fraction of agents $\alpha$ who receive their MMS guarantee ($\beta = 1$).

It turns out that simple modifications to existing approximation algorithms can guarantee $\MMS{\frac{1}{2}}{1}$ allocations (see the appendix for detailed algorithms and proofs).
A key question is whether we can improve this bound for $\alpha$ beyond $\frac{1}{2}$ and show the existence of such allocations. In what follows, we show that $\MMS{\frac{2}{3}}{1}$ exists and discuss an algorithm achieving this bound in polynomial-time when $n<9$.

Our existence proof of $\MMS{\frac{2}{3}}{1}$ relies on combining techniques of bag-filling, strong normalization, and a variant of the lone divider procedure. Before discussing our main result, we briefly describe these techniques.

\paragraph{Bag-filling.}
Given a normalized instance, a good is \textit{high-value} for agent $i$ if $v_{i}(g) \geq \frac{1}{2}$; otherwise it is \textit{low-value}. 
The bag-filling algorithm is a greedy approach for forming bundles in a normalized instance. 
An agent initializes a bag with a high-value good, and then adds low-value goods until the bag is worth at least $1$. She then picks another high-value good and repeats this process.\footnote{Similar approaches have been used by \citet{garg2018approximating,garg2019improved,ghodsi2018fair} to compute $\MMS{1}{\frac{2}{3}}$ and $\MMS{1}{\frac{3}{4}}$ allocations. In their algorithms, the bag is filled until \textit{any} agent values it at least $1$.} 
For each bag, the total value never exceeds $1 + \frac{1}{2} = \frac{3}{2}$ since the last added good is low-value.

\begin{remark}\label{rem:bag-filling-example}
Notice that bag-filling alone cannot guarantee $\MMS{\frac{2}{3}}{1}$ because it may `run out' of low-value goods before filling $\Targ$ bundles if the remaining value consists entirely of high-value goods. Consider an instance with $n = 9$ agents with identical valuations over goods as follows: $5$ goods of value $0.99$, $5$ goods of value $0.01$, $1$ good of value $0.95$, $1$ good of value $0.05$, $3$ goods of value $0.55$, and $3$ goods of value $0.45$.
Here, $\text{MMS}_{i}^{n} = 1$ for all agents. The high-value goods have value more than $0.5$. During bag-filling, there will be three bundles of $\{0.99, 0.45\}$, one bundle of $\{0.99, 0.05\}$, and one bundle of $\{0.99, 0.01\}$. Since only $5$ bundles were filled and $\Targ = 6$, we need to fill one more bundle. Adding all remaining low-value goods to the next high-value good $0.95$ yields a total value $0.99$. The remaining value is tied up with the high-value goods with value $0.55$. \end{remark}


\cref{lem:normalize} enabled us to focus on normalized instances with the assumption that $v_{i}(M) = n$ for all agents. Our technique in this section requires a slightly stronger assumption.

\paragraph{Strong Normalization.}
We say an instance is strongly normalized if it is ordered, $v_{i}(M) = n$, and $\text{MMS}_{i}^{n} = 1$.
Observe that the definition of strong normalization implies that each bundle of an MMS partition is valued \textit{exactly} $1$. Furthermore, this implies that $v_{i}(g_{1}) \leq 1$ and that $v_{i}(g_{n+1}) \leq \frac{1}{2}$.

\begin{restatable}{lemma}{lemstrongnormalization}\label{lem:strong_normalization}
For any additive instance $I = \ins{N, M, V}$, there exists another strongly normalized instance $I' = \ins{N, M, V'}$ such that $I'$ is ordered and for all $i \in N$, $\text{MMS}_{i}^{n} = 1$ and $v_{i}(M) = n$. Furthermore, an $\MMS{\alpha}{\beta}$ allocation on $I'$ is also an $\MMS{\alpha}{\beta}$ allocation on $I$.
\end{restatable}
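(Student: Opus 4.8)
The plan is to reduce the general instance to a strongly normalized one in two stages: first apply Lemma~\ref{lem:order} to make the instance ordered, then rescale each agent's valuation so that $\text{MMS}_i^n = 1$ while simultaneously forcing $v_i(M) = n$. The delicate point is that these two targets are \emph{not} achieved by a single scalar multiplication in general, so unlike in Lemma~\ref{lem:scale_invariance} we cannot simply set $v_i'(g) = \tfrac{n}{v_i(M)} v_i(g)$; that would give $v_i'(M) = n$ but $\text{MMS}_i^{'n}$ would equal $\tfrac{n}{v_i(M)}\text{MMS}_i^n$, which need not be $1$. Conversely, scaling by $1/\text{MMS}_i^n$ fixes the MMS value at $1$ but leaves $v_i'(M) = \tfrac{v_i(M)}{\text{MMS}_i^n} \ge n$ (with equality iff the instance already had a ``perfectly balanced'' MMS partition). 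So the key trick will be to scale by $1/\text{MMS}_i^n$ and then \emph{add value back into the MMS partition} — or equivalently, to absorb the surplus $v_i(M)/\text{MMS}_i^n - n$ by padding — so that the total becomes exactly $n$ without decreasing the MMS value below $1$ or disturbing the ordering.

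Concretely, I would proceed as follows. Step 1: use Lemma~\ref{lem:order} to pass to an ordered instance; by that lemma any allocation on the ordered instance pulls back to an allocation on the original that is at least as good for every agent, so an $\MMS{\alpha}{\beta}$ allocation is preserved. Step 2: by Lemma~\ref{lem:scale_invariance}, rescale agent $i$'s valuation by $c_i = 1/\text{MMS}_i^n(M)$ (note $\text{MMS}_i^n > 0$ since $m > n$ and we may discard zero-value goods), so that now $\text{MMS}_i^n = 1$ and $v_i(M) \ge n$. Step 3: ``spread'' the excess mass $v_i(M) - n \ge 0$. Fix any MMS partition $(P_1,\dots,P_n)$ of agent $i$; each bundle has value $\ge 1$, and the total excess is $\sum_j (v_i(P_j) - 1) = v_i(M) - n$. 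I want to decrease each bundle's value down toward $1$. Since valuations are additive over individual goods, I instead rescale the \emph{individual good values proportionally within the surplus}: more cleanly, I think the right move is to define $v_i'(g)$ so that each good in $P_j$ is scaled by the factor $1/v_i(P_j) \le 1$, making every bundle of the fixed partition worth exactly $1$ and hence $v_i'(M) = n$. Then $\text{MMS}_i^{'n} \ge \min_j v_i'(P_j) = 1$, and for the reverse inequality I need $\text{MMS}_i^{'n} \le 1$; this follows because scaling down can only decrease MMS and the original (post-Step-2) MMS was $1$ — more carefully, since $v_i'(g) \le v_i(g)$ for every good, $\text{MMS}_i^{'n} \le \text{MMS}_i^n = 1$. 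Ordering may be broken by this per-bundle scaling, so I finish by reapplying Step~1's ordering argument (or argue the scaling factors can be chosen to respect the order).

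Step 4: verify the allocation-preservation claim. Throughout, each transformation either (a) is the ordering operation of Lemma~\ref{lem:order}, which preserves $\MMS{\alpha}{\beta}$ allocations in the stated direction, or (b) only decreases each good's value for each agent, i.e. $v_i'(g) \le v_i(g)$. For type-(b) steps I must check that $\text{MMS}_i^{'n} = 1$ together with $v_i'(A_i) \ge \beta$ implies $v_i(A_i) \ge \beta \,\text{MMS}_i^n(M)$ \emph{in the original instance}; this is where I chain the scalars back: if the cumulative scalar applied to agent $i$ is $c_i \le 1/\text{MMS}_i^n$ on the relevant goods, then $v_i^{\text{orig}}(A_i) \ge \tfrac{1}{c_i} v_i'(A_i) \ge \text{MMS}_i^n \cdot v_i'(A_i) \ge \text{MMS}_i^n \cdot \beta = \beta\,\text{MMS}_i^n$, using $\text{MMS}_i^{'n}=1$. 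The main obstacle is precisely making Step~3 rigorous — the per-bundle rescaling destroys additivity-compatible uniformity and may reorder goods — so I expect the cleanest route is to \emph{not} insist on a single MMS partition but instead to argue abstractly: the map ``scale agent $i$ so that $v_i(M) = n$ and $\text{MMS}_i^n = 1$ simultaneously'' exists because the function $t \mapsto$ (MMS of the instance where we add a dummy good of value $t$, or remove mass $t$) is continuous and monotone, so an intermediate value argument pins down the right amount of padding; then reorder at the end. I would present whichever of the two (explicit per-bundle scaling vs. intermediate-value padding) turns out to keep the ordering intact most transparently.
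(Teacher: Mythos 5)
Your proposal is correct and follows essentially the same route as the paper: scale each agent so that $\text{MMS}_i^n = 1$, fix an MMS partition and reduce the value of goods within each bundle until every bundle is worth exactly $1$ (the paper reduces arbitrarily, you scale proportionally within each bundle — same effect), observe that MMS stays at $1$ by monotonicity, that restoring the removed value only increases bundle values so the $\MMS{\alpha}{\beta}$ guarantee transfers back, and reorder at the end via Lemma~\ref{lem:order}. The intermediate-value alternative you sketch at the end is unnecessary; your explicit per-bundle reduction is already rigorous and is exactly what the paper does.
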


\cref{lem:strong_normalization} shows that any instance $I$ can be modified via a non-polynomial-time transformation into a new instance $I'$ 
that is strongly normalized such that an $\MMS{\alpha}{\beta}$ allocation on $I'$ implies an $\MMS{\alpha}{\beta}$ allocation on $I$.
Hence, we can focus only on strongly normalized instances.

\paragraph{Algorithm description.}
At its core, our algorithm implements the \textit{lone divider procedure} for indivisible goods on $n'= \Targ$ of the agents. The lone divider procedure is an extension of the proportional cake-cutting algorithm \citep{robertson1998cake} that leverages non-empty envy-free matchings in a bipartite graph. 
It proceeds as follows: given $n'$ agents in $N'$, first, an arbitrary agent divides the goods into $n'$ acceptable bundles $(A_{1}, \ldots, A_{n'})$.
Second, we construct a bipartite graph between agents in $N'$ and the bundles $(A_{1}, \ldots, A_{n'})$ wherein each edge connects an agent $i\in N'$ to a bundle $A_k$ such that $v_{i}(A_{k}) \geq 1$.
Notice that the divider will be adjacent to all bundles.
Next, we compute a non-empty envy-free matching, where no unmatched agent is adjacent to a bundle that was assigned to some other agent.
A non-empty envy-free matching always exists if $|\mathcal{N}(N')| \geq |N'|$, where $\mathcal{N}(N')$ denotes the set of bundles adjacent to $N'$ \citep{aigner2019envy}.
All matched agents receive their bundles and leave the procedure. 
After the removal of matched agents and goods, we repeat the same procedure over the remaining goods and agents until there are no remaining agents.
The formal algorithm, its detailed description, and technical proofs are provided in the appendix.

\begin{figure*}
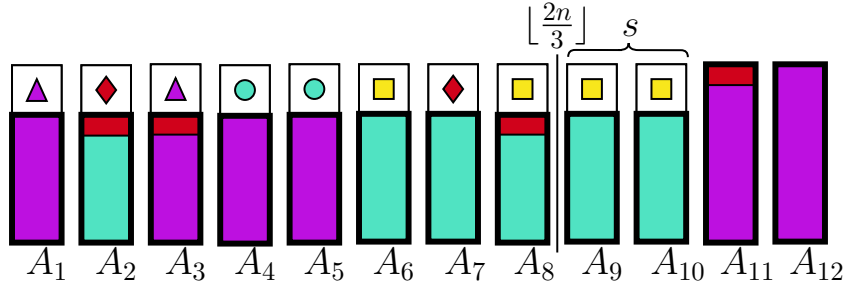

    \centering
    \subfile{images/lone_divider_partition}
    \caption{A sample MMS partition for the divider. The tiny shapes (triangle, circles, squares, and diamonds) represent high-value goods and are ordered from $g_1$ to $g_{10}$. Note that each bundle contains no more than one high-value good.
    All pieces indicated in red (including the red diamonds) indicate the goods that were allocated in previous iterations. There are two such high-value goods, so $n' = \Targ - 2 = 6$. The first step is pairing the lowest $2s$ remaining high-value goods into $s$ bundles: $\{g_{6}, g_{10}\}$ and $\{g_{8}, g_{9}\}$. 
    The second phase is restricted bag-filling that only uses low-value goods from the the remainder sets corresponding to already allocated high-value goods (within teal boxes). The solid borders show \textit{all} remainder sets. Next, the remaining goods, shown in purple are used for simple bag-filling to fill the remaining $2$ bundles.}
    \label{fig:lone_divider_example}
\end{figure*}

The primary challenge in proving the existence of $\MMS{\frac{2}{3}}{1}$ allocations lies in showing that in each iteration, given $n'$ remaining agents, the divider agent can form $n'$ bundles valued at least 1. 
We show that under the strong normalization assumption, the divider is always able to form $n'$ bundles either by pairing high-value goods or through \textit{restricted} bag-filling.

\begin{restatable}{theorem}{thmtwothirdsexistence}
\label{thm:two_thirds_existence}
A $\MMS{\frac{2}{3}}{1}$ allocation is guaranteed to exist for any subset of $\Targ$ agents.
\end{restatable}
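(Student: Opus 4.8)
The plan is to reduce, via \cref{lem:strong_normalization}, to a strongly normalized instance, so that for every agent $v_i(M)=n$, $\text{MMS}_i^n=1$, and every bundle in an MMS partition of the divider is valued exactly $1$. Fix any target subset of size $n'=\Targ$; I will run the lone divider procedure described above, and it suffices to show that at each iteration, if $n'$ agents remain, the current divider can always produce $n'$ bundles each of value at least $1$ to her. Given this, the envy-free-matching argument (non-empty envy-free matchings always exist since the divider is adjacent to all $n'$ bundles, hence $|\mathcal N(N')|\ge n'$) lets us remove at least one agent together with an acceptable bundle each iteration, and the process terminates having satisfied all $n'$ selected agents with their full MMS value.

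The heart of the argument is the per-iteration claim: with $n'\le\Targ$ bundles to build from the goods remaining after earlier iterations removed some high-value goods (and the low-value goods that had been ``attached'' to them), the divider can still form $n'$ bundles of value $\ge 1$. The key structural facts to exploit are: (i) in a strongly normalized instance each MMS bundle has value exactly $1$, so there are at most $n$ high-value goods (value $\ge \tfrac12$) and each MMS bundle contains at most one of them; (ii) the ``remainder'' low-value mass not sitting with a high-value good is large enough because the total value is $n$ while only $\le\Targ$ bundles need to be filled. The construction proceeds in three phases, mirroring \cref{fig:lone_divider_example}: first, if $s$ high-value goods were removed in earlier rounds, pair up the $2s$ lowest-indexed remaining high-value goods into $s$ bundles (each such pair has value $\ge 1$ since these are the largest goods and two of them exceed a full MMS bundle — here I would invoke a counting/pigeonhole lemma of the flavor of \cref{lem:n_goods_reduction} relating $v_i(\{g_n,g_{n+1}\})$ to $1$, suitably generalized); second, perform \emph{restricted} bag-filling using only the low-value goods drawn from the remainder sets of already-allocated high-value goods, which refills enough bundles to compensate for the ones ``consumed'' by pairing; third, run ordinary bag-filling on everything left to complete the remaining bundles. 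A value-accounting argument shows the leftover mass is always at least the number of bundles still to be filled, so bag-filling never runs out — this is exactly the failure mode flagged in \cref{rem:bag-filling-example}, which the restricted phase is designed to avoid.

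The main obstacle I anticipate is precisely the bookkeeping that guarantees restricted bag-filling plus pairing never leaves us short: one must argue that the number of ``missing'' low-value goods (those removed alongside previously allocated high-value goods) is exactly offset by pairing the surplus high-value goods two-to-a-bundle, so that after phases one and two the number of completed bundles plus the value still available suffices for phase three. This requires a careful invariant tracking, across iterations, of (number of remaining high-value goods) versus (number of bundles still needed) versus (remaining low-value mass), and showing the invariant is preserved by each matching step. A secondary subtlety is that the matching step may remove more than one agent (and more than one bundle) at once, so the invariant must be robust to removing an arbitrary envy-free-matched set, not just a single agent. Once the invariant is set up correctly, each phase's correctness is a routine additive-valuation computation using $v_i(g_1)\le 1$ and $v_i(g_{n+1})\le\tfrac12$.
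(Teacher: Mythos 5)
Your plan follows the paper's proof structure exactly: strong normalization via \cref{lem:strong_normalization}, the lone divider procedure with envy-free matchings, and the three phases of pairing, restricted bag-filling, and plain bag-filling. But the plan stops precisely where the proof begins. The two claims you flag as ``the main obstacle I anticipate'' \emph{are} the theorem; everything else is already stated in the body of the paper. Concretely, you do not prove (a) that restricted bag-filling completes at least $\min(n'-s,\,s)$ bundles, nor (b) that after pairing and restricted bag-filling enough value remains for plain bag-filling to finish the last $n'-2s$ bundles. For (a), the paper's argument is a term-by-term domination: the restricted pool (the remainder sets of the $k$ previously allocated high-value goods and of the $2s$ paired ones, with previously consumed goods deleted) has total value strictly greater than $\sum_{d=1}^{2s} r_d^{P}$, while if restricted bag-filling stalls after $t$ bags the pool is bounded above by $\sum_{a=1}^{t+1} r_a^{H} + \sum_{a=1}^{t} r_a^{L}$; each of these $2t+1$ terms is matched to a distinct weakly larger term among the $2s$ terms $r_d^{P}$, forcing $2t+1 > 2s$ and hence $t \ge s$. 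For (b), one bounds the value consumed so far by $3s + k/2$ plus the $k$ unacceptable bundles, leaving more than $\tfrac{3}{2}(n'-2s)$, which suffices because each plain bag closes below $\tfrac{3}{2}$. Neither computation is routine bookkeeping that can be waved through; without them the claim that bag-filling ``never runs out'' is exactly the failure mode of \cref{rem:bag-filling-example}.

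Two further points would derail the argument as written. First, you pair the ``$2s$ \emph{lowest-indexed} remaining high-value goods,'' i.e., the highest-valued ones. The paper pairs the $s$ \emph{lowest-valued} goods of $H^{+}$ with the $s$ goods of $H^{-}$, and this choice is load-bearing: the domination argument in (a) needs every remainder $r_a^{H}$ of a good kept for restricted bag-filling to be at most every remainder $r_d^{P}$ of a paired $H^{+}$ good, which holds only if the paired goods are the smallest (hence have the largest remainders). Pairing the largest goods reverses the inequality and the counting fails. (Also, no pigeonhole generalization of \cref{lem:n_goods_reduction} is needed for the pairs: two goods each exceeding $\tfrac12$ trivially sum to more than $1$.) Second, the ``secondary subtlety'' you raise about the matching removing several agents at once is resolved in the paper by an explicit invariant you would need to state and maintain: every bundle ever offered contains exactly one good from $\{g_1,\ldots,g_{\Targ}\}$, so after $k$ bundles have been allocated (in however many matching steps) exactly $n' = \Targ - k$ of these seed goods remain, and each unacceptable bundle $B_c$ has $v_i(B_c) < 1$ for the current divider, which is what bounds the value lost to earlier rounds. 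Identifying these issues is not the same as closing them.
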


Our proof relies on two technical cases based on the number of high-value goods ($h$) for the lone divider.
If $h \leq \Targ$, there are not many high-value goods, and thus, the simple bag-filling guarantees our desired result. The most challenging cases arise when $h > \Targ$, that is, when there are too many high-value goods. Let $s = h - \Targ$ be the number of excess high-value goods as shown in \cref{fig:lone_divider_example}.
Here, we show that the divider agent is able to form the necessary bundles (\ie the number of remaining agents) by adopting the following  strategies: 
if $n' \leq s$, we can simply pair the remaining high-value goods since there are at least $2n'$ high-value goods available, which implies $n'$ bundles are valued more than 1.
On the other hand, if $n' > s$, we are able to form $s$ bundles by pairing the remaining high-value goods.
This means that we still need an additional $n' - s$ bundles. The key to handling this step relies on the \textit{restricted bag-filling} procedure with exactly one high-value good per bundle but we restrict which low-value goods are used to fill bundles. In restricted bag-filling, we only use the low-value goods from bundles of the divider's $\text{MMS}^{n}$ partition whose high-value goods are no longer available (since those goods have been allocated in some previous round or assigned during pairing), as highlighted in \cref{fig:lone_divider_example} by teal boxes.

The main reason behind this restriction is to ensure that not too much value is `wasted' in the bag-filling procedure (as discussed in \Cref{rem:bag-filling-example}). 
The restricted bag-filling relies on the strong normalization assumption - this way we can guarantee that the low-value goods used in restricted bag-filling have sufficient value when bundled with at most one high-value good.

Our proof follows by showing that if $n' \geq 2s$, it is possible to form $n' - s$ bundles through restricted bag-filling. 
Otherwise if $n' < 2s$, using restricted bag-filling we can form $s$ bundles. In this case, we still need to form an additional $n' - 2s$ bundles, which is possible through simple bag filling by targeting for each bundle to have at most $\frac{3}{2}$ value. In the appendix, we provide formal proofs for each of these steps.

\begin{remark}
The strong normalization assumption in the proof of \cref{thm:two_thirds_existence} requires modifying the values of some goods depending on an MMS partition for agent $i$. Furthermore, the restricted bag-filling phase of \cref{thm:two_thirds_existence} selects available low-value goods from specific bundles of an MMS partition for agent $i$. Because of these dependencies on finding MMS partitions, \cref{thm:two_thirds_existence} does not imply a polynomial time algorithm for $\MMS{\frac{2}{3}}{1}$.
\end{remark}

We show that by relaxing the strong normalization assumption to the weaker normalization of \cref{lem:normalize}, we can devise a polynomial time algorithm for $\MMS{\frac{2}{3}}{1}$ when $n < 9$. Essentially, this algorithm is a simplified variant of our previous procedure that only includes simple bag-filling and pairing to form the required bundles in the lone divider procedure. Intuitively, since the total value of low-value goods is guaranteed to be at least $1 - v_{i}(g_{1})$, at least one bundle can be allocated during bag-filling. When $n < 9$, adding the pairing phase forms $\floor{\frac{n}{2}} + 1 > \Targ$ bundles. The detailed algorithm and the full proof of correctness is provided in the appendix.

\begin{restatable}{theorem}{thmtwothirdsproof}\label{thm:two_thirds_poly}
For $n < 9$, $\MMS{\frac{2}{3}}{1}$ can be computed in polynomial time. 
\end{restatable}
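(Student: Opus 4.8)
The plan is to take the lone divider procedure described in the text and run it with the ordinary (weak) normalization of \cref{lem:normalize} instead of strong normalization, replacing the restricted bag-filling phase with a combination of pairing high-value goods and plain bag-filling. By \cref{lem:order} and \cref{lem:normalize} we may assume the instance is normalized: ordered, scaled so that $v_i(M)=n$, and reduced so that $v_i(g_1)<1$ and $v_i(\{g_n,g_{n+1}\})<1$; crucially, the reductions in \cref{lem:n_goods_reduction} are polynomial-time and each such reduction is a valid reduction, so it suffices to produce an $\MMS{\frac23}{1}$ allocation on the reduced instance. We then pick any subset $N'\subseteq N$ with $|N'|=\Targ$ and show the divider can always form $|N'|$ bundles each worth at least $1\ge\text{MMS}_i^n$ to her; a non-empty envy-free matching then exists (since every bundle is adjacent to the divider, so $|\mathcal N(N')|\ge|N'|$), matched agents leave with bundles they value at least $1$, and the process repeats. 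The value bookkeeping that makes the iteration go through is exactly as in \cref{thm:two_thirds_existence}: removing a matched agent together with a bundle she values $\ge 1$ does not decrease the $\text{MMS}^{n'-1}$ of any remaining agent on the reduced instance, so in each round the remaining agents still have MMS at least their original value and a freshly chosen divider still has a partition into $n'$ bundles each worth $1$.

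The heart of the argument is the single-round claim: given a normalized instance and $n'\le\Targ$ agents, the divider can partition $M$ into $n'$ bundles each valued at least $1$. Classify the divider's goods as high-value ($v_i(g)\ge\frac12$) or low-value, and let $h$ be the number of high-value goods. If $h\le n'$, we run plain bag-filling: start each bundle with one high-value good (or, once high-value goods run out, with a low-value good), then pour in low-value goods until the bundle reaches value $1$. Because every bundle's last good is low-value, no bundle exceeds $\frac32$, and since $v_i(M)=n\ge\frac32 n'$ (using $n'\le\Targ=\floor{2n/3}$, so $\frac32 n'\le n$), bag-filling cannot exhaust the goods before completing all $n'$ bundles; here the reduction $v_i(g_1)<1$ (so the first good alone never overshoots) and the fact that low-value goods are needed to finish keep the argument clean. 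The remaining case $h>n'$ is where $n<9$ enters: let $s=h-n'\ge 1$. We first pair up $2s$ of the high-value goods into $s$ bundles, each worth $\ge1$; this removes $2s$ high-value goods and leaves $h-2s=n'-s$ high-value goods. We still need $n'-s$ more bundles. We give each remaining high-value good its own bundle and run bag-filling on them with the low-value goods. The point is that at least one bundle always gets filled: the total value of low-value goods is at least $v_i(M)-\sum(\text{high-value goods})$, and each high-value good is $<1$ by the reduction on $g_1$, so there is slack; more carefully, when $n<9$ we have $\Targ=\floor{2n/3}\le\floor{n/2}+1$, i.e. $\Targ - \floor{n/2}\le 1$, so the pairing phase alone already produces $\floor{n/2}$ bundles and we only ever need one extra bundle from bag-filling, which the leftover low-value mass $\ge 1-v_i(g_1)>0$ together with one high-value good suffices to fill.

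I would organize the write-up as: (i) a reduction lemma stating that it suffices to handle normalized instances and that the per-round removal preserves everyone's MMS (cite \cref{lem:normalize}, \cref{lem:n_goods_reduction}, and the envy-free matching existence fact $|\mathcal N(N')|\ge|N'|$ from \citet{aigner2019envy}); (ii) the single-round bundle-formation claim split into the $h\le n'$ and $h>n'$ cases as above, with the inequality $\Targ\le\floor{n/2}+1$ for $n<9$ doing the work in the hard case; (iii) assembling these into the algorithm and noting every step — ordering, scaling, \cref{lem:n_goods_reduction}-reductions, building the bipartite graph, computing a non-empty envy-free matching, pairing, bag-filling — is polynomial time, and the number of rounds is at most $n$. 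The main obstacle I expect is the hard case $h>n'$: I need to confirm that after pairing off $2s$ high-value goods, the leftover low-value value is genuinely enough to complete all $n'-s$ remaining bundles, not merely one of them, or else argue that for $n<9$ we never actually need more than one bag-filled bundle because $s$ is forced to be large enough that pairing already covers $\Targ-1$ of the bundles. Pinning down this counting — exactly how $\Targ\le\floor{n/2}+1$ interacts with $s=h-n'$ across the iterations — is the delicate part; everything else is routine given \cref{lem:normalize} and the lone-divider machinery already set up in the paper.
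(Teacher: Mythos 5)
Your overall architecture matches the paper's: weak normalization via \cref{lem:normalize}, the lone divider with envy-free matchings, the invariant that each bundle holds one good from $\{g_1,\ldots,g_{\Targ}\}$, the split on $h\le\Targ$ versus $h>\Targ$, and pairing $2s$ surplus high-value goods before bag-filling. The easy case $h\le\Targ$ is essentially the paper's argument. But the hard case contains a genuine gap, and it sits exactly where the hypothesis $n<9$ has to do real work. Your proposed shortcut --- ``$\Targ\le\floor{n/2}+1$ for $n<9$, so the pairing phase alone already produces $\floor{n/2}$ bundles and we only ever need one extra bundle from bag-filling'' --- is false: pairing produces only $\min(n',s)$ bundles where $s=h-\Targ$, and $s$ can be as small as $1$ or $2$ even when $h>\Targ$. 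For instance with $n=8$, $n'=\Targ=5$ and $h=7$ high-value goods, pairing yields $s=2$ bundles and bag-filling must still produce $n'-s=3$ more; the leftover low-value mass being at least $1-v_i(g_1)$ certifies only one of them. (Pairing \emph{all} high-value goods would not rescue this either: that yields $\floor{h/2}\le\floor{n/2}$ bundles with equality only when $h\ge n-1$, and it removes the seeds needed for bag-filling.) You flag this yourself as the unresolved ``delicate part,'' but it is not a loose end --- it is the theorem.

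The paper closes this gap with a two-step argument you would need to reproduce. First, a general value-counting argument shows that bag-filling cannot stop with $t\le n'-2s$ bundles: if it did, the remaining low-value mass would still be at least $\frac{1}{2}(\Targ-2s-t)+(n'+s-t)(1-v_i(g_{t+1}))\ge 1-v_i(g_{t+1})$, enough to fill another bag, a contradiction. Second, the surviving regime $n'-2s<t<n'-s$ is handled by explicitly enumerating all twelve feasible tuples $(n,n',s,t)$ with $n<9$ and killing each with a bespoke estimate (e.g., showing at least one bundle must always be bag-filled, or that the values of $g_1,g_2,g_3$ and the bound $v_i(\{g_n,g_{n+1}\})<1$ force enough residual low-value mass to fill one more bag). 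It is this finite case analysis --- not the inequality $\Targ\le\floor{n/2}+1$ directly --- that consumes the assumption $n<9$; for $n\ge 9$ the enumeration acquires unkillable cases, matching the tightness construction in \cref{eg:approx_failure}. Without step two (or a replacement for it), your proof does not go through. A secondary, fixable slip: you define $s=h-n'$, whereas the counting must use $s=h-\Targ$ (the surplus relative to the reserved goods $\{g_1,\ldots,g_{\Targ}\}$), since in later rounds $n'<\Targ$ and the two differ.
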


\cref{thm:two_thirds_poly} only guarantees a $\MMS{\frac{2}{3}}{1}$ allocation when $n < 9$. We show by construction that this is tight. We construct a family of instances with $n \geq 9$ in which a small error in computing the MMS bound causes bag-filling to stop before $\Targ$ bundles have been allocated. The challenge presented here is not unique to our algorithm. Any algorithm that satisfies $\MMS{\frac{2}{3}}{1}$ must be able to detect that $\text{MMS}_{i}^{n} < 1$ for all agents. However, this task is intractable even when agents have identical valuations \citep{Bouveret2016}. 

\begin{remark}[Tightness for \cref{thm:two_thirds_poly}]
\label{eg:approx_failure}
Consider $n \geq 9$ agents with identical valuations as follows: $n-1$ high-value goods of value $1 - \epsilon -\tilde{\epsilon}$, two goods of value $\frac{1}{2} - \epsilon$, one good of value $(n+1)\epsilon$, and $n-1$ goods of value $\tilde{\epsilon}$, where $\epsilon \gg \tilde{\epsilon}$. Then, $\text{MMS}_{i}^{n} = 1 - \epsilon$ for all agents.
During the bag-filling step in \cref{thm:two_thirds_poly}, three bundles are allocated accounting for three goods of value $1 - \epsilon - \tilde{\epsilon}$, both goods of value $\frac{1}{2} - \epsilon$, and the good with value $(n+1)\epsilon$. After bag-filling, the remaining high-value goods will be paired together to create a total of $\floor{\frac{n - 4}{2}}$ bundles of paired high-value goods. Thus, this algorithm only satisfies $\floor{\frac{n}{2}} + 1$ agents.
\end{remark}

Combining this construction with Theorem~\ref{thm:two_thirds_poly} implies that no smaller counter-example exists. We note that despite this bound, the algorithm is still practical as there is no bound on the number of goods. For example, more than 99\% of instances in the \textit{Spliddit} dataset deal with less than $9$ agents.

The analysis in \cref{thm:two_thirds_existence} and \cref{thm:two_thirds_poly} are indifferent to the set of agents $N'$. Consequently, we may select any subset of $\Targ$ agents to be given their full MMS. An immediate consequence of this result, together with \cref{prop:MMSAlpha}, is that there always exists an allocation of goods such that each agent receives at least the value of $\text{MMS}^{\ceil{\frac{3n}{2}}}$. Moreover, this allocation can be computed in polynomial time when $\ceil{\frac{3n}{2}} < 9$.

\begin{corollary} \label{cor:MMS3n2}
An allocation satisfying $\text{MMS}^{\ceil{\frac{3n}{2}}}$ always exists. Moreover, such allocations can be computed in polynomial time when $n < 6$.
\end{corollary}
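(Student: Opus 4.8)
The plan is to combine Proposition~\ref{prop:MMSAlpha} with the two $\MMS{\frac{2}{3}}{1}$ results, \cref{thm:two_thirds_existence} and \cref{thm:two_thirds_poly}. Recall that Proposition~\ref{prop:MMSAlpha} says that the existence of $\MMS{\alpha}{1}$ implies the existence of $\text{MMS}^{k}$ for every $k \geq \lceil \frac{n}{\alpha}\rceil$. Setting $\alpha = \frac{2}{3}$ gives $\lceil \frac{n}{\alpha}\rceil = \lceil \frac{3n}{2}\rceil$, so the existence half of the corollary is immediate from \cref{thm:two_thirds_existence} once I check that the $\MMS{\frac{2}{3}}{1}$ existence statement is of the form required by the proposition's hypothesis (namely that it holds for \emph{every} subset of $\lfloor \alpha n'\rfloor$ agents, which is exactly what the remark after the $\MMS{\alpha}{\beta}$ definition and the statement of \cref{thm:two_thirds_existence} assert).

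Concretely, for the existence part I would: (i) take an arbitrary instance $I = \ins{N,M,V}$ with $|N| = n$; (ii) form $I'$ by adding $\lceil \frac{3n}{2}\rceil - n$ dummy agents (with zero valuations), so $|N'| = \lceil \frac{3n}{2}\rceil$; (iii) invoke \cref{thm:two_thirds_existence}, which guarantees an allocation on $I'$ satisfying the full MMS for any prescribed subset of $\lfloor \frac{2}{3}|N'|\rfloor = \lfloor \frac{2}{3}\lceil \frac{3n}{2}\rceil\rfloor \geq n$ agents; (iv) choose that subset to contain all $n$ original agents (possible since $\lfloor \frac{2}{3}\lceil \frac{3n}{2}\rceil\rfloor \geq n$, which is the one small inequality worth verifying, splitting into $n$ even and $n$ odd); (v) conclude each original agent $i$ receives $v_i(A_i) \geq \text{MMS}_i^{|N'|} = \text{MMS}_i^{\lceil 3n/2\rceil}$. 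The dummy agents get the empty bundles, which is fine. This is essentially a verbatim re-run of the proof of Proposition~\ref{prop:MMSAlpha}.

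For the polynomial-time claim, I would run the identical reduction but instead invoke \cref{thm:two_thirds_poly}, which is a polynomial-time algorithm valid when the number of agents in the instance it is run on is less than $9$. Here that instance is $I'$ with $\lceil \frac{3n}{2}\rceil$ agents, so the algorithm applies precisely when $\lceil \frac{3n}{2}\rceil < 9$, i.e. $\lceil \frac{3n}{2}\rceil \leq 8$, which holds iff $n \leq 5$, i.e. $n < 6$. Since adding dummy agents and translating the output allocation back to $I$ (discarding the dummies' empty bundles) is polynomial, the overall procedure is polynomial for $n < 6$.

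The main obstacle — really the only non-bookkeeping point — is confirming the arithmetic $\lfloor \frac{2}{3}\lceil \frac{3n}{2}\rceil\rfloor \geq n$ so that the original $n$ agents genuinely fit inside the satisfiable subset: if $n$ is even, $\lceil \frac{3n}{2}\rceil = \frac{3n}{2}$ and $\lfloor \frac{2}{3}\cdot\frac{3n}{2}\rfloor = n$; if $n$ is odd, $\lceil \frac{3n}{2}\rceil = \frac{3n+1}{2}$ and $\lfloor \frac{2}{3}\cdot\frac{3n+1}{2}\rfloor = \lfloor n + \frac{1}{3}\rfloor = n$. Either way equality holds, so the subset of satisfied agents is exactly large enough to be taken as $N$ with no room to spare, and the argument goes through. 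Everything else is a direct appeal to the already-proven results.
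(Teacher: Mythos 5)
Your proposal is correct and follows essentially the same route as the paper: the corollary is obtained by instantiating \cref{prop:MMSAlpha} with $\alpha = \frac{2}{3}$, using \cref{thm:two_thirds_existence} for existence and \cref{thm:two_thirds_poly} for the polynomial-time claim when the augmented instance has $\ceil{\frac{3n}{2}} < 9$ agents, i.e.\ $n < 6$. Your explicit verification that $\floor{\frac{2}{3}\ceil{\frac{3n}{2}}} = n$ in both parity cases is a useful detail the paper leaves implicit.
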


\section{Empirical Evaluations} \label{sec:experiments}
\begin{figure}
    \centering
     \includegraphics[width=\linewidth]{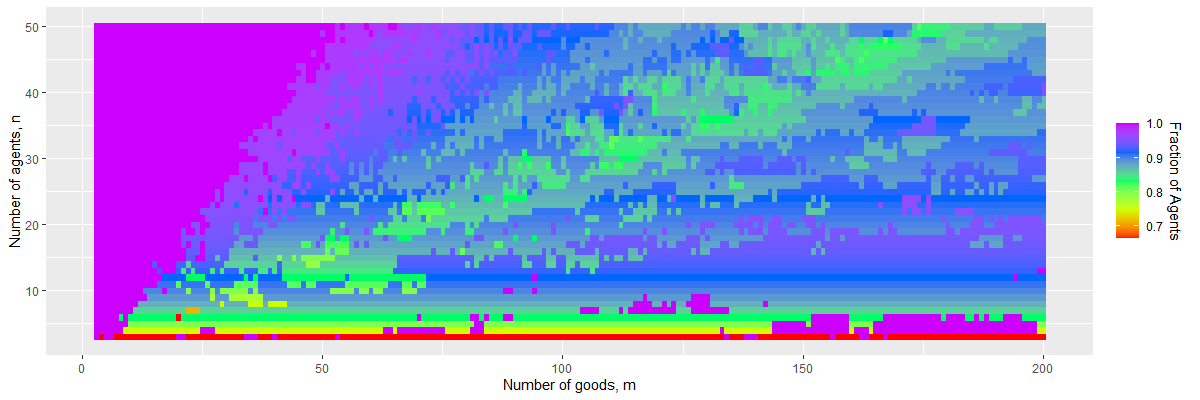}
    \caption{\small Fraction of agents, $\alpha$, receiving their maximin share, i.e., $\MMS{\alpha}{1}$.}
    \label{fig:empricalresults_wideplot}
\end{figure}

The polynomial algorithm of \cref{thm:two_thirds_poly} guarantees $\MMS{\frac{2}{3}}{1}$ only for $n < 9$. We empirically evaluate an extension of this algorithm on instances with up to $50$ agents and $200$ goods. 

This variant of the algorithm does not rely on the strong normalization assumption (see the appendix for the detailed description). It is identical to the algorithm of \cref{thm:two_thirds_poly} until either no more bundles can be allocated during the lone divider procedure or all $\Targ$ agents that were initially selected have received a bundle valued at least 1. In the latter case, any remaining goods are distributed among the unselected $n - \Targ$ agents using bag-filling with priority given to agents who accept the smallest bundles. Notice that this algorithm does not guarantee that all initially selected agents receive their MMS.

We focus on ordered instances---as the most difficult instances in achieving MMS~\citep{Bouveret2016}---and generate 1,000 instances for each combination of $n$ and $m$. Instances are sampled uniformly at random, ordered, and scaled such that $v_{i}(M) = n$ for all agents $i \in N$.


Figure~\ref{fig:empricalresults_wideplot} illustrates the fraction of agents who receive their MMS for $n = 3$ to $50$ agents and $m = 3$ to $200$ goods. In almost all instances, the algorithm goes beyond the $\frac{2}{3}$ bound: on average across all instances, more than 90\% of agents receive their MMS. Moreover, the fraction of agents receiving their MMS improves as either $n$ or $m$ increases.
We also observe linear bands where a lower fraction of agents are satisfied due to the ratio of goods to agents. In addition, when $m < 2n$ a large fraction of the agents receive their MMS and are removed during the reduction phase of normalization (see \cref{sec:preliminaries}). In the appendix, we provide further discussions and results.

\section{Discussion}
\Cref{thm:two_thirds_existence} proves the existence of $\MMS{\frac{2}{3}}{1}$ for any $n$, which implies the existence of $\text{MMS}^{\lceil\frac{3n}{2}\rceil}$. Therefore, improving the bound on $\MMS{\alpha}{1}$ for $\alpha > \frac{2}{3}$ and closing the gap between $\text{MMS}^{\lceil\frac{3n}{2}\rceil}$ and $\text{MMS}^{n+1}$ is an intriguing future direction.
\Cref{thm:two_thirds_poly} provides a tractable approach for computing $\MMS{\frac{2}{3}}{1}$ for $n < 9$. Yet, computing such allocations for any $n$, if possible, will require further techniques to circumvent computing exact MMS bounds. %
Another interesting avenue for future research is exploring PTAS algorithms that guarantee optimal-MMS \citep{heinen2018approximation} for a fraction of agents to achieve $\MMS{\alpha}{1}$ for $\alpha \in [\frac{2}{3}, \frac{n-1}{n}]$.

\section*{Acknowledgments}
Hadi Hosseini acknowledges the support from NSF grant \#1850076. We thank Sawyer Welden for initial help with the experimentation, and Nisarg Shah for helpful discussions. 
We are especially grateful to Erel Segal-Halevi for his valuable input that helped us improve the proof of \cref{thm:two_thirds_existence}. 
We thank the anonymous reviewers for their very helpful comments and suggestions.

\bibliographystyle{ACM-Reference-Format}
\bibliography{ref_MMS}


\clearpage
\appendix



\begin{center}
	\Large{Appendix}
\end{center}

\section{Material Omitted from Section~\ref{sec:optimalMMS}}
This section includes all the detailed discussions on constructing tensors $S$ and $T$, the relevant lemmas, and the missing proofs.

We begin with the generalized version of Theorem~\ref{thm:optimal_mms_failure}. We observe that when $d = \floor{\frac{n}{2}}$ the following theorem becomes Theorem~\ref{thm:optimal_mms_failure}.

\begin{restatable}{theorem}{thmOptMMSCounter}\label{thm:opt_mms_counter}
For any $n \geq 4$ and any $d \leq \floor{\frac{n}{2}}$, there exists an instance with $O(dn)$ goods where any optimal-MMS allocation guarantees at most $\ceil{\frac{n}{d}} + 1$ agents their MMS value.
\end{restatable}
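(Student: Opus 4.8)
The plan is to realize the tensor generalization sketched for \cref{thm:optimal_mms_failure} at arbitrary ``depth'' $d$. First I would construct a symmetric order-$3$ (or, more precisely, depth-$d$) tensor $S$ on index set $[n]^2 \times [d]$ whose nonzero entries are confined to $d$ families of ``sheets'' $S_1, \ldots, S_d$, each sheet $S_j(\cdot)$ being an $n \times n$ slice, as in \cref{fig:slices}. The goods correspond to the nonzero entries; since each sheet family contributes $O(n)$ nonzero entries per sheet and there are $d$ families with $O(n)$ sheets each but with heavy overlap engineered so the support stays linear in each coordinate, the total good count is $O(dn)$ — this is the quantitative improvement over the quadratic bound and requires care in the overlap bookkeeping. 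Then I would split the $n$ agents into $d$ groups (of size roughly $2$ each, with the leftover $n - 2d \geq 0$ agents, which exists because $d \leq \floor{n/2}$, handled as ``bystanders'' with flat valuations or folded into an arbitrary group), where group $j$ values goods according to a perturbation of the sheet family $S_j$: agent values are set so that $\text{MMS}_i = 1$ is witnessed \emph{only} by the partition of $M$ into the slices of $S_j$, while every slice other than the last, $S_j(n)$, is worth exactly $1-\epsilon$ to these agents, and the last slice absorbs the slack.

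The core combinatorial claim, which I would isolate as a lemma (the analogue of the matrix lemma behind \cref{thm:optimal_mms_failure}), is: any allocation giving \emph{every} agent value $\geq 1 - \epsilon$ must, for each group $j$, allocate bundles that coincide with slices of the family $S_j$ — but the $d$ slice-families are mutually ``incompatible'' as partitions of $M$ (a bundle that is a slice in family $j$ cross-cuts the slices of family $j'$ for $j' \neq j$), so at most one group can be served by genuine slices, hence at most one group can have all its members at value $\geq \text{MMS}_i$. The one served group has $\leq 2$ agents, and within every other group exactly one ``lucky'' agent can still reach $\text{MMS}_i = 1$ by grabbing the degenerate last slice $S_j(n)$ whose structure does not conflict; everyone else gets only $1 - \epsilon < \text{MMS}_i$. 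Counting: one fully-served group ($\leq 2$ agents) plus one lucky agent from each of the remaining $\leq d - 1$ groups, plus possibly one bystander agent, gives the bound $\ceil{n/d} + 1$ after absorbing the group-size arithmetic (when $n$ is not a multiple of $2d$ the ceiling accounts for the slightly larger group). Finally I would verify that $\lambda^*(I) = 1 - \epsilon$ on this instance — i.e., that the ``lucky-agent-per-group'' allocation is actually feasible and attains $1-\epsilon$ for all — so that every optimal-MMS allocation is one of these, completing the argument; specializing $d = \floor{n/2}$ recovers $\ceil{n/\floor{n/2}} + 1 \leq 4$ ($=3$ for $n$ even), matching \cref{thm:optimal_mms_failure}.

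I expect the main obstacle to be the explicit design of the tensor $S$ and the accompanying perturbations so that all three requirements hold \emph{simultaneously}: (i) the support is linear, not quadratic, in $n$ for fixed $d$ (this forces the $d$ sheet families to share goods, which is exactly what makes the incompatibility analysis delicate); (ii) for each group $j$, the \emph{unique} MMS-optimal partition for its members is the slice family $S_j$ — uniqueness up to the symmetry that still permits the lucky-agent slice — which requires a careful choice of the perturbation magnitudes (some small $\delta \ll \epsilon$) breaking ties in precisely one direction; and (iii) no ``mixed'' allocation, stitching a slice of family $j$ to a slice of family $j'$, can sneak all agents up to $1-\epsilon$. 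Getting (ii) and (iii) to coexist with (i) is the crux; the perturbation hierarchy $\delta \ll \epsilon \ll 1$ and a parity/overlap argument on which entries of $S$ are shared between families is what I would use to push it through. The agent-counting and the reduction ``every optimal-MMS allocation looks like this'' are then routine given the lemma.
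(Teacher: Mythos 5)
Your high-level blueprint matches the paper's: an order-$d$ tensor whose $d$ families of slices give $d$ mutually incompatible equi-partitions, agent groups tied to slice families via small perturbations, and the observation that any optimal-MMS allocation must be a set of aligned slices. However, there is a genuine gap in the counting step, which is where the theorem's bound actually comes from. You claim that ``within every other group exactly one lucky agent can still reach $\text{MMS}_i=1$ by grabbing the last slice,'' and then count one fully-served group plus one lucky agent \emph{per} remaining group. That gives $2+(d-1)+1=d+2$ satisfied agents; for $d=\floor{n/2}$ this is about $n/2+2$, which is not the claimed constant and does not yield $\ceil{n/d}+1$. The last slice is a \emph{single bundle}, so at most one agent in the entire instance can receive it. The paper's mechanism is precisely that the only positively perturbed entry for \emph{every} group is the single shared corner entry $S_{nn\ldots n}$, which lies in the last slice of every dimension; every other slice carries a strictly negative perturbation $-\tilde\epsilon$ for every agent outside the served group. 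Hence exactly one out-of-group agent can be lifted to value $\geq 1$, giving $\ceil{n/d}+1$ in total. Relatedly, to obtain the $\ceil{n/d}$ term you need the groups to partition all $n$ agents into $d$ groups of size up to $\ceil{n/d}$ (the served group is what contributes $\ceil{n/d}$), not groups of size $2$ plus flat-valuation ``bystanders'' --- the bystanders' MMS satisfaction is uncontrolled and could blow up the count.

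The second issue is that the explicit tensor design, which you correctly identify as the crux, is left unresolved, and your description of the index set ($[n]^2\times[d]$) does not match an order-$d$ tensor on $[n]^d$. The paper resolves the $O(dn)$-support requirement not by ``overlap bookkeeping'' between sheet families but by concentrating all non-zero mass on the main diagonal $S_{ii\ldots i}=\frac{d^{n-i}-1}{d^{n-i}}$ together with $d$ ``arms'' $S_{i;x_j=n}=\frac{1}{d-1}\cdot\frac{1}{d^{n-i}}$ pointing at the corner: the geometric decay forces any equi-partition to assign, to the set containing $S_{ii\ldots i}$, exactly $d-1$ of the $d$ arm entries at level $i$, which uniquely identifies a slice; a second zero-sum perturbation tensor $T$ with a hierarchy of $\epsilon$-powers then forces the chosen slices to be aligned across levels. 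Without this (or an equivalent) explicit construction, requirements (i)--(iii) in your last paragraph remain unverified, and the uniqueness-of-equi-partitions lemma --- on which the whole ``every optimal-MMS allocation is a slice allocation'' reduction rests --- is not established.
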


Our construction is broken into three components: a tensor $S$, a tensor $T$, and $d$ groups of agents. The entries of tensor $S$ are chosen such that equi-partitions of the entries correspond to a set of slices of $S$. An equi-partition is a partition of the entries of a tensor where each set of the partition has the same sum. Here, all equi-partitions refer to $n$-partitions. Tensor $T$ provides perturbation so that the only equi-partitions of $S+T$ are aligned slices.

All tensors in our construction are order $d$ with dimensions $(n \times n \times \ldots \times n)$.\footnote{When referring to a tensor, the ``order'' is the number of dimensions that the tensor is embedded in. The ``dimensions'' of a tensor, however, refers to the number of entries along each axis, similar to the dimensions of a matrix.} We say that tensor $S$ is indexed by a $d$-tuple $(x_{1}, \ldots, x_{d})$ corresponding to entry $S[x_{1}, \ldots, x_{d}]$ or $S_{x_{1}, \ldots, x_{d}}$ for short. The $(d-1)$-order slices of tensor $S$ along dimension $j$ are given by $S_{j}(1)$ to $S_{j}(n)$ where $S_{j}(i) = \{S_{x_{1} \ldots x_{d}} \mid x_{j} = i\}$. 
We also define $S_{i; x_{j} = k}$ to be the entry where all indices are $i$ except index $x_{j} = k$. For example, $S_{i; x_{1} = n} = S_{nii\ldots i}$. Tensor addition is entry-wise ($(S+T)_{x_{1}, x_{2}, \ldots, x_{d}} = S_{x_{1}, x_{2}, \ldots, x_{d}} + T_{x_{1}, x_{2}, \ldots, x_{d}})$. Unless otherwise specified, the entries of a tensor have value $0$.

Figure~\ref{fig:tensor_shape} depicts the indexing of an order $3$ tensor with dimensions $(6 \times 6 \times 6)$. Figure~\ref{fig:slices} illustrates the distinct sets of slices of such a tensor. Lastly, Figure~\ref{fig:T_shape} depicts the non-zero entries of tensors $S$ and $T$ in our construction when $n = 6$ and $d = 3$.

\subsection{Construction of Tensor $S$}

Let $S$ be an order $d$ tensor whose non-zero entries are as follows:
\begin{itemize}
    \item For $1 \leq i < n$ and $j \in [d]$, $S_{ii\ldots i} = \frac{d^{n-i} - 1}{d^{n-i}}$
    \item For $1 \leq i < n$ and $j \in [d]$, $S_{i; x_{j} = n} = \frac{1}{d-1}\cdot\frac{1}{d^{n-i}}$
    \item $S_{nn\ldots n} = 1 - \sum_{i = 1}^{n-1} \frac{1}{d-1}\cdot\frac{1}{d^{n-i}}$
\end{itemize}

\begin{lemma}\label{lem:S_1slice}
Each $(d-1)$-order slice $S_{i}(j)$ has sum $1$.
\end{lemma}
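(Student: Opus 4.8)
The plan is a direct verification. Fix a dimension $j \in [d]$ and an index $i \in [n]$; I want to show that $\sum_{x \,:\, x_j = i} S_x = 1$, where the sum ranges over all $d$-tuples $x = (x_1,\dots,x_d)$ with $x_j = i$. Since $S$ has very few non-zero entries, the whole argument reduces to enumerating which of them land in the slice $\{x_j = i\}$ and adding up their values. The only subtlety is bookkeeping, and I would organize it by splitting on whether $i < n$ or $i = n$.

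\emph{Case $i < n$.} The non-zero entries of $S$ are the diagonal entries $S_{i'i'\cdots i'}$ with $1 \le i' < n$, the near-diagonal entries $S_{i'; x_{j'}=n}$ with $1 \le i' < n$ and $j' \in [d]$ (all coordinates equal $i'$ except coordinate $j'$, which equals $n$), and the all-$n$ entry $S_{nn\cdots n}$. Among the diagonal entries, only $S_{ii\cdots i}$ has its $j$-th coordinate equal to $i$, contributing $\frac{d^{n-i}-1}{d^{n-i}} = 1 - d^{-(n-i)}$. A near-diagonal entry $S_{i'; x_{j'}=n}$ has $j$-th coordinate equal to $n$ if $j' = j$ and to $i'$ otherwise; since $i < n$ it lies in the slice iff $j' \neq j$ and $i' = i$, which gives exactly $d-1$ entries, each of value $\frac{1}{d-1}\cdot d^{-(n-i)}$. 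The all-$n$ entry has $j$-th coordinate $n \neq i$ and is excluded. Hence the slice sum is
\[
\Bigl(1 - d^{-(n-i)}\Bigr) + (d-1)\cdot\frac{1}{d-1}\,d^{-(n-i)} = 1 .
\]

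\emph{Case $i = n$.} No diagonal entry $S_{i'i'\cdots i'}$ with $i' < n$ has $j$-th coordinate $n$, so none of those appear. The all-$n$ entry lies in the slice and contributes $1 - \sum_{i'=1}^{n-1}\frac{1}{d-1}\,d^{-(n-i')}$. A near-diagonal entry $S_{i'; x_{j'}=n}$ has $j$-th coordinate $n$ iff $j' = j$ (otherwise that coordinate is $i' < n$), which yields exactly one entry for each $i' \in \{1,\dots,n-1\}$, of value $\frac{1}{d-1}\,d^{-(n-i')}$. Thus the slice sum is
\[
\Bigl(1 - \sum_{i'=1}^{n-1}\frac{1}{d-1}\,d^{-(n-i')}\Bigr) + \sum_{i'=1}^{n-1}\frac{1}{d-1}\,d^{-(n-i')} = 1 .
\]

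I expect the only place an error could creep in is the enumeration of near-diagonal entries in a fixed slice: whether such an entry has $j$-th coordinate $n$ or its base index $i'$ depends on whether its perturbed dimension $j'$ coincides with the slicing dimension $j$, and this is exactly the distinction driving the two cases. No deeper idea is needed — the two free constants in the definition of $S$ (the diagonal value and the all-$n$ value) are chosen precisely so that each case collapses to $1$.
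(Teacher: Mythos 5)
Your proof is correct and follows essentially the same route as the paper's: enumerate the non-zero entries of $S$ landing in the slice $\{x_j = i\}$, split on $i < n$ versus $i = n$, and observe that the diagonal and all-$n$ values were chosen to make each sum collapse to $1$. Your bookkeeping of which near-diagonal entries $S_{i';x_{j'}=n}$ fall in the slice (the $j' \neq j$ versus $j' = j$ distinction) matches the paper's count of $d-1$ entries in the first case and one entry per $i'$ in the second.
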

\begin{proof}
Consider the non-zero entries within slice $S_{j}(i)$. First when $i \neq n$, the only non-zero entries of $S_{j}(i)$ are $S_{ii\ldots i}$ and $S_{i;x_{k}=n}$ for each $k \neq j$. Thus:
\begin{align*}
    \sum_{x \in S_{j}(i)} x &= S_{ii\ldots i} + \sum_{k \neq j}^{d} S_{i;x_{k}=n} \\
    &= \frac{d^{n-i} - 1}{d^{n-i}} + \sum_{k \neq j}^{d} \frac{1}{d-1}\cdot\frac{1}{d^{n-i}} \\
    &= \frac{d^{n-i} - 1}{d^{n-i}} + (d-1) \frac{1}{d-1}\cdot\frac{1}{d^{n-i}} \\
    &= \frac{d^{n-i} - 1}{d^{n-i}} + \frac{1}{d^{n-i}} \\
    &= 1
\end{align*}
When $i = n$, the non-zero entries of slice $S_{j}(i)$ are $S_{nn\ldots n}$ and $S_{k;x_{j}=n}$ for each $k \neq n$. Thus:
\footnotesize
\begin{align*}
    \sum_{x \in S_{j}(n)} x &= S_{nn\ldots n} + \sum_{k \neq n}^{n} S_{k;x_{j}=n} \\
    &= 1 - \sum_{i = 1}^{n-1} \frac{1}{d-1}\cdot\frac{1}{d^{n-i}} + \sum_{k = 1}^{n-1} \frac{1}{d-1}\cdot\frac{1}{d^{n-k}} \\
    &= 1
\end{align*}
\end{proof}
\normalsize
The combinatorial structure of $S$ has a few other desirable properties. First observe that the entries on the main diagonal $(S_{ii \ldots i})$ all have value greater than $\frac{1}{2}$. When considering equi-partitions of the entries of $S$, these entries must all be included in separate sets. Furthermore, the entries off the main diagonal increase when approaching $S_{nn\ldots n}$ in such a way that all $d$ of the entries with value $\frac{1}{d-1}\cdot\frac{1}{d^{n-i}}$ are required to account for a single entry with value $\frac{1}{d-1}\cdot\frac{1}{d^{n-(i+1)}}$.

\begin{lemma}\label{lem:S_slices}
Any equi-partition of the non-zero entries of $S$ corresponds to a set of slices\\ $\{S_{j_{1}}(1), S_{j_{2}}(2), \ldots, S_{j_{n}}(n)\}$.
\end{lemma}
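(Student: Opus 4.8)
\textbf{Proof proposal for Lemma~\ref{lem:S_slices}.}

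The plan is to show that any equi-partition of the non-zero entries of $S$ must be ``axis-aligned'', i.e.\ each part is exactly one of the slices $S_j(i)$. I would argue by processing the diagonal entries $S_{ii\ldots i}$ from the largest ($i=1$) toward the smallest, peeling off one slice at a time. The key structural facts, all immediate from the construction, are: (a) each diagonal entry $S_{ii\ldots i} = \frac{d^{n-i}-1}{d^{n-i}}$ exceeds $\frac12$ (for $i<n$), so no two diagonal entries can lie in the same part of an equi-partition (each part has sum $1$ by Lemma~\ref{lem:S_1slice}); (b) there are exactly $n-1$ diagonal entries plus the corner entry $S_{nn\ldots n}$, and the parts number $n$, so each part contains exactly one ``large'' entry, namely one $S_{ii\ldots i}$ for $i<n$ or the corner $S_{nn\ldots n}$; (c) the only non-zero off-diagonal entries are the $S_{i;x_j=n} = \frac{1}{d-1}\cdot\frac{1}{d^{n-i}}$, which come in ``levels'' indexed by $i$, with $d$ copies at each level (one for each choice of axis $j$).

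Next I would quantify how much off-diagonal ``mass'' each part needs. The part containing $S_{ii\ldots i}$ needs additional entries summing to $\frac{1}{d^{n-i}}$ to reach $1$. I would prove by induction on $i = 1, 2, \ldots, n-1$ that the only way to assemble total value $\frac{1}{d^{n-i}}$ from the available small entries is to take the $d-1$ copies of the level-$i$ entry $\frac{1}{d-1}\cdot\frac{1}{d^{n-i}}$ that have not yet been used. The crucial arithmetic observation driving the induction is that the largest small entry strictly below level $i$ (i.e.\ level $i+1$) has value $\frac{1}{d-1}\cdot\frac{1}{d^{n-(i+1)}}$, and it takes all $d$ of the level-$(i+1)$ copies to sum to exactly $\frac{1}{d^{n-(i+1)}}$, which is too large (it is $d$ times the needed level-$i$ residue divided down)... more carefully: I would show that if the part containing $S_{ii\ldots i}$ uses any entry from a level $i' > i$, then because of the geometric ratio $d$ between consecutive levels it must either exceed or fall short of the target $\frac{1}{d^{n-i}}$ unless it uses a full level's worth, and a counting argument across all $n-1$ diagonal parts plus the corner part shows this cannot happen consistently. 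Finally, having forced the part of $S_{ii\ldots i}$ to be exactly $\{S_{ii\ldots i}\} \cup \{S_{i;x_k=n} : k \neq j\}$ for some omitted axis $j$ — which is precisely the slice $S_j(i)$ by the description of non-zero entries in the proof of Lemma~\ref{lem:S_1slice} — and noting that after removing these $n-1$ slices the remaining entries are exactly those of $S_{j_n}(n)$ for the one axis $j_n$ not yet ``spent,'' the equi-partition is pinned down as a set of slices.

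The main obstacle I anticipate is rigorously executing the inductive step that excludes ``mixed-level'' parts: a part could in principle try to make up its residue $\frac{1}{d^{n-i}}$ using a few high-level (small) entries plus some low-level (larger) ones, and ruling this out requires carefully tracking which small entries remain available at each stage of the induction and using the exact geometric structure (ratio $d$, exactly $d$ copies per level, the corner entry $S_{nn\ldots n}$ absorbing the leftover) rather than a crude inequality. I would organize this by maintaining the invariant ``after assigning the slices for $i = 1, \ldots, t$, the unused small entries are exactly all copies at levels $> t$,'' and showing the part for $S_{t+1,\ldots,t+1}$ is then forced. A secondary subtlety is handling the corner part and verifying its sum is $1$, but that is exactly the $i=n$ case of Lemma~\ref{lem:S_1slice}, so it comes for free once everything else is forced.
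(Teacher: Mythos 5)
Your proposal follows essentially the same route as the paper's proof: both isolate each diagonal entry $S_{ii\ldots i}$ in its own part (they exceed $\tfrac12$), observe that its residue $1 - S_{ii\ldots i} = \frac{1}{d^{n-i}}$ is too small to admit any single entry from a higher level and can only be assembled from exactly $d-1$ of the level-$i$ entries of value $\frac{1}{d-1}\cdot\frac{1}{d^{n-i}}$, which pins down a slice $S_{j_i}(i)$, and then induct upward in $i$. The mixed-level divisibility subtlety you flag is exactly the point the paper's proof compresses into ``a similar argument at each step,'' so your plan is correct and at the same level of rigor as the published argument.
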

\begin{proof}
Lemma~\ref{lem:S_1slice} implies that a set of slices of the form $\{S_{j_{1}}(1), S_{j_{2}}(2), \ldots, S_{j_{n}}(n)\}$ forms an equi-partition. In order to show that these equi-partitions are unique, consider the set which contains $S_{11\ldots 1}$. Since $1 - S_{11\ldots 1} = \frac{1}{d^{n-1}} < \frac{1}{d-1}\cdot\frac{1}{d^{n-2}}$. Thus the only entries which can be included with $S_{11\ldots 1}$ are of the form $\frac{1}{d-1}\cdot\frac{1}{d^{n-1}}$. Importantly, $d-1$ of these entries uniquely identify a $(d-1)$-order slice $S_{j_{1}}(1)$ where $S_{1;x_{j_{1}} = n}$ is the missing entry with value $\frac{1}{d-1}\cdot\frac{1}{d^{n-1}}$.

By induction, with a similar argument at each step, each successive entry on the main diagonal must be included in a set which corresponds exactly to a slice $S_{j_{i}}(i)$.
\end{proof}

\subsection{Construction of Tensor $T$}
In Lemma~\ref{lem:S_slices}, the slices of $S$ are not necessarily aligned slices. For example, in the case of three dimensions, the first slice might be frontal while the second one might be vertical. Since all slices have the same sum, there is no reason an agent wouldn't compute MMS by taking slices at random. In order to rectify this, we construct tensor $T$ so that equi-partitions of $S + T$ correspond only to aligned slices.

Let tensor $T$ be an order $d$ tensor with non-zero values as follows:
\setlist{nolistsep}
\begin{itemize}[noitemsep]
    \item For $1 \leq i < n-1$, $j \in [d]$, $T_{i; x_{j}=n} = -r_{ij}$
    \item For $1 \leq i < n-1$, $j \in [d]$, $T_{i; x_{j}=i+1} = u_{ij}$
    \item For $j \in [d]$, $T_{n-1; x_{j} = n} = x_{j}$
    \item $T_{nn\ldots n} = z_{1}$
\end{itemize}
Where
\begin{itemize}[noitemsep]
    \item $r_{ij} = \epsilon^{d(n-1) - di - j + 1}$
    \item $u_{1j} = r_{1j}$
    \item $u_{ij} = \frac{1}{d-1}\left(\sum_{k \neq j}^{d}{-u_{i-1,k}}\right) + \frac{d-2}{d-1}u_{i-1,j} + r_{ij} \approx r_{ij}$
    \item $x_{j} = u_{n-1,j} - r_{n-1,j}$
    \item $z_{j} = \left(\sum_{i = 1}^{n-2}{r_{ij}}\right) - x_{j}$
\end{itemize}

It is important to observe that the entries of $T$ are all proportional to powers of a very small $\epsilon$ term. While some of these values are negative, the negative entries only occur in indices where $S$ has positive value. Thus if $\epsilon$ is small enough, $S + T$ has only non-negative values. Another important observation is that the slices of $T$ all have sum $0$.

\begin{figure*}[t]
\nocaption
\centering \scriptsize
    \begin{subfigure}[t]{0.2\textwidth}
        \centering 
        \subfile{images/S_Shape_fig}
        \repeatsubcaption{fig:tensor_shape}{The indexing of tensor $S$.}
    \end{subfigure}\hfill
    \begin{subfigure}[t]{0.32\textwidth}
        \centering 
        \tikzset{every picture/.style={line width=0.75pt}} 
    \begin{tikzpicture}[x=0.75pt,y=0.75pt,yscale=-.44,xscale=.44]

\draw  [fill={rgb, 255:red, 255; green, 40; blue, 40 }  ,fill opacity=1 ] (388.56,45) -- (396.12,37.44) -- (493.56,37.44) -- (493.56,134.88) -- (486,142.44) -- (388.56,142.44) -- cycle ; \draw   (493.56,37.44) -- (486,45) -- (388.56,45) ; \draw   (486,45) -- (486,142.44) ;
\draw  [fill={rgb, 255:red, 255; green, 80; blue, 80 }  ,fill opacity=1 ] (381,52.56) -- (388.56,45) -- (486,45) -- (486,142.44) -- (478.44,150) -- (381,150) -- cycle ; \draw   (486,45) -- (478.44,52.56) -- (381,52.56) ; \draw   (478.44,52.56) -- (478.44,150) ;
\draw  [fill={rgb, 255:red, 255; green, 120; blue, 120 }  ,fill opacity=1 ] (373.56,60) -- (381.12,52.44) -- (478.56,52.44) -- (478.56,149.88) -- (471,157.44) -- (373.56,157.44) -- cycle ; \draw   (478.56,52.44) -- (471,60) -- (373.56,60) ; \draw   (471,60) -- (471,157.44) ;
\draw  [fill={rgb, 255:red, 255; green, 160; blue, 160 }  ,fill opacity=1 ] (366,67.56) -- (373.56,60) -- (471,60) -- (471,157.44) -- (463.44,165) -- (366,165) -- cycle ; \draw   (471,60) -- (463.44,67.56) -- (366,67.56) ; \draw   (463.44,67.56) -- (463.44,165) ;
\draw  [fill={rgb, 255:red, 255; green, 200; blue, 200 }  ,fill opacity=1 ] (358.56,75) -- (366.12,67.44) -- (463.56,67.44) -- (463.56,164.88) -- (456,172.44) -- (358.56,172.44) -- cycle ; \draw   (463.56,67.44) -- (456,75) -- (358.56,75) ; \draw   (456,75) -- (456,172.44) ;
\draw  [fill={rgb, 255:red, 255; green, 240; blue, 240 }  ,fill opacity=1 ] (351,82.56) -- (358.56,75) -- (456,75) -- (456,172.44) -- (448.44,180) -- (351,180) -- cycle ; \draw   (456,75) -- (448.44,82.56) -- (351,82.56) ; \draw   (448.44,82.56) -- (448.44,180) ;
\draw  [fill={rgb, 255:red, 240; green, 255; blue, 240 }  ,fill opacity=1 ] (32.18,74.18) -- (62.36,44) -- (77.18,44) -- (77.18,148.82) -- (47,179) -- (32.18,179) -- cycle ; \draw   (77.18,44) -- (47,74.18) -- (32.18,74.18) ; \draw   (47,74.18) -- (47,179) ;
\draw  [fill={rgb, 255:red, 200; green, 255; blue, 200 }  ,fill opacity=1 ] (47,74.18) -- (77.18,44) -- (92,44) -- (92,148.82) -- (61.82,179) -- (47,179) -- cycle ; \draw   (92,44) -- (61.82,74.18) -- (47,74.18) ; \draw   (61.82,74.18) -- (61.82,179) ;
\draw  [fill={rgb, 255:red, 160; green, 255; blue, 160 }  ,fill opacity=1 ] (62,74.18) -- (92.18,44) -- (107,44) -- (107,148.82) -- (76.82,179) -- (62,179) -- cycle ; \draw   (107,44) -- (76.82,74.18) -- (62,74.18) ; \draw   (76.82,74.18) -- (76.82,179) ;
\draw  [fill={rgb, 255:red, 120; green, 255; blue, 120 }  ,fill opacity=1 ] (77,74.18) -- (107.18,44) -- (122,44) -- (122,148.82) -- (91.82,179) -- (77,179) -- cycle ; \draw   (122,44) -- (91.82,74.18) -- (77,74.18) ; \draw   (91.82,74.18) -- (91.82,179) ;
\draw  [fill={rgb, 255:red, 80; green, 255; blue, 80 }  ,fill opacity=1 ] (92,74.18) -- (122.18,44) -- (137,44) -- (137,148.82) -- (106.82,179) -- (92,179) -- cycle ; \draw   (137,44) -- (106.82,74.18) -- (92,74.18) ; \draw   (106.82,74.18) -- (106.82,179) ;
\draw  [fill={rgb, 255:red, 40; green, 255; blue, 40 }  ,fill opacity=1 ] (107,74.18) -- (137.18,44) -- (152,44) -- (152,148.82) -- (121.82,179) -- (107,179) -- cycle ; \draw   (152,44) -- (121.82,74.18) -- (107,74.18) ; \draw   (121.82,74.18) -- (121.82,179) ;
\draw  [fill={rgb, 255:red, 40; green, 40; blue, 255 }  ,fill opacity=1 ] (185,164.97) -- (229.97,120) -- (320,120) -- (320,135.03) -- (275.03,180) -- (185,180) -- cycle ; \draw   (320,120) -- (275.03,164.97) -- (185,164.97) ; \draw   (275.03,164.97) -- (275.03,180) ;
\draw  [fill={rgb, 255:red, 80; green, 80; blue, 255 }  ,fill opacity=1 ] (185,149.95) -- (229.97,104.97) -- (320,104.97) -- (320,120) -- (275.03,164.97) -- (185,164.97) -- cycle ; \draw   (320,104.97) -- (275.03,149.95) -- (185,149.95) ; \draw   (275.03,149.95) -- (275.03,164.97) ;
\draw  [fill={rgb, 255:red, 120; green, 120; blue, 255 }  ,fill opacity=1 ] (185,134.92) -- (229.97,89.95) -- (320,89.95) -- (320,104.97) -- (275.03,149.95) -- (185,149.95) -- cycle ; \draw   (320,89.95) -- (275.03,134.92) -- (185,134.92) ; \draw   (275.03,134.92) -- (275.03,149.95) ;
\draw  [fill={rgb, 255:red, 160; green, 160; blue, 255 }  ,fill opacity=1 ] (185,119.97) -- (229.97,75) -- (320,75) -- (320,90.03) -- (275.03,135) -- (185,135) -- cycle ; \draw   (320,75) -- (275.03,119.97) -- (185,119.97) ; \draw   (275.03,119.97) -- (275.03,135) ;
\draw  [fill={rgb, 255:red, 200; green, 200; blue, 255 }  ,fill opacity=1 ] (185,104.97) -- (229.97,60) -- (320,60) -- (320,75.03) -- (275.03,120) -- (185,120) -- cycle ; \draw   (320,60) -- (275.03,104.97) -- (185,104.97) ; \draw   (275.03,104.97) -- (275.03,120) ;
\draw  [fill={rgb, 255:red, 240; green, 240; blue, 255 }  ,fill opacity=1 ] (185,89.95) -- (229.97,44.97) -- (320,44.97) -- (320,60) -- (275.03,104.97) -- (185,104.97) -- cycle ; \draw   (320,44.97) -- (275.03,89.95) -- (185,89.95) ; \draw   (275.03,89.95) -- (275.03,104.97) ;

\draw (31.5,191) node  [align=left] {$\displaystyle S_{1}( 1)$};
\draw (131.5,190) node  [align=left] {$\displaystyle S_{1}( 6)$};
\draw (250.5,70) node  [align=left] {$\displaystyle S_{2}( 1)$};
\draw (230.5,198) node  [align=left] {$\displaystyle S_{2}( 6)$};
\draw (395.5,130) node [scale=0.9] [align=left] {$\displaystyle S_{3}( 1)$};
\draw (500.5,25) node  [align=left] {$\displaystyle S_{3}( 6)$};

\end{tikzpicture}
        \repeatsubcaption{fig:slices}{The slices of tensor $S$.}
    \end{subfigure}\hfill
    \begin{subfigure}[t]{0.33\textwidth}
    \centering
    \subfile{images/T_Shape_fig}\hfill
    \repeatsubcaption{fig:T_shape}{The non-zero entries of $S$.}
    \end{subfigure}
  \repeatcaption{fig:tensors}{\small The shape and slices of order $3$ tensors with dimensions $(6 \times 6 \times 6)$.}
\end{figure*}

\begin{lemma}\label{lem:T_ZeroSlices}
Each $(d-1)$-order slice $T_{j}(i)$ has sum $0$.
\end{lemma}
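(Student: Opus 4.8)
The plan is to fix a dimension $j\in[d]$ and a slice index $i\in[n]$, list \emph{exactly} which non‑zero entries of $T$ lie in the slice $T_j(i)$, and check that they cancel. Every non‑zero entry of $T$ has all $d$ coordinates equal to some base value $a$ except at most one coordinate, say in dimension $j'$, which is some other value $b$. Such an entry lies in $T_j(i)$ iff either $j'=j$ and $b=i$, or else $j'\neq j$ and $a=i$. Applying this membership test to the four families of non‑zero entries of $T$ (the $-r_{i'j'}$ entries, the $u_{i'j'}$ entries, the $x_{j'}$ entries, and the single entry $T_{nn\ldots n}=z_1$) leaves a short list whose composition depends only on which of the ranges $i=1$, $2\le i\le n-2$, $i=n-1$, $i=n$ we are in. So the proof is a four‑case verification; the ranges are genuinely disjoint because $n\geq 4$.

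The algebraic engine for the interior cases is a telescoping simplification of the definition of $u_{ij}$. Writing $U_i=\sum_{k\in[d]}u_{ik}$ and using $\sum_{k\neq j}u_{i-1,k}=U_{i-1}-u_{i-1,j}$, the recursion collapses to
$$u_{ij}=u_{i-1,j}+r_{ij}-\frac{U_{i-1}}{d-1},$$
and summing over $j$ and subtracting yields $\sum_{k\neq j}u_{ik}=U_i-u_{ij}=\sum_{k\neq j}r_{ik}-u_{i-1,j}$. For $2\le i\le n-2$ the slice $T_j(i)$ consists precisely of the $d-1$ entries $-r_{ik}$ $(k\neq j)$, the $d-1$ entries $u_{ik}$ $(k\neq j)$, and the lone entry $u_{i-1,j}$, so its sum is $-\sum_{k\neq j}r_{ik}+\sum_{k\neq j}u_{ik}+u_{i-1,j}=0$ by the identity. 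The case $i=1$ is the same computation with no $u_{i-1,j}$ term present and with $u_{1k}=r_{1k}$, giving $\sum_{k\neq j}(-r_{1k}+u_{1k})=0$. The case $i=n-1$ is the same shape except that the incoming $u$‑entries are replaced by the entries $x_k$ $(k\neq j)$ together with the lone entry $u_{n-2,j}$; here one substitutes $x_k=u_{n-1,k}-r_{n-1,k}$ and applies the identity at level $n-1$ to get $\sum_{k\neq j}x_k=-u_{n-2,j}$, which cancels the $u_{n-2,j}$ term.

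The one step that needs real care — and the main obstacle — is the top slice $i=n$, since it is the unique slice containing $T_{nn\ldots n}=z_1$. Collecting entries, $T_j(n)$ consists of $-r_{i'j}$ for $1\le i'\le n-2$, the entry $x_j$, and $z_1$, so its sum is $z_1-\bigl(\sum_{i'=1}^{n-2}r_{i'j}-x_j\bigr)=z_1-z_j$. Thus the lemma hinges on the (not a priori obvious) claim that $z_j$ is independent of $j$. I would establish this by telescoping the simplified recursion: $u_{n-1,j}=r_{1j}+\sum_{i=2}^{n-1}\bigl(r_{ij}-\tfrac{U_{i-1}}{d-1}\bigr)=\sum_{i=1}^{n-1}r_{ij}-\tfrac{1}{d-1}\sum_{i=1}^{n-2}U_i$, hence $z_j=\sum_{i=1}^{n-2}r_{ij}-x_j=\sum_{i=1}^{n-1}r_{ij}-u_{n-1,j}=\tfrac{1}{d-1}\sum_{i=1}^{n-2}U_i$, an expression with no dependence on $j$. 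Therefore $z_1=z_j$, the top slice sums to $0$, and the four cases together prove the lemma. (If desired, the same formulas make it routine to confirm that all quantities introduced are $\Theta(\epsilon^{\text{power}})$ perturbations, so that $S+T\geq 0$ for small $\epsilon$, though this is not needed for the present statement.)
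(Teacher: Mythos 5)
Your proof is correct, and for the slices $T_j(1),\dots,T_j(n-1)$ it follows essentially the same route as the paper: identify the handful of non-zero entries in each slice and cancel them using the recursion for $u_{ij}$ (your telescoped form $u_{ij}=u_{i-1,j}+r_{ij}-\tfrac{U_{i-1}}{d-1}$ and the identity $\sum_{k\neq j}u_{ik}=\sum_{k\neq j}r_{ik}-u_{i-1,j}$ are a cleaner packaging of the double-sum manipulation the paper carries out explicitly). Where you genuinely diverge is the top slice $i=n$, which is indeed the only delicate case since its sum is $z_1-z_j$. The paper sidesteps computing $z_j$ altogether: it first notes that the slice sum is trivially $0$ when $j=1$, concludes that the total sum of $T$ is $0$ by adding the $j=1$ slices, and then, for general $j$, observes that $T_j(1),\dots,T_j(n)$ also partition the entries of $T$, so the already-proved vanishing of the first $n-1$ slice sums forces the last one to vanish. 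You instead prove directly that $z_j=\tfrac{1}{d-1}\sum_{i=1}^{n-2}U_i$ is independent of $j$ by telescoping the recursion. Both arguments are valid; the paper's is shorter and requires no further algebra, while yours yields an explicit closed form for $z_j$ (and for $u_{n-1,j}$) that makes the constancy of $z_j$ transparent rather than an indirect consequence. One small caution: in your $i=n-1$ case the phrase ``the same shape'' could mislead a reader into including entries $-r_{n-1,k}$, which do not exist in $T$ (family 1 stops at $i<n-1$); your final identity $\sum_{k\neq j}x_k+u_{n-2,j}=0$ shows you used the correct entry list, but it is worth stating that list explicitly.
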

\begin{proof}
We first address the case of $i = 1$. Slice $T_{j}(1)$ contains the non-zero entries $T_{1;x_{k} = n}$ for each $k \neq j$ and $T_{1;x_{k}=2}$ for each $k \neq j$. For each $k \neq j$, the entry $T_{1;x_{k}=n} = -r_{1j}$ and $T_{1;x_{k}=2} = u_{1j}$. Since $u_{1j} = -r_{1j}$ the sum of these two entries is $0$. Adding over all $k \neq j$ yields that the first slice (independent of $j$) has sum $0$.

Now consider slice $T_{j}(i)$ for $1 < i < n - 1$. The non-zero entries of $T$ in this slice are $T_{i;x_{k} = n}$ and $T_{i;x_{k}=i+1}$ for $k \neq j$ and $T_{i-1;x_{j}=i}$. Adding these entries together yields:
\small
\allowdisplaybreaks
\begin{align*}
    &\sum_{x \in T_{j}(i)} x \\
    &= \sum_{k \neq j}^{d}T_{i;x_{k}=i+1} + \sum_{k \neq j}^{d}T_{i;x_{k}=n} + T_{i-1; x_{j}=i} \\
    &= \sum_{k \neq j}^{d}\left(u_{ik}\right) + \sum_{k \neq j}^{d}\left(-r_{ik}\right) + u_{i-1,j} \\
    &= \sum_{k \neq j}^{d}\left(\frac{1}{d-1}\left(\sum_{l \neq k}^{d}{-u_{i-1,l}}\right) + \frac{d-2}{d-1}u_{i-1,k} + r_{ik}\right)\\
    &\qquad + \sum_{k \neq j}^{d} \left(-r_{ik}\right) + u_{i-1,j} \\
    &= \sum_{k \neq j}^{d}\left(\frac{1}{d-1}\left(\sum_{l \neq k}^{d}{-u_{i-1,l}}\right) + \frac{d-2}{d-1}u_{i-1,k}\right) + u_{i-1,j} \\
    &= \frac{1}{d-1}\left(\sum_{k \neq j}^{d} \sum_{l \neq k}^{d} (-u_{i-1,l}) + (d-2)\sum_{k \neq j}^{d}(u_{i-1,k})\right) + u_{i-1,j} \\
    &= \frac{1}{d-1}\left((d-1)(-u_{i-1,j})\right) + u_{i-1,j} = 0\\
\end{align*}
\normalsize
The last step follows because each of the $(d-1)$ summations (for each $k \neq j$) adds a single $-u_{i-1,j}$ term. For each $k \neq j$, $(d-2)$ of the summations (all but $l = k$ of the $(d-1)$ inner summations) add a $-u_{i-1,k}$ term.

We next address the case where $i = n-1$. In this case, we observe that $i+1 = n$ so the only entries in this slice are $T_{n-1;x_{k} = n}$ for $k \neq j$ and $T_{n-2;x_{j}=n-1}$. Adding these entries together yields:
\small
\begin{align*}
    &\sum_{x \in biT_{j}(n-1)} x \\
    &= \sum_{k \neq j}^{d}T_{n-1;x_{k}=n} + T_{n-1;x_{j}=n-1} \\
    &= \sum_{k \neq j}^{d}\left(x_{k}\right) + u_{n-2,j} \\
    &= \sum_{k \neq j}^{d}\left(u_{n-1,k} - r_{n-1,k}\right) + u_{n-2,j} \\
    &= \sum_{k \neq j}^{d}\Bigg(\Bigg(\frac{1}{d-1}\left(\sum_{l \neq k}^{d} -u_{n-2,l}\right) \\
    &\qquad + \frac{d-2}{d-1}u_{i-2,k} + r_{n-1,k}\Bigg) - r_{n-1,k}\Bigg) + u_{n-2,j} \\
    &= \sum_{k \neq j}^{d}\left(\frac{1}{d-1}\left(\sum_{l \neq k}^{d} -u_{n-2,l}\right) + \frac{d-2}{d-1}u_{i-2,k} + \right) + u_{n-2,j} \\
    &= \frac{1}{d-1}\left(\sum_{k \neq j}^{d}\sum_{l \neq k}^{d} \left(-u_{n-2,l}\right) + (d-2)\sum_{k \neq j}^{d}\left(u_{i-2,k}\right) + \right) + u_{n-2,j} \\
    &= \frac{1}{d-1}((d-1)(-u_{n-2,j})) + u_{n-2,j} = 0\\
\end{align*}

\normalsize
We last address the case where $i = n$. In this case, the non-zero entries in this slice are $T_{i; x_{j} = n}$ for each $i \in [n-2]$, $T_{n-1;x_{j} = n}$, and $T_{nn \ldots n}$. Adding these entries together yields:
\small
\allowdisplaybreaks
\begin{align*}
    \sum_{x \in T_{j}(n)} x &= \sum_{k = 1}^{n-2} T_{k;x_{j}=n} + T_{n-1;x_{j}=n} + T_{nn \ldots n} \\
    &= \sum_{k = 1}^{n-2} \left(-r_{kj}\right) + x_{j} + z_{1} \\
\end{align*}
\normalsize
If $j = 1$ then the term $\sum_{k = 1}^{n-2} \left(-r_{kj}\right) + x_{j} = -z_{1}$ so this trivially sums to $0$. If $j \neq 1$, then we first observe that the sum of all entries of $T$ is $0$ by adding the sums of the slices $\{T_{1}(1), T_{1}(2), \ldots, T_{1}(n)\}$. Since $\sum_{i = 1}^{n}\sum_{x \in T_{j}(i)} x = 0$ and $\sum_{k = 1}^{n-1} \sum_{x \in T_{j}(k)} x = 0$, subtracting the two equations gives $\sum_{x \in T_{j}(n)} = 0$.
\end{proof}

Combining Lemmas~\ref{lem:S_slices}~and~\ref{lem:T_ZeroSlices} together immediately shows that any set of aligned slices of $(S+T)$ forms an equi-partition of the non-zero entries where each set has sum $1$. We now show that these partitions are unique. For intuition, the key point in our construction is that $(d-1)$ of the $T_{i;x_{j}=i+1}$ terms all lie in the same slice $T_{j}(i)$. Once $j$ has been chosen for $T_{j}(1)$ the extra term $T_{1; x_{j}=2}$ lies in slice $T_{j}(2)$.

\begin{lemma}\label{lem:ST_Equipartitions}
The only equi-partitions of the non-zero entries of $(S+T)$ correspond to aligned slices.
\end{lemma}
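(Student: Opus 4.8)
The plan is to refine the argument of \cref{lem:S_slices} by carefully tracking the tiny, $\epsilon$-order mass contributed by $T$. First I would record the structural preliminaries: for $\epsilon$ small enough $S+T$ is entrywise non-negative, its non-zero positions are exactly those of $S$ together with the ``filler'' positions $T_{i;x_j=i+1}=u_{ij}$, and combining \cref{lem:S_1slice} and \cref{lem:T_ZeroSlices} shows that every aligned slice of $S+T$ has sum $1$. Thus aligned-slice partitions are equi-partitions, and it remains to prove they are the only ones. Since the entries $S_{ii\ldots i}$ for $1\le i<n$ and $S_{nn\ldots n}+z_1$ are each strictly larger than $\tfrac12$, any equi-partition into $n$ parts of sum $1$ must put them in $n$ distinct parts; I label by $C_i$ the part containing the layer-$i$ diagonal entry, and prove by induction on $i$ that $C_i$ is the aligned slice $S_j(i)$ along a single axis $j$ common to all layers.

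The base case would be handled at two levels of precision. At the coarse level, all $r_{ij}$ and $u_{ij}$ are powers of $\epsilon$ whose exponents are pairwise distinct and far larger than the $\epsilon$-exponent of any quantity of the form $\tfrac{1}{(d-1)d^k}$; hence the $S$-entries of $C_1$ must already sum to $1$ up to an $O(\epsilon)$ error, and matching the gap $1-S_{11\ldots1}=d^{-(n-1)}$ with the only sufficiently small $S$-fillers, namely the $d$ entries $S_{1;x_k=n}=\tfrac{1}{(d-1)d^{n-1}}$, forces exactly $d-1$ of them into $C_1$; the omitted one singles out an axis $j_1$, and the $S$-part of $C_1$ equals the $S$-part of the aligned slice $S_{j_1}(1)$. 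At the fine level, the $T$-mass sitting on those positions is $-\sum_{k\neq j_1}r_{1k}$, so $C_1$ must also collect fillers summing to $\sum_{k\neq j_1}r_{1k}=\sum_{k\neq j_1}u_{1k}$; because each $u_{ij}$ with $i\ge2$ individually exceeds this target (its $\epsilon$-exponent is strictly smaller) and because a set of distinct powers of $\epsilon$ has a unique subset-sum representation, the fillers in $C_1$ are exactly $\{T_{1;x_k=2}:k\neq j_1\}$, which are precisely the filler positions of the aligned slice $S_{j_1}(1)$. Hence $C_1=S_{j_1}(1)$.

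For the inductive step I would run the same two-level argument to conclude that the $S$-part of $C_i$ is the $S$-part of some aligned slice $S_{j_i}(i)$, and then use the ``leftover filler'' mechanism to force $j_i=j_1$: since $C_{i-1}=S_{j_1}(i-1)$ has already absorbed the fillers $u_{i-1,k}$ for $k\neq j_1$ (these lie in the slice $x_{j_1}=i-1$), the only remaining copy $u_{i-1,j_1}$ sits at position $T_{i-1;x_{j_1}=i}$, which belongs to the aligned slice $x_{j_1}=i$ and to no other aligned slice meeting layer $i$. Balancing the $\epsilon$-order sum of $C_i$ then compels $C_i$ to contain that position, and the subset-sum uniqueness argument again pins $C_i$ to be exactly $S_{j_1}(i)$. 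Once $C_1,\dots,C_{n-1}$ are aligned slices along axis $j_1$, the remaining positions form precisely $S_{j_1}(n)$, which closes the induction and shows every equi-partition of $S+T$ is a set of aligned slices $\{S_{j_1}(1),\dots,S_{j_1}(n)\}$.

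The main obstacle I expect is the fine-level bookkeeping: one must fix $\epsilon$ small enough and control the lower-order tails of the recursively defined $u_{ij}$ so that no two relevant subset sums of $\{r_{ij}\}\cup\{u_{ij}\}$ accidentally coincide, and one must verify, case by case and in the same spirit as the proof of \cref{lem:T_ZeroSlices}, that the $T$-mass on the $S$-positions of a candidate part can be cancelled by available filler positions \emph{only} when the part is an aligned slice. The recursive definition of $u_{ij}\approx r_{ij}$ is engineered precisely so that these cancellations are rigid; making the $\epsilon$-asymptotics rigorous (in particular, that the exponents $d(n-1)-di-j+1$ are spread out enough that a bounded number of them never sum to a match unless the index sets agree) is where the real work lies.
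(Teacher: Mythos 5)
Your proposal is correct and follows essentially the same route as the paper's proof: establish that aligned slices are equi-partitions by combining Lemmas~\ref{lem:S_1slice} and~\ref{lem:T_ZeroSlices}, then induct on the diagonal entries, using the coarse $S$-structure of Lemma~\ref{lem:S_slices} to pin each part to a slice and the fine $T$-perturbations---in particular the single leftover filler $u_{i-1,j}$ from the previous slice---to force all slices onto the same axis. Your added emphasis on the subset-sum rigidity of distinct $\epsilon$-powers and on controlling the lower-order tails of the recursively defined $u_{ij}$ is a fair flag of where the paper is also informal, but it does not change the argument.
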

\begin{proof}
First observe that if $\epsilon$ is sufficiently small then the structure of Lemma~\ref{lem:S_slices} still holds. Consider first the set of an equi-partition which contains $(S+T)_{11 \ldots 1}$. In order for this set to have value $1$, it must have $(d-1)$ of the entries which receive $\frac{1}{d-1} \cdot \frac{1}{d^{n-1}}$ value from $S$. Again, selecting $(d-1)$ of these uniquely characterizes a slice $(S+T)_{j}(1)$. Observe that these entries have negative perturbations from $T$ corresponding to $-r_{1k}$ for each $k \neq j$. In order for these perturbations to be offset so that the set containing $(S+T)_{11 \ldots 1}$ has sum $1$, the corresponding entries with values $u_{1k}$ for $k \neq j$ must be included in this set. These entries correspond exactly to the slice $(S+T)_{j}(1)$.

As in the proof of Lemma~\ref{lem:S_slices}, we observe that the entries on the main diagonal $(S+T)_{ii \ldots i}$ must be in separate sets. Likewise, because $\epsilon$ is small, a similar argument enforces that $(d-1)$ of the entries with value $\frac{1}{d-1} \cdot \frac{1}{d^{n-i}}$ must be included in this set. We address the negative perturbations (from $T$) of these entries by induction. Observe that the smallest perturbations from $T$ occur in the earliest slices ($(1,1) = \argmin_{i,j} r_{ij}$). Note also that $(u_{1j}, -r_{ij}$) is the only pair of perturbations from $T$ which are excluded from slice $(S+T)_{j}(1)$. A similar argument for slice $(S+T)_{j}(1)$ holds for slice $(S+T)_{j}(2)$: in order for the set containing $(S+T)_{22 \ldots 2}$ to have value $1$, it must also have $(d-1)$ entries with the $-r_{2k}$ perturbations from $T$. Since $\epsilon^{2} \ll \epsilon$, each $-r_{2k}$ entry can only be offset with entries with perturbation $u_{2k}$. (All but one of the smaller perturbation entries have already been assigned to slice $1$.) However, by Lemma~\ref{lem:T_ZeroSlices}, we see that adding together all $k \neq l$ entries of the form $-r_{2k}$ and $u_{2k}$ leaves value $u_{1l}$ (since $u_{1l} + \sum_{k \neq l}^{d} u_{2k} + \sum_{k \neq l}^{d} -r_{2k} = 0$). Observe that only a single entry of the form $u_{1l}$ is available: notably $u_{1j}$. Thus the set which contains $(S+T)_{22 \ldots 2}$ corresponds to the slice $(S+T)_{j}(2)$.

By induction on this same argument we see that each set containing the entry $(S+T)_{ii \ldots i}$ for $1 \leq i < n$ must correspond to a slice $(S+T)_{j}(i)$ for some fixed $j$ (after $j$ is chosen for the first slice). We conclude this proof by observing that the remaining entries all lie in the last slice $(S+T)_{j}(n)$ and have total sum $1$ by Lemmas~\ref{lem:S_slices}~and~\ref{lem:T_ZeroSlices}.
\end{proof}

With the details of $S$ and $T$, we are now ready to prove the main theorem. We restate it here for convenience.

\thmOptMMSCounter*
\begin{proof}
Lemma~\ref{lem:ST_Equipartitions} implies that there are $d$ unique equi-partitions of the entries of $(S+T)$. We group agents arbitrarily into $d$ groups of size at least $2$ and at most $\ceil{\frac{n}{d}}$ each. We next show how to perturb $(S+T)$ for each group so that each group has a unique equi-partition. Furthermore, we enforce that the only entry with positive perturbation for all agents is the last entry on the main diagonal.
Let $P^{j}$ be an order $d$ tensor with non-zero values as follows:
\setlist{nolistsep}
\begin{itemize}[noitemsep]
    \item For $1 \leq i < n$, $P^{j}_{i;x_{j}=n} = -\tilde{\epsilon}$
    \item $P^{j}_{nn\ldots n} = (n-1) \tilde{\epsilon}$
\end{itemize}

The entries where $P^{j}$ is negative correspond to entries where $(S+T)$ has positive value. Thus $(S+T+P^{j})$ has only non-negative entries if $\tilde{\epsilon}$ is sufficiently small. Observe that the slices $\{(S+T+P^{j})_{j}(1), (S+T+P^{j})_{j}(2), \ldots, (S+T+P^{j})_{j}(n)\}$ form an equi-partition of the entries of $(S+T+P^{j})$ as $P^{j}$ has non-zero values only in the last slice $(S+T+P^{j})_{j}(n)$ and the total sum of its non-zero entries is $0$. If $\tilde{\epsilon}$ is sufficnetly small (so that the structure of $S + T$ dominates when creating an equi-partition), then none of the other sets of slices of $(S+T+P^{j})$ forms an exact equi-partition as every slice except for the last contains exactly one of the $-\tilde{\epsilon}$ perturbations.

We now create an instance $I = \ins{N, M, V}$ where $M$ contains one good for each non-zero entry of $(S+T)$. Agents in group $j$ value the goods according to the values of $(S+T+P^{j})$. Since every agent is able to form an equi-partition of the entries of their valuation tensor, all agents have $\text{MMS}_{i} = 1$. By allocating goods according to slices, each agent receives at least $(1 - \tilde{\epsilon})$ of their MMS value. Since $\tilde{\epsilon}$ is very small, the only way all agents can receive at least $1 - \tilde{\epsilon}$ value is if the goods are allocated according to aligned slices (Lemma~\ref{lem:ST_Equipartitions}). Under this allocation, at most one group and a single additional agent who receives the last slice (since $(S+T+P^{j})_{j}(n)$ has positive perturbation $(n-1)\tilde{\epsilon}$) receive their MMS value. Lastly we observe that there only $O(dn)$ goods in $M$.
\end{proof}

Again we note that if $d = \floor{\frac{n}{2}}$ in the above theorem, then we see that only $3$ agents ($4$ if $n$ is odd) receive their full MMS in any optimal-MMS allocation.

\section{Material Omitted from Section~\ref{sec:approx}}
\subsection{An alternative algorithm for achieving $\text{MMS}^{3}$ for two agents}

\begin{algorithm}[t] \small
\SetKwInOut{Input}{Input}\SetKwInOut{Output}{Output}
    \Input{Instance $I=\ins{N, M, V}$ such that $n = |N| = 2$}
    \Output{Allocation  $A = (A_{1}, A_{2})$}
    
    $I'= \ins{N, M, V'} \leftarrow$ scaled instance such that for each $i\in N$, $v'_{i}(M)  = n +1$  \tcp*{\scriptsize{ Lemma \ref{lem:scale_invariance}}}
    
    \uIf{$\exists g\in M$ such that $v'_i(g) \geq 1$ for some $i\in N$} {
        $A_{i} \leftarrow \{g\}$\;
        $A_{j} \leftarrow M \setminus \{g\}$ for $j \neq i \in N$ \;
    }
    \Else {
        $A \leftarrow$ Run an EF1 algorithm on $I'$;
    }
\caption{Computing MMS$^{3}_{i}$ for $2$ agents}
\label{alg:n=3}
\end{algorithm}

\begin{lemma}[\citet{Bouveret2016}]\label{lem:high_value_reduction}
Given an instance $I = \ins{N, M, V}$, for any agents $i\in N$ and any good $g\in M$, we have
$\text{MMS}_{i}^{n-1}(M \setminus g) \geq \text{MMS}_{i}^{n}(M)$.
\end{lemma}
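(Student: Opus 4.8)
The plan is to start from an optimal $n$-partition for agent $i$ and merge two of its bundles after discarding the good $g$, producing a feasible $(n-1)$-partition of $M \setminus g$ whose minimum bundle value is no smaller than $\text{MMS}_i^n(M)$.

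Concretely, fix agent $i$ and let $(A_1, \ldots, A_n) \in \Pi_n(M)$ be a partition attaining $\text{MMS}_i^n(M)$, so that $v_i(A_j) \geq \text{MMS}_i^n(M)$ for every $j \in [n]$. The good $g$ lies in exactly one of these bundles; relabel so that $g \in A_n$. Now define the $(n-1)$-partition $B = (B_1, \ldots, B_{n-1})$ of $M \setminus g$ by $B_j = A_j$ for $j \leq n-2$ and $B_{n-1} = A_{n-1} \cup (A_n \setminus \{g\})$. This is a valid partition of $M \setminus g$ since the $A_j$'s partition $M$ and we have only removed $g$ and merged the last two parts.

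It remains to lower-bound $\min_{j} v_i(B_j)$. For $j \leq n-2$ we have $v_i(B_j) = v_i(A_j) \geq \text{MMS}_i^n(M)$. For the merged bundle, additivity together with non-negativity of the valuations gives $v_i(B_{n-1}) = v_i(A_{n-1}) + v_i(A_n \setminus \{g\}) \geq v_i(A_{n-1}) \geq \text{MMS}_i^n(M)$. Hence $B$ is an $(n-1)$-partition of $M \setminus g$ all of whose bundles are worth at least $\text{MMS}_i^n(M)$ to agent $i$, and therefore $\text{MMS}_i^{n-1}(M \setminus g) = \max_{B' \in \Pi_{n-1}(M\setminus g)} \min_j v_i(B'_j) \geq \min_j v_i(B_j) \geq \text{MMS}_i^n(M)$.

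There is essentially no hard step here: the only thing to be careful about is that merging does not decrease a bundle's value, which is exactly where non-negativity of $v_i$ is used, and that after deleting $g$ we still cover all of $M \setminus g$, which is handled by absorbing the remainder of $A_n$ into $A_{n-1}$.
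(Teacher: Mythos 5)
The paper states this lemma as a known result of \citet{Bouveret2016} and gives no proof of its own, so there is no in-paper argument to compare against. Your proof is correct and is the standard one: starting from a partition attaining $\text{MMS}_i^n(M)$, deleting $g$ and absorbing the remainder of its bundle into another bundle yields a feasible $(n-1)$-partition of $M\setminus g$ whose minimum value does not drop, with non-negativity of $v_i$ used exactly where you say it is.
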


\begin{theorem} \label{thm:2agents}
For $2$ agents, there exists a polynomial time algorithm that computes an allocation satisfying $\text{MMS}^{3}_{i}$ for all $i\in N$.
\end{theorem}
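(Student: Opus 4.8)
The plan is to reduce the two-agent case to an instance with exactly three goods, at which point an exhaustive case analysis (or a single EF1/round-robin argument) finishes the job. First I would invoke Lemma~\ref{lem:high_value_reduction} iteratively: as long as some agent $i$ has a good $g$ with $v_i(g) \geq \text{MMS}_i^2(M)$, hand that good to $i$ and remove it. Lemma~\ref{lem:high_value_reduction} guarantees that the other agent's $\text{MMS}^{n-1}$ on the reduced instance dominates her $\text{MMS}^2$ on the original; and of course the agent who just left has $v_i(g) \geq \text{MMS}_i^2 \geq \text{MMS}_i^3$, so she is happy. (Here I am using that $\text{MMS}^2_i \ge \text{MMS}^3_i$, which is immediate from the definition since a finer partition can only lower the min.) So without loss of generality we may assume that for both agents every single good is worth strictly less than $\text{MMS}_i^2(M)$. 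After such a reduction we are left with an instance on $\{1,2\}$ (or a single agent, in which case we give her everything) in which no good is "large."

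Next I would argue that under this no-large-good assumption, a simple EF1 allocation for the two remaining agents — computed in polynomial time, e.g.\ by round-robin — already gives each agent at least her $\text{MMS}^3$ value. The intuition is the one used in Corollary~\ref{cor:MMS4} and in \citet{aigner2019envy}: an EF1 allocation for two agents guarantees each agent at least $\text{MMS}^3$. Concretely, if $A=(A_1,A_2)$ is EF1 from $i$'s perspective, then $v_i(A_i) \geq v_i(A_{3-i}\setminus\{g\})$ for some $g$, hence $v_i(A_i) \geq \frac{1}{2}(v_i(M) - v_i(g))$. Since the instance is ordered and no good exceeds $\text{MMS}_i^2 \leq v_i(M)/2$, one shows $v_i(A_i) \geq \text{MMS}_i^3$ by comparing against the best $3$-partition: in the worst case the removed good $g$ plus the structure of the three bundles can be bounded so that $\frac{1}{2}(v_i(M)-v_i(g)) \geq \text{MMS}_i^3$. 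An alternative, perhaps cleaner, route: directly cite \citet{aigner2019envy}'s result that an EF1 allocation to two agents is an $\text{MMS}^3$ allocation, and simply observe that the reduction above preserves the needed MMS bounds via Lemma~\ref{lem:high_value_reduction}.

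Assembling the pieces: the algorithm is (i) repeatedly perform the valid high-value reduction of Lemma~\ref{lem:high_value_reduction} until no good is worth at least the current $\text{MMS}^2$ for its owner; (ii) on the remainder, run a polynomial-time EF1 algorithm for the (at most two) surviving agents; (iii) glue the allocations together. Every agent removed in step (i) receives at least $\text{MMS}^3$ of the original instance, every surviving agent receives at least $\text{MMS}^3$ of the reduced instance which by the chain of Lemma~\ref{lem:high_value_reduction} applications is at least her original $\text{MMS}^3$. Each reduction step and the EF1 computation run in polynomial time, and there are at most $m$ reductions, so the whole procedure is polynomial.

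The main obstacle I expect is step (ii): carefully proving that EF1 for two agents with no dominant good implies the $\text{MMS}^3$ guarantee, rather than just $\text{MMS}^2$. The delicate point is relating the "up to one good" slack in the EF1 bound, $\frac12(v_i(M)-v_i(g))$, to $\text{MMS}^3_i$, which requires using that the instance is ordered (so $g$ can be taken to be a largest good in the other bundle) together with the reduction invariant $v_i(g_1) < \text{MMS}_i^2 = v_i(M)/2$ — a clean way to see this is that any $3$-partition has a bundle of value at most $v_i(M)/3$, and one must show $\frac12(v_i(M) - v_i(g)) \ge v_i(M)/3$, i.e.\ $v_i(g) \le v_i(M)/3$, which may fail and hence forces a small additional case split (e.g.\ when the largest good lies strictly between $v_i(M)/3$ and $v_i(M)/2$, handle it by putting it alone). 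I would therefore expect the bulk of the writeup to be this finite case analysis, which is exactly why the theorem is stated only for $n=2$.
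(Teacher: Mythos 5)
Your plan ultimately lands on the paper's own argument: the paper scales valuations so that $v_i(M)=3$, branches on whether some agent values a single good at least $1=v_i(M)/3$ (if so, that good is given away as a singleton and the other agent is covered because $v_j(M\setminus\{g\})\geq \text{MMS}_j^2(M\setminus\{g\})\geq \text{MMS}_j^3(M)$ by Lemma~\ref{lem:high_value_reduction}), and otherwise runs any EF1 procedure, which yields $2v_i(A_i)\geq v_i(M)-v_i(g^*)>3-1=2$, hence $v_i(A_i)\geq 1\geq \text{MMS}_i^3$. The ``small additional case split'' you defer at the end is precisely this branch, and your proposed fix (give the large good alone) is the correct one.

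Two concrete problems remain in your write-up. First, step~(i) as stated tests whether $v_i(g)\geq \text{MMS}_i^2(M)$; computing $\text{MMS}_i^2(M)$ is equivalent to \textsc{Partition} and is NP-hard, so this test cannot be performed in polynomial time and the ``polynomial time algorithm'' claim fails for the algorithm as written. The repair is to use the efficiently computable threshold $v_i(M)/3$ instead of $\text{MMS}_i^2(M)$ --- at which point your step~(i) becomes exactly the paper's Case~I and your original reduction is redundant. (Also note $\text{MMS}_i^2$ is only $\leq v_i(M)/2$, not equal to it.) Second, the deferred case actually closes more cleanly than you anticipate: if the largest good $g^*$ of $A_j$ (in $i$'s eyes) satisfies $v_i(g^*)>v_i(M)/3$, then in any $3$-partition the two bundles avoiding $g^*$ share at most $v_i(M)-v_i(g^*)$, so $\text{MMS}_i^3\leq \tfrac12\bigl(v_i(M)-v_i(g^*)\bigr)\leq v_i(A_i)$ directly from EF1; no finite case analysis is needed. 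With these two adjustments your argument is correct and coincides with the paper's proof.
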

\begin{proof}
Algorithm~\ref{alg:n=3} starts by scaling the valuations in instance $I$ such that for all agents $i\in N$ and all goods in $g\in M$ we have $v'_{i}(g) = \frac{3}{v_{i}(M)} v_{i}(g)$. 
By scale invariance (Lemma \ref{lem:scale_invariance}) and since $\text{MMS}_{i} \leq \frac{v_{i}(M)}{n}$, we have $\forall i\in N$, $\text{MMS}_{i}^{3} \leq 1$. 

\paragraph{Case I.}
Consider two agents $i,j\in N$. 
Suppose agent $i$ is allocated a good $g$ with value $v'_{i}(g)\geq 1$ (Line 2), then by definition $v'_{i}(A_{i})\geq \text{MMS}_{i}^{3}$.
The remaining agent $j$ receives $A_{j} = M\setminus \{g\}$. By definition of MMS, $v_{j}(M\setminus \{g\}) \geq \text{MMS}_{j}^{2}(M\setminus\{g\})$.
By Lemma \ref{lem:high_value_reduction}, $\text{MMS}_{j}^{2}(M\setminus\{g\}) \geq \text{MMS}_{j}^{3}(M)$.
Therefore, $v_{j}(M\setminus \{g\}) \geq \text{MMS}_{j}^{3}(M)$.

\paragraph{Case II.}
Suppose that after scaling the valuations, no agent values any good greater or equal  to 1. Then, Algorithm~\ref{alg:n=3} proceeds by running an EF1 algorithm (e.g. round-robin), which results in an EF1 allocation.
For each agent $i\in N$, let $g^{*}$ be agent $i$'s highest valued good in agent $j$'s bundle. By EF1 we have $v_i(A_i) \geq v_i(A_j)-v_i(g^*)$. Adding $v_i(A_i)$ to both sides of the inequality, and since there are two agents, we get $2v_i(A_i) \geq v_i(M) - v_i(g^{*})$. Since $v_{i}(M) = 3$, $v_{i}(g^{*}) < 1$, we have that $v_i(A_i) \geq \frac{3}{2} - \frac{1}{2}$, implying that $v_i(A_i) \geq \text{MMS}^3_i$ according to Lemma~\ref{lem:scale_invariance} and the MMS bound, that is, $\text{MMS}_{i} \leq \frac{v_{i}(M)}{n}$.
\end{proof}

\subsection{Proof of \cref{lem:normalize}}

\lemnormalize*
\begin{proof}
Given an instance $I = \ins{N, M, V}$, we can compute a normalized instance $I' = \ins{N', M', V'}$ as follows: first, order the instance $I$ according to \cref{lem:order}. Call this instance $\hat{I}= \ins{\hat{N}, \hat{M}, \hat{V}}$ such that $\hat{n} = \hat{N}$ and $\hat{m} = \hat{M}$. Then, in each iteration by \Cref{lem:scale_invariance}, we scale the instance such that $\bar{v}_{i}(\hat{M}) = \bar{n}$. Then, apply one of the valid reductions among $\{g_{1}\}$ or $\{g_{\hat{n}}, g_{\hat{n}+1}\}$. An agent receives a bundle according to the valid reduction (\cref{lem:n_goods_reduction}). We then update the instance $\hat{I}$ by removing the allocated agent and their bundle and relabel goods from $g_{1}$ to $g_{\hat{m}}$. The procedure stops once there are no valid reductions remaining.

Let $k \leq n$ be the number of agents who were allocated and removed during the previous phase. 
By the definition of a valid reduction, the MMS of each agent only weakly increases after any reduction. Thus each agent who receives a bundle during the reduction phase receives at least their $\text{MMS}_{i}^{n}(M)$.
Any $\MMS{\alpha}{\beta}$ allocation on the resulting normalized instance $I'$ guarantees that $\floor{\alpha n'}$ agents receive at least $\beta \text{MMS}_{i}^{n'} \geq \beta \text{MMS}_{i}^{n}$ because $n' \leq n$. Combining this with the fact that $\floor{\alpha (n-k)} + k \geq \floor{\alpha n}$ implies that at least $\floor{\alpha n}$ agents receive $\beta \text{MMS}_{i}^{n}$ in the ordered instance corresponding to $I$. By \cref{lem:order} when we unorder the allocation, the resulting allocation for $I$ satisfies $\MMS{\alpha}{\beta}$.
\end{proof}

\subsection{Proof of \cref{thm:GeneralN}}
We first show that a modification to the envy-graph algorithm~\citep{lipton2004approximately} provides a bounded MMS guarantee to $n-1$ on normalized instances.

\paragraph{The modified envy-graph algorithm.}
The modified envy-graph algorithm (due to \citet{barman2017approximation}) proceeds as follows: given a normalized instance, in each round the highest valued good (breaking ties arbitrarily) among the remaining goods is assigned to an agent that is not envied by any agent. One such agent is guaranteed to exist by resolving cyclic envy relations in the envy-graph.

\begin{lemma}[\citet{barman2017approximation}]\label{lem:envy_graph_efx}
For any ordered instance, the modified envy-graph algorithm guarantees EFX.
\end{lemma}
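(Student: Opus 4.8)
The plan is to show, by induction on the rounds of the algorithm, that the partial allocation maintained after every step is EFX; since the last round allocates the final good, this yields an EFX allocation at termination. The base case is the empty allocation, which is vacuously EFX (no bundle is nonempty). There are two kinds of steps to analyze: (i) eliminating a directed cycle in the envy graph by cyclically rotating the bundles along it, and (ii) assigning the currently highest-valued unallocated good to a \emph{source} of the envy graph, i.e.\ an agent that no one envies (such an agent exists because, once all cycles have been eliminated, the envy graph is a DAG). Cycle elimination terminates because there are finitely many allocations and each rotation strictly increases the social welfare $\sum_i v_i(A_i)$ (every agent on the cycle moves to a strictly more valued bundle).

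For the cycle-elimination step, suppose $i_1 \to i_2 \to \cdots \to i_t \to i_1$ is an envy cycle (arrows denote envy) and bundle $A_{i_{s+1}}$ is reassigned to $i_s$ for each $s$ (indices modulo $t$). Agents outside the cycle keep their own bundle and see exactly the same multiset of other bundles, so their EFX guarantees are unchanged. An agent $i_s$ on the cycle now holds a bundle it strictly preferred, so $v_{i_s}$ of its bundle strictly increases; moreover, the multiset of ``other'' bundles it faces differs from before only by having $A_{i_s}$ in place of $A_{i_{s+1}}$. For the new bundle $A_{i_s}$ in that multiset, removing any single good still leaves value at most $v_{i_s}(A_{i_s}) < v_{i_s}(A_{i_{s+1}})$ by nonnegativity; for every other bundle $B$ in the multiset, the pre-rotation EFX property already gives $v_{i_s}(A_{i_s}) \ge v_{i_s}(B \setminus \{g\})$ for all $g \in B$, and $i_s$'s new value is even larger. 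Hence EFX is preserved.

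For the good-assignment step, let $g$ be the good assigned this round and $i$ the source that receives it, so $A_i$ becomes $A_i' = A_i \cup \{g\}$ while all other bundles are unchanged. Agent $i$'s value only increases, so its EFX constraints toward the (unchanged) other bundles still hold. For any other agent $\ell$, its constraints toward agents other than $i$ are unchanged, so it suffices to verify $v_\ell(A_\ell) \ge v_\ell(A_i' \setminus \{h\})$ for every $h \in A_i'$. If $h = g$, then $A_i' \setminus \{h\} = A_i$ and, since $i$ is a source, $v_\ell(A_\ell) \ge v_\ell(A_i)$. If $h \in A_i$, then $h$ was allocated in an earlier round, so in the common non-increasing order of the ordered instance $h$ does not come after $g$; hence $v_\ell(h) \ge v_\ell(g)$, and $v_\ell(A_i' \setminus \{h\}) = v_\ell(A_i) - v_\ell(h) + v_\ell(g) \le v_\ell(A_i) \le v_\ell(A_\ell)$. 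In all cases the inequality holds, so EFX is preserved.

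The crux is the good-assignment step, and in particular the use of the ordered-instance hypothesis: every already-allocated good is worth at least as much, to every agent, as the good currently being added, so enlarging a source's bundle cannot violate anyone's EFX constraint. Without this property the same argument only yields EF1 (one loses the bound $v_\ell(h)\ge v_\ell(g)$). The remaining ingredients---existence of a source after cycle elimination, termination of cycle elimination, and the vacuous base case---are routine.
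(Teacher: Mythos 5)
Your proof is correct. Note that the paper does not prove this lemma itself---it is stated as an imported result of \citet{barman2017approximation}---and your argument (induction over rounds, with cycle resolution preserving EFX and the key step being that adding the currently largest remaining good to an unenvied source cannot break anyone's EFX constraint, since every previously placed good is weakly more valuable to every agent in an ordered instance) is exactly the standard argument from that reference.
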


\begin{restatable}{lemma}{lemefxgeneraln}\label{lem:efx_generaln}
Given a normalized instance, the modified envy-graph algorithm satisfies $\MMS{\frac{n-1}{n}}{\frac{1}{2}(\frac{n+2}{n-1})}$.
\end{restatable}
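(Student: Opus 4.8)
Given a normalized instance, the modified envy-graph algorithm satisfies $\MMS{\frac{n-1}{n}}{\frac{1}{2}(\frac{n+2}{n-1})}$.

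The plan is to use that, since the instance is normalized, $\text{MMS}_i^{n}\le v_i(M)/n=1$ for every agent, so it suffices to prove that each of the $n-1$ retained agents $i$ receives $v_i(A_i)\ge \tfrac12\big(\tfrac{n+2}{n-1}\big)$; together with $\text{MMS}_i^n\le 1$ this gives $v_i(A_i)\ge \tfrac12\big(\tfrac{n+2}{n-1}\big)\text{MMS}_i^n$, and since exactly $n-1=\lfloor\tfrac{n-1}{n}\,n\rfloor$ agents are retained, this is precisely $\MMS{\frac{n-1}{n}}{\frac{1}{2}(\frac{n+2}{n-1})}$. After deleting one arbitrary agent, we run the modified envy-graph algorithm on the remaining $n-1$ agents; by \cref{lem:envy_graph_efx} the output $A$ is EFX. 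First I would extract the structural facts forced by running this procedure on a \emph{normalized} ordered instance. Because $v_i(M)=n$ while every good is worth less than $1$, each agent values each of $g_1,\dots,g_{n-1}$ strictly positively; since an empty-bundle agent is never envied (hence always a source) whereas, by this positivity, any one-good agent is envied by every remaining empty agent, the first $n-1$ rounds must hand $g_1,\dots,g_{n-1}$ to the $n-1$ agents one apiece. Consequently no bundle is ever empty, every bundle carries exactly one ``head'' good from $\{g_1,\dots,g_{n-1}\}$, a bundle of size $\ge 2$ also carries a ``tail'' good of index $\ge n$, so the minimum-value good of a non-singleton bundle has index $\ge n$; as at most one bundle can contain $g_n$, the minimum-value good of every non-singleton bundle but one has index $\ge n+1$ and hence value $<\tfrac12$ (using $v_i(\{g_n,g_{n+1}\})<1$ and $v_i(g_n)\ge v_i(g_{n+1})$).

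Next I would set up the counting. Fix a retained agent $i$, write $x=v_i(A_i)$ and $\mu_{ij}=\min_{g\in A_j}v_i(g)$. For each other bundle $A_j$, bound $v_i(A_j)\le x$ if $v_i(A_j)\le x$ and $v_i(A_j)\le x+\mu_{ij}$ (EFX) otherwise; summing over the $n-2$ bundles and using $\sum_j v_i(A_j)=n$ yields $(n-1)x\ge n-\sum_{j:\,v_i(A_j)>x}\mu_{ij}$. It is essential to discard the slack inequalities here — a small instance with one large-valued singleton bundle shows that averaging over \emph{all} other bundles is too lossy. So it suffices to prove $\sum_{j:\,v_i(A_j)>x}\mu_{ij}\le\frac{n-2}{2}$. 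A sometimes cleaner variant compares $A_i$ against the maximum-value bundle $A_{j^{*}}$ for $i$ (which is never a singleton, since a singleton head is worth $<1<n/(n-1)\le v_i(A_{j^{*}})$, and if $A_{j^{*}}=A_i$ we are already done as $n/(n-1)\ge\frac{n+2}{2(n-1)}$): EFX gives $x\ge v_i(A_{j^{*}})-\mu_{i j^{*}}\ge \frac{n}{n-1}-v_i(g_n)$, and also $x\ge\tfrac12 v_i(A_{j^{*}})\ge\frac{n}{2(n-1)}$.

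The analysis then splits on the size of $v_i(g_n)$. When $v_i(g_n)\le\frac{n-2}{2(n-1)}$, the bound $x\ge\frac{n}{n-1}-v_i(g_n)\ge\frac{n+2}{2(n-1)}$ already finishes it (and when $A_{j^{*}}$ has at least three goods one gets the stronger $x\ge\tfrac23 v_i(A_{j^{*}})\ge\frac{2n}{3(n-1)}$, enough for $n\ge 6$, the cases $n\le 5$ being covered by \cref{prop:computing_n=3}). The delicate case is $v_i(g_n)>\frac{n-2}{2(n-1)}$: then $g_1,\dots,g_n$ are all ``big'', only one bundle can ever accumulate two big goods, and since $v_i(M)=n$ with every good $<1$ the tail value $\sum_{k>n}v_i(g_k)$ is small — I would use these facts to argue that either the envied non-singleton bundles still contribute at most $\tfrac12$ each to $\sum\mu_{ij}$, or $A_i$ directly inherits two big goods' worth of value from the maximum bundle's structure. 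The worst sub-scenario is that many bundles are large-valued \emph{singletons} $\{g_a\}$ with $a\le n-1$, against which $i$'s own EFX inequality is vacuous; here one must invoke EFX from the viewpoint of the \emph{holders} of those singletons — comparing $\{g_a\}$ against every bundle that received tail goods — which together with $v_i(g_1)<1$ and $v_i(M)=n$ forbids too much value (equivalently, too many near-$1$ goods) from being trapped in singleton bundles; feeding the resulting bound on the number and total value of singleton bundles back into the count closes $\sum_{j:\,v_i(A_j)>x}\mu_{ij}\le\frac{n-2}{2}$.

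I expect this singleton bookkeeping to be the main obstacle: showing that, on a normalized instance, the EFX allocation produced by the modified envy-graph algorithm cannot trap a large share of the value in small singleton bundles. This is exactly the step that forces all three normalization conditions ($v_i(M)=n$, $v_i(g_1)<1$, $v_i(\{g_n,g_{n+1}\})<1$) to be used together, and it is why the guaranteed fraction $\beta=\tfrac12\big(\tfrac{n+2}{n-1}\big)$ degrades with $n$. Once $\sum_{j:\,v_i(A_j)>x}\mu_{ij}\le\frac{n-2}{2}$ is established, $(n-1)x\ge n-\tfrac{n-2}{2}=\tfrac{n+2}{2}$ gives $x\ge\frac{n+2}{2(n-1)}$ for every retained agent, which completes the proof.
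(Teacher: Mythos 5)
You set up the right preliminaries and the right averaging identity, but the proof has a genuine gap exactly where you flag it: the bound $\sum_{j:\,v_i(A_j)>x}\mu_{ij}\le\frac{n-2}{2}$ is never established, and with your accounting it is actually in jeopardy. An envied \emph{singleton} bundle $A_j=\{g_a\}$ with $a\le n-1$ contributes $\mu_{ij}=v_i(g_a)$, which normalization only caps strictly below $1$ (not below $\tfrac12$), so two or three such bundles can already exceed $\frac{n-2}{2}$; the EFX inequality is vacuous against them, and your proposed repair --- invoking EFX from the singleton holders' viewpoints to limit how much value is trapped in singletons --- is only sketched, not carried out, and is not obviously sound (the algorithm can legitimately leave several agents holding single near-unit goods that agent $i$ also values near $1$ while $i$'s own bundle is worth less). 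The side remarks are also shaky: the claim that $A_{j^*}$ is never a singleton presumes $v_i(A_{j^*})\ge n/(n-1)$, which need not hold for the maximum envied bundle, and \cref{prop:computing_n=3} does not cover the small-$n$ cases of this lemma.

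The paper closes this hole by never applying EFX to singleton bundles at all. Letting $S$ be the set of agents whose bundle is a singleton and $s=|S|$, it bounds $v_i\bigl(\bigcup_{j\in S}A_j\bigr)\le s$ directly (each singleton is worth less than $1$ by $v_i(g_1)<1$), and sums the EFX inequalities only over the $n-1-s$ non-singleton bundles. For those, the removed minimum-value goods $\hat g_j$ are \emph{distinct tail goods of index at least $n$} (each bundle holds exactly one head good from $\{g_1,\dots,g_{n-1}\}$), so at most two of them together are worth less than $1$ (by $v_i(\{g_n,g_{n+1}\})<1$) and every other one is worth less than $\tfrac12$. This yields $(n-1-s)\,v_i(A_i)\ge (n-s)-\tfrac12(n-s-2)$, i.e., $v_i(A_i)\ge\tfrac12\cdot\frac{(n-s)+2}{(n-s)-1}$, which is increasing in $s$ and hence at least $\tfrac12\cdot\frac{n+2}{n-1}$; combined with $\text{MMS}_i^n\le 1$ this finishes the proof. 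The moral difference is that removing a singleton costs at most $1$ unit of value but also removes one bundle from the denominator, a trade that only helps --- so no bookkeeping over singleton holders is needed. To repair your write-up, replace the sum over envied bundles by this singleton/non-singleton split; the rest of your argument (the head/tail structure and the $\{g_n,g_{n+1}\}$ bound) is exactly what the paper uses.
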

\begin{proof}
We first show that no agent receives an empty bundle when $m\geq n$. 
Let $A$ be the allocation produced by the modified envy graph algorithm agents in $N'\subset N$ such that $|N'| = n-1$.
Consider the first $n-1$ goods in an ordered instance. In each of the first $n-1$ rounds of the envy graph algorithm, there exists at least one unenvied agent.
By assumption, we know that $v_{i}(g_{1}) < 1$, $v_{i}(M) = n$. This implies that $\forall i\in N, v_{i}(g_{n}) > 0$; otherwise there exists at least one good $g$ with value $v_{i}(g) \geq \frac{n}{n-1} > 1$.
Since for all $i\in N'$, $v_{i}(g_{n}) > 0$ and the fact that the instance is ordered, all agents also have positive value for all goods ordered before $g_{n}$, i.e., for all $i\in N$, $v_{i}(g_1) \geq v_{i}(g_2) \geq \ldots \geq v_{i}(g_n) > 0$.
Thus any agent who has not yet received any good envies each agent who has already received a good. 
Since the algorithm does not allocate any good to an envied agent, no agent is allocated a second good until all bundles are allocated at least one good, and at most one bundle can receive two goods from $\{g_{1}, \ldots, g_{n}\}$. Therefore, all bundles are nonempty after allocating the first $n$ good, i.e., $\forall i\in N'$, $|A_{i}| > 0$.\footnote{Indeed, two agents may exchange bundles through cycle resolution when the $n$th good is allocated.}

Next, we must show that each agent's bundle is valued $v_{i}(A_{i}) \geq \frac{1}{2}(\frac{n+2}{n-1})$ once all goods are allocated.

Let $S$ be the set of agents who receive a single good in the allocation $A$ and let $s = |S|$. Notice that $s< n$, since one agent does not participate in the algorithm ($|N'| = n-1$).
By Lemma~\ref{lem:envy_graph_efx}, $A$ satisfies EFX, i.e., for all $j \in [n-1]$, $v_{i}(A_{i}) \geq v_{i}(A_{j} \setminus \hat{g_{j}})$ for any good $\hat{g_{j}} \in A_{j}$.
By summing over the set of agents who received at least two goods we have,
\[
((n - 1) - s)v_{i}(A_{i}) \geq v_{i}(M \setminus (\bigcup_{j \in S} A_{j})) - \sum_{\substack{j \neq i\\j \notin S}}^{n-1}v_{i}(\hat{g_{j}}).
\]
\noindent Since $v_{i}(M) = n$ and $v_{i}(\bigcup_{j \in S} A_{j}) \leq s$, we can write:
\[
((n - 1) - s)v_{i}(A_{i}) \geq (n - s) - \sum_{\substack{j \neq i\\j \notin S}}^{n-1}v_{i}(\hat{g_{j}}) \\
\]
\noindent The $\hat{g}_{j}$ are distinct and each worth at most $v_{i}(g_{n})$. Thus the two with highest value among them are worth at most $v_{i}(\{g_{n}, g_{n+1}\}) < 1$. Pulling these items out of the sum, and upper-bounding the remaining goods $v_{i}(\hat{g}_{j}) \leq v_{i}(g_{n+2}) \leq v_{i}(g_{n+1}) < \frac{1}{2}$, we can write:

\allowdisplaybreaks
\begin{align*}
    ((n - 1) - s)v_{i}(A_{i}) &\geq (n - s) - 1 - \frac{1}{2}(n - s - 4) \\
    ((n - 1) - s)v_{i}(A_{i}) &\geq (n - s) - \frac{1}{2}(n - s - 2) \\
    v_{i}(A_{i}) &\geq \frac{n - s}{n - s - 1} - \frac{1}{2}(\frac{n - s - 2}{n - s - 1}) \\
    v_{i}(A_{i}) &\geq \frac{1}{2}(\frac{n - s + 2}{n - s - 1}) \\
    v_{i}(A_{i}) &\geq \frac{1}{2}(\frac{n + 2}{n - 1}).
\end{align*}
Notice that the last step holds because $s < n$.
\end{proof}

\thmGeneralN*
\begin{proof}
Consider an instance of the problem $I$ with $n$ agents. We begin by computing a normalized instance $I'$ from $I$ following the procedure of Lemma~\ref{lem:normalize}. On the normalized instance, we compute an $\MMS{\frac{n'-1}{n'}}{\frac{1}{2}(\frac{n'+2}{n'-1})}$ allocation (Lemma~\ref{lem:efx_generaln}). We now prove that the resulting allocation satisfies $\MMS{\frac{n-1}{n}}{\frac{1}{2}(\frac{n+2}{n-1})}$.

The key observation is that the value of $\beta$ is dependent on $n$ and is monotonic decreasing: $\frac{n' + 2}{n' - 1} \geq \frac{n + 2}{n - 1}$ when $n' \leq n$ and that $\frac{n'-1}{n'} n' + (n - n') = n - 1$. Thus each agent who is allocated a bundle in the reduced instance $I'$ receives at least $\text{MMS}_{i}^{n}$. Furthermore, each agent allocated a bundle during the reductions of Lemma~\ref{lem:normalize} receives their $\text{MMS}_{i}^{n} \geq \text{MMS}_{i}^{n}$. In total $n-1$ agents receive at least $\frac{1}{2}(\frac{n+2}{n-1})\text{MMS}_{i}^{n}$. After unordering, by \cref{lem:order} the resulting allocation satisfies $\MMS{\frac{n-1}{n}}{\frac{1}{2}(\frac{n+2}{n-1})}$.
\end{proof}

\section{Material Omitted from Section~\ref{sec:twothirdMMS}}
\subsection{Computing $\MMS{\frac{1}{2}}{1}$ allocations}

There are several plausible ways to achieve MMS for half of the agents, that is, $\MMS{\frac{1}{2}}{1}$.
We first provide a simple algorithm relying on EF1, and then show a generalization that enables us to exploit the already known MMS approximation guarantees.

\begin{proposition} \label{prop:EF1MMS}
If $\forall i \in N, \forall g \in M, v_{i}(g) < \frac{v_{i}(M)}{n}$, then any EF1 algorithm on half of the agents guarantees $\MMS{\frac{1}{2}}{1}$.
\end{proposition}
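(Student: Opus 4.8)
The plan is to run an EF1 algorithm (for instance round-robin) on the chosen set $N'$ of $\floor{n/2}$ agents over the entire good set $M$, and then show directly that the EF1 guarantee plus the ``small goods'' assumption forces every agent in $N'$ to meet its $\text{MMS}_i^n$ bound. First I would fix $i \in N'$ and let $A$ be the EF1 allocation restricted to $N'$; since $|N'| = \floor{n/2}$, the bundles $\{A_j\}_{j \in N'}$ partition all of $M$. For each $j \neq i$ in $N'$, the EF1 condition gives a good $g_j \in A_j$ with $v_i(A_i) \geq v_i(A_j \setminus \{g_j\}) = v_i(A_j) - v_i(g_j)$.

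The key computation is to sum this inequality over all $j \in N' \setminus \{i\}$ and add $v_i(A_i)$ to both sides. There are $\floor{n/2}$ agents in $N'$, so we get
\[
\floor{\tfrac{n}{2}}\, v_i(A_i) \;\geq\; \sum_{j \in N'} v_i(A_j) \;-\; \sum_{j \in N' \setminus \{i\}} v_i(g_j) \;=\; v_i(M) \;-\; \sum_{j \in N'\setminus\{i\}} v_i(g_j).
\]
There are $\floor{n/2} - 1$ terms in the subtracted sum, and by hypothesis each $v_i(g_j) < v_i(M)/n$, so $\sum_{j \in N'\setminus\{i\}} v_i(g_j) < (\floor{n/2} - 1)\, v_i(M)/n$. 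Plugging this in yields
\[
\floor{\tfrac{n}{2}}\, v_i(A_i) \;>\; v_i(M) - \bigl(\floor{\tfrac{n}{2}} - 1\bigr)\tfrac{v_i(M)}{n} \;=\; \tfrac{v_i(M)}{n}\Bigl(n - \floor{\tfrac{n}{2}} + 1\Bigr) \;\geq\; \tfrac{v_i(M)}{n}\cdot \floor{\tfrac{n}{2}},
\]
where the last inequality uses $n - \floor{n/2} + 1 \geq \floor{n/2}$ (which holds since $n - \floor{n/2} = \ceil{n/2} \geq \floor{n/2}$). Dividing by $\floor{n/2}$ gives $v_i(A_i) > v_i(M)/n \geq \text{MMS}_i^n$, using the standard bound $\text{MMS}_i^n \leq v_i(M)/n$ noted in the preliminaries. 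Since $i \in N'$ was arbitrary and $|N'| = \floor{n/2} = \floor{\tfrac12 n}$, the allocation guarantees $\MMS{\frac12}{1}$.

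I do not expect a serious obstacle here — the argument is a routine averaging over the EF1 inequalities, of the same flavor as the two-agent Case~II in the proof of Theorem~\ref{thm:2agents}. The only point requiring a little care is the arithmetic with the floor function: one must check that $n - \floor{n/2} + 1 \geq \floor{n/2}$ holds for all $n$ (both parities), which it does with room to spare, and that the strict inequality from the ``$<$'' hypothesis is preserved through the chain so the final bound $v_i(A_i) \geq \text{MMS}_i^n$ is clean. One should also remember to invoke Lemma~\ref{lem:order} only if one first passes to an ordered instance; since round-robin/EF1 works directly on the original instance, that step is not even needed here.
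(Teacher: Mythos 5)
Your proof is correct and follows essentially the same route as the paper's: sum the EF1 inequalities over the agents in $N'$, use that the bundles partition $M$, bound each removed good by $v_i(M)/n$ via the hypothesis, and conclude $v_i(A_i) \geq v_i(M)/n \geq \text{MMS}_i^n$. The only cosmetic difference is that you sum over $j \neq i$ (giving $\floor{n/2}-1$ subtracted terms and a strict inequality) whereas the paper sums over all $\floor{n/2}$ indices including $i$; both versions close the argument.
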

\begin{proof}
Consider an EF1 allocation $A$ that allocates goods to the half of the agents.
This implies that $v_{i}(A_{i}) \geq v_{i}(A_{j}) - v_{i}(g_{j})$ for some good $g_{j} \in A_{j}$. As the EF1 allocation is over $\floor{\frac{n}{2}}$ agents, the union over all bundles is $M$. Thus
\allowdisplaybreaks
\begin{align*}
    \sum_{j = 1}^{\floor{\frac{n}{2}}} v_{i}(A_{i}) &\geq \sum_{j = 1}^{\floor{\frac{n}{2}}} \left(v_{i}(A_{j}) - v_{i}(g_{j})\right) \\
    \floor{\frac{n}{2}}v_{i}(A_{i}) &\geq v_{i}(M) - \sum_{j = 1}^{\floor{\frac{n}{2}}} v_{i}(g_{j}) \\
    \floor{\frac{n}{2}}v_{i}(A_{i}) &\geq v_{i}(M) - \sum_{j = 1}^{\floor{\frac{n}{2}}} \frac{v_{i}(M)}{n} \\
    \floor{\frac{n}{2}}v_{i}(A_{i}) &\geq v_{i}(M) - \floor{\frac{n}{2}} \frac{v_{i}(M)}{n} \\
    v_{i}(A_{i}) &\geq \frac{1}{\floor{\frac{n}{2}}}\left(v_{i}(M) - \floor{\frac{n}{2}}\frac{v_{i}(M)}{n}\right) \\
    v_{i}(A_{i}) &\geq \frac{v_{i}(M)}{\floor{\frac{n}{2}}} - \frac{v_{i}(M)}{n} \\
    v_{i}(A_{i}) &\geq \frac{v_{i}(M)}{\frac{n}{2}} - \frac{v_{i}(M)}{n} \\
    v_{i}(A_{i}) &\geq \frac{2v_{i}(M)}{n} - \frac{v_{i}(M)}{n} \\
    v_{i}(A_{i}) &\geq \frac{v_{i}(M)}{n} \\
    v_{i}(A_{i}) &\geq \text{MMS}_{i}^{n}
\end{align*}
\end{proof}

The immediate consequence of Proposition~\ref{prop:EF1MMS} is an intuitive algorithm for satisfying $\MMS{\frac{1}{2}}{1}$: apply valid reductions according to Lemma~\ref{lem:n_goods_reduction} followed by applying any EF1 algorithm (e.g. Round Robin or Maximum Nash welfare) on half of the remaining agents in the reduced instance.

Next we show that running any $\beta.\text{MMS}$ approximation algorithm with $\beta \geq \frac{1}{2}$ on half of the agents is sufficient to guarantee $\MMS{\frac{1}{2}}{1}$.

\begin{proposition}
Any algorithm satisfying at least $\MMS{1}{\frac{1}{2}}$ also guarantees $\MMS{\frac{1}{2}}{1}$ given half of the agents.
\end{proposition}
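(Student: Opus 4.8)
The plan is to reduce $\MMS{\frac{1}{2}}{1}$ on the given instance to a single invocation of the assumed $\MMS{1}{\frac{1}{2}}$ algorithm on a sub-instance with half as many agents. Fix an arbitrary subset $N' \subseteq N$ with $|N'| = \floor{\frac{n}{2}}$ and form the instance $\ins{N', M, V}$. Running the assumed algorithm on this instance produces an allocation $A$ of all of $M$ among the agents of $N'$ with $v_i(A_i) \geq \frac{1}{2}\,\text{MMS}_i^{\floor{n/2}}(M)$ for every $i \in N'$. Hence it suffices to establish the ``merging'' bound $\text{MMS}_i^{\floor{n/2}}(M) \geq 2\,\text{MMS}_i^{n}(M)$: combined with the displayed inequality it gives $v_i(A_i) \geq \text{MMS}_i^{n}(M)$ for all $\floor{\frac{n}{2}}$ agents of $N'$, which is exactly $\MMS{\frac{1}{2}}{1}$.

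For the merging bound I would start from an $n$-partition $(P_1, \ldots, P_n)$ of $M$ witnessing $\text{MMS}_i^{n}(M)$, so that $v_i(P_j) \geq \text{MMS}_i^{n}(M)$ for every $j$. Pair the bundles into $\floor{\frac{n}{2}}$ groups by setting $Q_t := P_{2t-1} \cup P_{2t}$ for $t = 1, \ldots, \floor{\frac{n}{2}}$, and if $n$ is odd append the leftover bundle $P_n$ to $Q_1$. By additivity each $Q_t$ satisfies $v_i(Q_t) \geq 2\,\text{MMS}_i^{n}(M)$, so $(Q_1, \ldots, Q_{\floor{n/2}})$ is a $\floor{\frac{n}{2}}$-partition of $M$ all of whose parts are worth at least $2\,\text{MMS}_i^{n}(M)$; taking the maximum over all such partitions yields $\text{MMS}_i^{\floor{n/2}}(M) \geq 2\,\text{MMS}_i^{n}(M)$.

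Since $N'$ was an arbitrary subset of size $\floor{\frac{n}{2}}$, the construction works for every such subset, so the guarantee holds in the strong (every-subset) sense of the $\MMS{\alpha}{\beta}$ definition. This argument is essentially immediate; the only points needing the slightest care are the odd-$n$ case of the merging bound (handled above by absorbing the leftover bundle) and the observation that the $\MMS{1}{\frac{1}{2}}$ algorithm must be invoked on the $\floor{\frac{n}{2}}$-agent sub-instance, so that the relevant maximin share is $\text{MMS}_i^{\floor{n/2}}$ rather than $\text{MMS}_i^{n}$. Concretely one may instantiate the algorithm with any known $\MMS{1}{\frac{2}{3}}$ or $\MMS{1}{\frac{3}{4}}$ procedure \citep{garg2018approximating,ghodsi2018fair}, since each of these is in particular an $\MMS{1}{\frac{1}{2}}$ algorithm.
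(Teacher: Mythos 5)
Your proposal is correct and follows essentially the same route as the paper: run the $\MMS{1}{\frac{1}{2}}$ algorithm on a $\floor{\frac{n}{2}}$-agent sub-instance and combine it with the merging bound $\text{MMS}_i^{\floor{n/2}}(M) \geq 2\,\text{MMS}_i^{n}(M)$ obtained by pairing the bundles of an $\text{MMS}_i^{n}$ partition. Your treatment is in fact slightly more careful than the paper's, since you explicitly handle the leftover bundle when $n$ is odd.
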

\begin{proof}
Consider an algorithm that guarantees $\MMS{1}{\beta}$ where $\beta \geq \frac{1}{2}$.
In any $\text{MMS}_{i}^{n}$ partition each bundle in the partition has value at least $\text{MMS}_{i}^{n}$. If these $n$ bundles are paired together to form $\floor{\frac{n}{2}}$ bundles, the resulting allocation forms a lower bound for $\text{MMS}_{i}^{\floor{n/2}}$. Each of these $\floor{n/2}$ bundles has value at least $2\text{MMS}_{i}^{n}$ which implies that $\text{MMS}_{i}^{\floor{n/2}}(M) \geq 2 \text{MMS}_{i}^{n}(M)$. When this algorithm is run on half of the agents, those agents are guaranteed $\beta \text{MMS}_{i}^{\floor{n/2}} \geq \frac{1}{2} \text{MMS}_{i}^{\floor{n/2}}(M) \geq \frac{1}{2} 2\text{MMS}_{i}^{n}(M)$. Thus, the resulting allocation satisfies $\MMS{\frac{1}{2}}{1}$.
\end{proof}

\begin{remark}
An EF1 algorithm alone is not sufficient for computing $\MMS{1}{\frac{1}{2}}$, and thus, cannot be directly used for satisfying $\MMS{\frac{1}{2}}{1}$.
Consider $3$ agents with identical valuations over $7$ goods:

\begin{center}
\begin{tabular}{c | c c c c c c c}
                & $g_{1}$ & $g_{2}$ & $g_{3}$ & $g_{4}$ & $g_{5}$ & $g_{6}$ & $g_{7}$ \\\hline
    $v_{i}(g_{j})$  & 0.99 & 0.99 & 0.4 & 0.4 & 0.2 & 0.01 & 0.01
\end{tabular}
\end{center}

The allocation $A = (\{g_{1}, g_{4}, g_{7}\}, \{g_{2}, g_{5}\}, \{g_{3}, g_{6}\})$ is EF1 but not $\MMS{1}{\frac{1}{2}}$.
\end{remark}

\subsection{Proof $\MMS{\frac{2}{3}}{1}$ Existence for $n > 4$}

\begin{lemma}[\citet{aigner2019envy}] \label{lem:EFMatching}
If $|\mathcal{N}(X)| \geq |X|$, then a non-empty envy-free matching exists and can be found in polynomial time. 
\end{lemma}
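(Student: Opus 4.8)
The plan is to produce the desired matching as a carefully chosen sub-matching of a maximum-cardinality matching, via the standard alternating-path analysis. Write the bipartite graph as $G = (X \cup Y, E)$, with $Y$ the set of bundles, so that $\mathcal{N}(X) \subseteq Y$; we may assume $X \neq \emptyset$. First I would compute a maximum matching $\mu$ of $G$ and let $U \subseteq X$ be the set of agents that $\mu$ leaves unmatched. If $U = \emptyset$, then $\mu$ matches every agent, so it is vacuously envy-free and nonempty (as $|X| \geq 1$), and we are done. So assume $U \neq \emptyset$.

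Next I would compute, by one graph search, the set $Z$ of all vertices reachable from $U$ along $\mu$-alternating paths that begin with a non-matching edge, and put $X_Z = Z \cap X$ and $Y_Z = Z \cap Y$. Maximality of $\mu$ yields the familiar facts: (i) $\mathcal{N}(X_Z) \subseteq Y_Z$, since any neighbour of a reachable agent is itself reachable; and (ii) every bundle of $Y_Z$ is $\mu$-matched, with its partner in $X_Z$ — an unmatched bundle in $Y_Z$ would complete an augmenting path, contradicting maximality. Together these give a bijection between $Y_Z$ and the matched agents of $X_Z$, so $|Y_Z| = |X_Z| - |U| < |X_Z|$. Symmetrically, if $a \in X \setminus X_Z$ is matched by $\mu$ to $b$, then $b \notin Y_Z$: otherwise the alternating path reaching $b$ extends along the matching edge $ba$ and would place $a$ in $X_Z$.

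The matching I would output is $\mu' := \mu|_{X\setminus X_Z}$, i.e. keep only the edges of $\mu$ incident to agents outside $X_Z$. Since $U \subseteq X_Z$, every agent of $X \setminus X_Z$ is $\mu$-matched, so the set of agents left unmatched by $\mu'$ is exactly $X_Z$, and by the symmetric fact $\mu'$ uses only bundles outside $Y_Z$. Envy-freeness is then immediate: every neighbour of an unmatched agent lies in $\mathcal{N}(X_Z) \subseteq Y_Z$, and no bundle of $Y_Z$ is used by $\mu'$, so no unmatched agent is adjacent to a matched bundle. Nonemptiness is the one place the hypothesis enters: if $X \setminus X_Z$ were empty, then $\mathcal{N}(X) = \mathcal{N}(X_Z) \subseteq Y_Z$, so $|\mathcal{N}(X)| \le |Y_Z| < |X_Z| = |X|$, contradicting $|\mathcal{N}(X)| \ge |X|$; hence $\mu'$ matches at least one agent. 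All steps are polynomial: a maximum matching via Hopcroft--Karp and the reachable set $Z$ via a single BFS or DFS over alternating paths.

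The only genuinely subtle point — and the step I expect to need the most care — is recognising that a maximum matching need not itself be envy-free, so one cannot simply return $\mu$. The fix is to discard precisely the ``deficient'' part $X_Z$ exposed by the alternating-path exploration from the unmatched agents, and then to verify that the condition $|\mathcal{N}(X)| \ge |X|$ is exactly what rules out $X_Z = X$ and thereby forces the output to be nonempty.
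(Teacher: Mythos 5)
Your proof is correct. Note that the paper does not prove this lemma at all---it is imported verbatim from \citet{aigner2019envy}---so there is no in-paper argument to compare against; your alternating-path construction (take a maximum matching, discard the part of $X$ reachable from the exposed agents, and use $|\mathcal{N}(X)|\geq|X|$ only to rule out discarding everything) is essentially the proof given in that cited reference, and all the key steps ($\mathcal{N}(X_Z)\subseteq Y_Z$, every bundle of $Y_Z$ matched into $X_Z$, hence $|Y_Z|<|X_Z|$) check out.
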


\lemstrongnormalization*
\begin{proof}
By \cref{lem:scale_invariance}, we can assume without loss of generality that the original instance $I$ is scaled so that $\text{MMS}_{i}^{n} = 1$ for all agents $i$ in $N$. We observe that this scaling implies that $v_{i}(M) \geq n$ for all agents. 

Let $(M_{1}, \ldots M_{n})$ be an $\text{MMS}_{i}^{n}$ partition for some agent, say agent $i$, for which $v_{i}(M) > n$. For each bundle $M_{j}$ which has value greater than $1$, we arbitrarily reduce the value of goods in $M_{j}$ until $v_{i}'(M_{j}) = 1$.\footnote{It does not matter which goods of $M_{j}$ are reduced in value.} After doing this for each bundle, $v'_{i}(M) = n$ for agent $i$. Observe that this procedure does not change the MMS of any agent because once the valuations are changed all bundles have value $1$ so $\text{MMS}_{i}^{'n} = 1$ both before and after the valuations are reduced.

Let $A = (A_{1}, \ldots, A_{n})$ be an allocation on instance $I'$ that satisfies $\MMS{\alpha}{\beta}$. When we add the removed value back to each good to return to the original instance $I$, the total value of any bundle does not decrease; i.e. for each $S \subseteq M$, $v_{i}(S) \geq v'_{i}(S)$. Thus for each agent $v_{i}(A_{i}) \geq v'_{i}(A_{i})$. This implies that $A$ also satisfies $\MMS{\alpha}{\beta}$ on $I$.

We lastly observe that we may order $I'$ to make an ordered instance $I''$ with $\text{MMS}_{i}^{n} = 1$ and $v_{i}(M) = n$ for all $i \in N$. By \cref{lem:order}, any $\MMS{\alpha}{\beta}$ allocation on $I''$ implies an $\MMS{\alpha}{\beta}$ allocation on $I'$ which in turn implies an $\MMS{\alpha}{\beta}$ allocation on $I$.
\end{proof}

\cref{alg:twothird_one} details the steps of our algorithm for guaranteeing $\MMS{\frac{2}{3}}{1}$ allocations,

\begin{algorithm}[t] \small
\SetKwInOut{Input}{Input}
\SetKwInOut{Output}{Output}
    \Input{An ordered, strongly normalized instance $I=\ins{N, M, V}$}
    \Output{A $\MMS{\frac{2}{3},1}$ allocation on $I$}
    Let $N'$ be a subset of $\Targ$ agents from $N$\;
    Let $M' = M$\;
    \While{$|N'| > 0$} {
        Select an agent $i$ from $N'$ to be the divider\;
        Let $h = |\{g \vert v_{i}(g) > \frac{1}{2}\}|$ and $s = \max(h - \Targ, 0)$\;
        \uIf{$s = 0$} {
            Bag-fill $n'$ bundles\;
        }
        \Else{
            Pair $\min(n', s)$ bundles\;
            \If{$s < n'$}{
                Restricted bag-fill $\min(n' - s, s)$ bundles\;
                \If{$s < n' - s$} {
                    Bag-fill $n' - 2s$ bundles\;
                }
            }
        }
        Let $G = (N' \cup D, E)$ be the bipartite graph\;
        Find a non-empty envy-free matching $\mathcal{M}$ of $G$\;
        \ForEach{$(j, D_{k}) \in \mathcal{M}$} {
            Allocate $D_{k}$ to agent $j$\;
            $N' = N' \setminus \{j\}$\; 
            $M' = M' \setminus D_{k}$\;
        }
    }
\caption{A $\MMS{\frac{2}{3}}{1}$ algorithm}
\label{alg:twothird_one}
\end{algorithm}

\begin{figure*}
    \centering
    \tikzset{every picture/.style={line width=0.75pt}} 

\tikzset{every picture/.style={line width=0.75pt}} 

\begin{tikzpicture}[x=0.75pt,y=0.75pt,yscale=-0.75,xscale=.75]

\draw    (50,50) -- (178,50) ;
\draw [shift={(180,50)}, rotate = 180] [color={rgb, 255:red, 0; green, 0; blue, 0 }  ][line width=0.75]    (10.93,-3.29) .. controls (6.95,-1.4) and (3.31,-0.3) .. (0,0) .. controls (3.31,0.3) and (6.95,1.4) .. (10.93,3.29)   ;
\draw    (50,50) -- (50,168) ;
\draw [shift={(50,170)}, rotate = 270] [color={rgb, 255:red, 0; green, 0; blue, 0 }  ][line width=0.75]    (10.93,-3.29) .. controls (6.95,-1.4) and (3.31,-0.3) .. (0,0) .. controls (3.31,0.3) and (6.95,1.4) .. (10.93,3.29)   ;
\draw    (640,80) -- (640,168) ;
\draw [shift={(640,170)}, rotate = 270] [color={rgb, 255:red, 0; green, 0; blue, 0 }  ][line width=0.75]    (10.93,-3.29) .. controls (6.95,-1.4) and (3.31,-0.3) .. (0,0) .. controls (3.31,0.3) and (6.95,1.4) .. (10.93,3.29)   ;
\draw    (250,50) -- (378,50) ;
\draw [shift={(380,50)}, rotate = 180] [color={rgb, 255:red, 0; green, 0; blue, 0 }  ][line width=0.75]    (10.93,-3.29) .. controls (6.95,-1.4) and (3.31,-0.3) .. (0,0) .. controls (3.31,0.3) and (6.95,1.4) .. (10.93,3.29)   ;
\draw    (210,70) -- (210,168) ;
\draw [shift={(210,170)}, rotate = 270] [color={rgb, 255:red, 0; green, 0; blue, 0 }  ][line width=0.75]    (10.93,-3.29) .. controls (6.95,-1.4) and (3.31,-0.3) .. (0,0) .. controls (3.31,0.3) and (6.95,1.4) .. (10.93,3.29)   ;
\draw    (440,50) -- (568,50) ;
\draw [shift={(570,50)}, rotate = 180] [color={rgb, 255:red, 0; green, 0; blue, 0 }  ][line width=0.75]    (10.93,-3.29) .. controls (6.95,-1.4) and (3.31,-0.3) .. (0,0) .. controls (3.31,0.3) and (6.95,1.4) .. (10.93,3.29)   ;
\draw    (410,80) -- (410,168) ;
\draw [shift={(410,170)}, rotate = 270] [color={rgb, 255:red, 0; green, 0; blue, 0 }  ][line width=0.75]    (10.93,-3.29) .. controls (6.95,-1.4) and (3.31,-0.3) .. (0,0) .. controls (3.31,0.3) and (6.95,1.4) .. (10.93,3.29)   ;
\draw    (70,220) -- (278.05,269.54) ;
\draw [shift={(280,270)}, rotate = 193.39] [color={rgb, 255:red, 0; green, 0; blue, 0 }  ][line width=0.75]    (10.93,-3.29) .. controls (6.95,-1.4) and (3.31,-0.3) .. (0,0) .. controls (3.31,0.3) and (6.95,1.4) .. (10.93,3.29)   ;
\draw    (210,220) -- (288.21,259.11) ;
\draw [shift={(290,260)}, rotate = 206.57] [color={rgb, 255:red, 0; green, 0; blue, 0 }  ][line width=0.75]    (10.93,-3.29) .. controls (6.95,-1.4) and (3.31,-0.3) .. (0,0) .. controls (3.31,0.3) and (6.95,1.4) .. (10.93,3.29)   ;
\draw    (410,230) -- (361.71,258.97) ;
\draw [shift={(360,260)}, rotate = 329.03999999999996] [color={rgb, 255:red, 0; green, 0; blue, 0 }  ][line width=0.75]    (10.93,-3.29) .. controls (6.95,-1.4) and (3.31,-0.3) .. (0,0) .. controls (3.31,0.3) and (6.95,1.4) .. (10.93,3.29)   ;
\draw    (610,220) -- (371.96,269.59) ;
\draw [shift={(370,270)}, rotate = 348.23] [color={rgb, 255:red, 0; green, 0; blue, 0 }  ][line width=0.75]    (10.93,-3.29) .. controls (6.95,-1.4) and (3.31,-0.3) .. (0,0) .. controls (3.31,0.3) and (6.95,1.4) .. (10.93,3.29)   ;

\draw (2,140) node [anchor=north west][inner sep=0.75pt]  [rotate=-270] [align=left] {$\displaystyle h\leq \lfloor \frac{2n}{3}\rfloor $ };
\draw (11.01,178.99) node [anchor=north west][inner sep=0.75pt]  [rotate=-0.01] [align=left] {\begin{minipage}[lt]{50.93200000000001pt}\setlength\topsep{0pt}
\begin{center}
Bag-fill\\$\displaystyle n'$ bundles
\end{center}

\end{minipage}};
\draw (180,42) node [anchor=north west][inner sep=0.75pt]   [align=left] {\begin{minipage}[lt]{34.487356000000005pt}\setlength\topsep{0pt}
\begin{center}
Pairing
\end{center}

\end{minipage}};
\draw (293,262) node [anchor=north west][inner sep=0.75pt]   [align=left] {Non-empty \\Envy-free\\Matching};
\draw (77.98,2.01) node [anchor=north west][inner sep=0.75pt]  [rotate=-0.79] [align=left] {$\displaystyle h >\lfloor \frac{2n}{3}\rfloor $ };
\draw (182,140) node [anchor=north west][inner sep=0.75pt]  [rotate=-270] [align=left] {\begin{minipage}[lt]{31.495356pt}\setlength\topsep{0pt}
\begin{center}
$\displaystyle n'\leq s$
\end{center}

\end{minipage}};
\draw (184,172) node [anchor=north west][inner sep=0.75pt]   [align=left] {\begin{minipage}[lt]{38.476644pt}\setlength\topsep{0pt}
\begin{center}
Pair $\displaystyle n'$\\bundles
\end{center}

\end{minipage}};
\draw (281,15) node [anchor=north west][inner sep=0.75pt]   [align=left] {\begin{minipage}[lt]{34.952pt}\setlength\topsep{0pt}
\begin{center}
$ $$\displaystyle n' >s$
\end{center}

\end{minipage}};
\draw (375,32) node [anchor=north west][inner sep=0.75pt]   [align=left] {\begin{minipage}[lt]{38.476644pt}\setlength\topsep{0pt}
\begin{center}
Pair $\displaystyle s$\\bundles
\end{center}

\end{minipage}};
\draw (382,160) node [anchor=north west][inner sep=0.75pt]  [rotate=-270] [align=left] {\begin{minipage}[lt]{47.6pt}\setlength\topsep{0pt}
\begin{center}
$\displaystyle n'-s\leq s$
\end{center}

\end{minipage}};
\draw (351,179) node [anchor=north west][inner sep=0.75pt]   [align=left] {\begin{minipage}[lt]{82.67644pt}\setlength\topsep{0pt}
\begin{center}
Restricted Bag-fill\\$\displaystyle n'-s$ bundles
\end{center}

\end{minipage}};
\draw (460,15) node [anchor=north west][inner sep=0.75pt]   [align=left] {\begin{minipage}[lt]{47.588644pt}\setlength\topsep{0pt}
\begin{center}
$\displaystyle n'-s >s$
\end{center}

\end{minipage}};
\draw (581,32) node [anchor=north west][inner sep=0.75pt]   [align=left] {\begin{minipage}[lt]{82.67644pt}\setlength\topsep{0pt}
\begin{center}
Restricted Bag-fill\\$\displaystyle s$ bundles
\end{center}

\end{minipage}};
\draw (591.01,178.98) node [anchor=north west][inner sep=0.75pt]  [rotate=-0.01] [align=left] {\begin{minipage}[lt]{72.80556pt}\setlength\topsep{0pt}
\begin{center}
Bag-fill\\$\displaystyle n'-2s$ bundles
\end{center}

\end{minipage}};

\end{tikzpicture}
    \caption{The steps of our algorithm for dividing the remaining goods into bundles at each iteration by the lone divider in the $\MMS{\frac{2}{3}}{1}$ existence proof.}
    \label{fig:lone_divider_decision_tree}
\end{figure*}
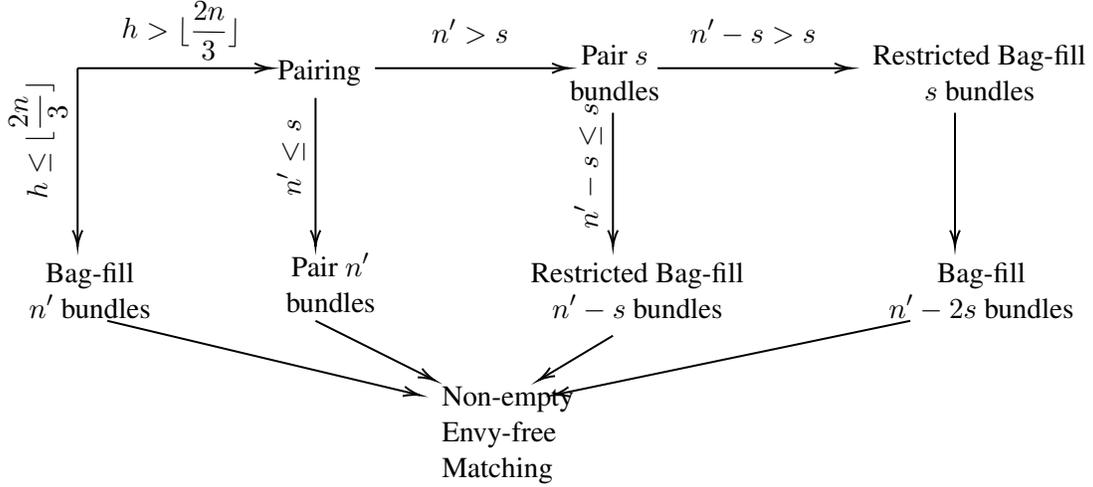

\thmtwothirdsexistence*
\begin{proof}
By combining \cref{lem:order} and \cref{lem:strong_normalization}, we assume that the given instance is ordered and strongly normalized. Thus for all agents $\text{MMS}_{i}^{n} = 1$ and $v_{i}(g_{1}) < 1$.

We begin by selecting a subset $N' \subseteq N$ with $n' = |N'| = \Targ$. Our approach will be through the lone divider technique. At each iteration, we select one remaining agent to partition the remaining goods $M'$ into $n'$ bundles $(D_{1}, \ldots, D_{n'})$ each with value at least $1$. We create a bipartite graph with edges connecting the remaining agents $N'$ with the bundles that they value at least $1$. There are at least $n'$ acceptable bundles for the agent who formed the partition, so $|\mathcal{N}(N')| \geq |N'|$. Thus by \cref{lem:EFMatching}, a non-empty envy-free matching exists. For every edge in the matching, we allocate the bundle to the corresponding agent and then iterate this procedure on the remaining agents and goods. In order to guarantee that this process continues, we require that each bundle contains exactly one good from $\{g_{1}, \ldots, g_{\Targ}\}$.

Consider an arbitrary iteration of this procedure and an arbitrary agent, say agent $i$, who is selected to divide the items. We say that a good $g$ is high-value for $i$ if $v_{i}(g) > \frac{1}{2}$ and low-value otherwise. Let $h$ be the number of goods that are high-value for agent $i$. Let $A = (A_{1}, \ldots, A_{n})$ be an $\text{MMS}_{i}^{n}$ partition for agent $i$. By the strong normalization assumption, $\text{MMS}_{i}^{n} = 1$ and $v_{i}(M) = n$, meaning that the value of each bundle of $A$ is exactly $1$. Thus no bundle of $A$ contains more than one high-value good. Suppose without loss of generality that the bundles of $A$ are labeled so that each high-value goods $g_{j}$ is in the respective bundle $A_{j}$ for $j \in [h]$. We further define the remainders of these bundles to be $R_{j} \coloneqq A_{j} \setminus \{g_{j}\}$. These remainder sets have value $r_{j} = v_{i}(R_{j}) = 1 - v_{i}(g_{j})$.

Suppose that $n' = \Targ - k$ which implies that $k$ bundles $(B_{c})_{c=1}^{k}$ have already been allocated to other agents in previous rounds of the lone divider algorithm. If agent $i$ valued any such bundle $B_{c}$ at least $1$, then agent $i$ would have been envious in the matching where $B_{c}$ was allocated. Thus $v_{i}(B_{c}) < 1$ for each $c \in [k]$. Furthermore, since we maintain the constraint that each bundle in every partition for the lone divider has exactly one good from $\{g_{1}, \ldots, g_{\Targ}\}$, there are $\Targ - k = n'$ of these items remaining.

We now consider two cases depending on the number of high-value goods.

\paragraph{Case 1: $h \leq \Targ$.}
The total remaining value is at least $n - k$. We let agent $i$ form bundles using bag-filling with each bag initialized with one of the remaining $n'$ goods from $\{g_{1}, \ldots, g_{\Targ}\}$. The bags are filled with low-value goods until they have value in the range $[1, \frac{3}{2})$. This process is continued until the value from remaining goods is less than $1$. Suppose that $t$ bundles are filled this way. Then the total value of all goods is less than $\frac{3}{2} t + 1 > n - k$. Thus $t \geq \frac{2}{3}n  - \frac{2}{3}k - \frac{2}{3}$. Since $t,n,k$ are all integers, $t \geq \Targ - k = n'$.

\paragraph{Case 2: $h > \Targ$.}
When $h > \Targ$, we must be more careful about the remaining high-value goods. We define the number of surplus high-value goods as $s = h - \Targ$. In any partition of the goods into $\Targ$ bundles (counting the $k$ bundles $(B_{c})_{c=1}^{k}$), there will necessarily be at least $s$ high-value goods in a bundle which contains another high-value good. Thus our first step is to pair high-value goods.

\paragraph{Step 1: Pairs.}
Define $H^{+}$ to be the $\Targ - k$ highest-valued remaining goods (those contained in $\{g_{1}, \ldots, g_{\Targ}\}$). Likewise, define $H^{-}$ to be the other $h - \Targ$ high-value goods (those not contained in $\{g_{1}, \ldots, g_{\Targ}\}$). 

Agent $i$ forms $\min(n', s)$ bundles by pairing the lowest-value goods from $H^{+}$ with the goods from $H^{-}$. If $n' \leq s$ then agent $i$ will form $n'$ bundles by pairing, so we are done.

Otherwise, $s < n'$, and agent $i$ still needs to construct $n' - s$ bundles. There are still a total of $(h - k) - 2s = ((s + \Targ) - k) - 2s = n' - s$ high-value goods remaining. These goods all fall within $\{g_{1}, \ldots, g_{\Targ}\}$.

In \cref{fig:lone_divider_ex}, the high-value goods $g_{2}$ and $g_{7}$ were allocated in the previous round, so there remains $n' = 6$ agents. The high-value goods $g_{6}$, $g_{8}$, $g_{9}$, and $g_{10}$ are first paired into two bundles $\{g_{6}, g_{10}\}$ and $\{g_{8}, g_{9}\}$.

\begin{figure*}
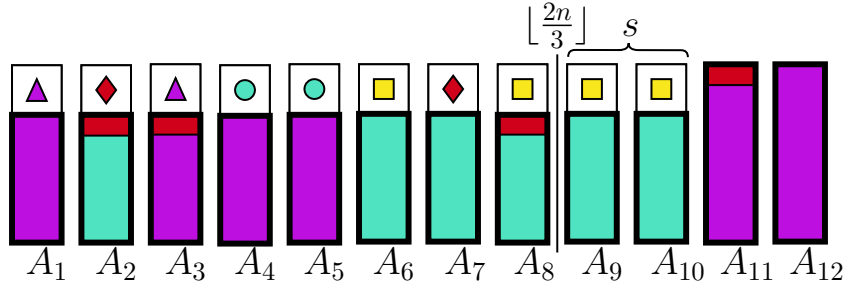

    \centering
    \subfile{images/lone_divider_partition}
    \caption{
    An example MMS partition for the current divider. Sections in red illustrate the goods that were allocated in previous iterations. There are two such goods, so $n' = \Targ - 2 = 6$. The first step is pairing the lowest $2s$ remaining high-value goods into $s$ bundles: $\{g_{6}, g_{10}\}$ and $\{g_{8}, g_{9}\}$. The second phase is restricted bag-filling. Here, the only low-value goods considered are the remainder sets corresponding to high-value goods which were already used: $\{R'_{2}, R'_{6}, R'_{7}, R'_{8}, R'_{9}, R'_{10}\}$ shown in teal. The solid boxes illustrate the remainder sets $R_{j}$ of which the teal sections are $R'_{j}$. All remaining goods, shown in purple, are then available for simple bag-filling to fill the remaining $2$ bundles.}
   \label{fig:lone_divider_ex}
\end{figure*}

\paragraph{Step 2: Restricted Bag-filling.}
The next step is to construct $\min(n' - s, s)$ bundles from bag-filling using only a subset of the available low-value goods. If $n' - s$ is smaller then there are sufficiently many bundles formed for the envy-free matching and we are done. Otherwise, we now have $2s$ acceptable bundles leaving $n' - 2s$ high-value goods from $\{g_{1}, \ldots, g_{\Targ}\}$.

For the restricted bag-filling we consider only the low-value goods which fall in bundles of $A$ whose high-value goods have already been accounted for. Thus we define two specific types of \textit{remainder sets} as follows:

\begin{itemize}
    \item The $k$ remainder sets $(R_{c}^{U})_{c = 1}^{k}$ corresponding to the high-value goods $(g_{c}^{U})_{c = 1}^{k}$ previously allocated in the bundles $(B_{c})_{c = 1}^{k}$. We likewise define $r_{c}^{U} = v_{i}(R_{c}^{U})$ for $c \in [k]$. In \cref{fig:lone_divider_example}, the previously allocated high-value goods are $g_{1}^{U} = g_{2}$ and $g_{2}^{U} = g_{7}$ which were allocated in bundles $B_{1}$ and $B_{2}$. The corresponding remainder sets are $R_{1}^{U} = R_{2}$ and $R_{2}^{U} = R_{7}$.
    \item The $2s$ remainder sets $(R_{d}^{P})_{d = 1}^{2s}$ of the $s$ pairs constructed in step 1. We likewise define $r_{d}^{P} = v_{i}(R_{d}^{P})$ for $d \in [2s]$. In \cref{fig:lone_divider_ex}, there are four such bundles $R_{1}^{P} = R_{6}$, $R_{2}^{P} = R_{8}$, $R_{3}^{P} = R_{9}$, and $R_{4}^{P} = R_{10}$.
\end{itemize}

By definition, the total value of these remainder sets is: 
\[
v_{i}\left(\bigcup_{c=1}^{k} R_{c}^{U} \cup \bigcup_{d = 1}^{2s} R_{d}^{P}\right) = \sum_{c=1}^{k} r_{c}^{U} + \sum_{d = 1}^{2s} r_{d}^{P}.
\]

We further define $R'_{j} \coloneqq R_{j} \setminus (\bigcup_{c=1}^{k} B_{c})$; that is, $R'_{j}$ is the subset of $R_{j}$ that remains after removing any goods that were previously allocated in the $k$ unacceptable bundles $(B_{c})_{c=1}^{k}$. Since $v_{i}(B_{u}) < 1$ for each $c \in [k]$, the total value of low-value goods of each $B_{c}$ is less than $r_{c}^{U}$ (because $r_{c}^{U} + v_{i}(g_{c}^{U}) = v_{i}(R_{c}^{U} \cup \{g_{c}^{U}\}) = v_{i}(A_{c}^{U}) = 1 > v_{i}(B_{c})$ and $g_{c}^{U} \in B_{c}$). Thus the total value of low-value goods allocated in the $k$ unacceptable bundles is at most $\sum_{c=1}^{k} r_{c}^{U}$. Furthermore, the total remaining value of the remainder sets $(R_{c}^{U})_{c = 1}^{k}$ and $(R_{d}^{U})_{d = 1}^{2s}$ is more than $\sum\limits_{d = 1}^{2s} r_{d}^{P}$. That is, 
\[
v_{i}\left(\bigcup_{c=1}^{k} R_{c}^{'U} \cup \bigcup_{d = 1}^{2s} R_{d}^{'P}\right) > \sum_{d = 1}^{2s} r_{d}^{P}.
\]

We then run a restricted bag-filling where we initialize each bundle with a single high-value good $g_{j}$ with $j \in [\Targ - s]$ (when this step starts there are $n' - s$ such goods). We fill bags by adding the remainder sets $R_{c}^{'U}$ and $R_{d}^{'P}$ arbitrarily until the bag has a value of $1$.\footnote{Observe that $r_{c}^{U}$ and $r_{d}^{P}$ are both less than $\frac{1}{2}$, so by ``glueing'' these low-value goods together, we can treat the entire remainder set as a single low-value good.} 

The bag-filling stops when either there are no more high-value goods or when there are no more remainder sets to fill another bundle. In the former case, $n' - s$ bundles are filled yielding a total of $n'$ bundles when we include the bundles formed in Step 1. We now consider the latter case.

Let $t$ be the number of bundles completed during restricted bag-filling. For each $a \in [t]$, the $a$-th bag contains one high-value good, call it $g_{a}^{H}$ for $g_{a}^{H} \in \{g_{1}, \ldots, g_{\Targ - s}\}$, and some remainder sets $R'_{j}$. Since the bundle is worth at least $1$, the total value of the remainder sets in the bag is at least $r_{a}^{H} = 1 - v_{i}(g_{a}^{H})$. Let $R_{a}^{'L}$ be the last remainder set added to bag $a$ with value $r_{a}^{'L}$. Then the total value of the remainder sets in bag $a$ is at most $r_{a}^{H} + r_{a}^{'L} \leq r_{a}^{H} + r_{a}^{L}$. Furthermore, since the $(t+1)$-th bag could not be filled, the total value of the remainder sets in this last bag is less than $r_{t+1}^{H}$. Therefore the total value of remainder sets satisfies
\begin{align*}
    v_{i}\left(\bigcup_{c = 1}^{k} R_{c}^{'U} \cup \bigcup_{d}^{2s} R_{d}^{'P} \right) &< r_{t+1}^{H} + \sum_{a = 1}^{t}(r_{a}^{H} + r_{a}^{L}) \\
    &= \left(\sum_{a = 1}^{t+1} r_{a}^{H}\right) + \left(\sum_{a = 1}^{t} r_{a}^{L}\right)
\end{align*}

Combining this with the lower bound of the remainder sets yields
\[
\left(\sum_{a = 1}^{t+1} r_{a}^{H}\right) + \left(\sum_{a = 1}^{t} r_{a}^{L}\right) > \sum_{d = 1}^{2s} r_{d}^{P}.
\]
In the left-hand side there are $2t + 1$ terms, while on the right-hand side there are $2s$ terms. We now show that each term in the left-hand side is less than or equal to a unique term in the right-hand side. Since the left-hand side is larger than the right-hand side, this implies that the the left-hand side must have more terms than the right-hand side. Thus $t \geq s$ and agent $i$ successfully constructs $s$ bundles during the restricted bag-filling.

\begin{itemize}
    \item Consider first the terms $r_{a}^{H}$ and compare them to the terms $r_{d}^{P}$ for $d \in [s]$ - the remainders of the $s$ high-value goods of $H^{+}$ used in pairing. Since the pairs in Step 1 were constructed from the lowest valued goods in $\{g_{1}, \ldots, g_{\Targ}\}$, every remainder $r_{d}^{P}$ is larger than any remainder $r_{a}^{H}$. Therefore, every term $r_{a}^{H}$ is weakly smaller than the corresponding term $r_{d}^{P}$.
    \item Consider now the terms $r_{a}^{L}$ and compare them to $r_{d}^{P}$ for $d \in \{s+1, \ldots, 2s\}$ - the remainders of the $s$ high-value goods of $H^{-}$ used in pairing. Each of the $r_{a}^{L}$ is unique, so it is either equal to some remainder $r_{d}^{P}$ or to some remainder $r_{c}^{U}$. All remainders $r_{c}^{U}$ correspond to high-value goods $g_{j}$ for $j \in [\Targ]$ so they are smaller than the remainders $r_{d}^{P}$ for $d \in \{s + 1, \ldots, 2s\}$. Therefore, every $r_{a}^{L}$ is weakly smaller than some unique term $r_{d}^{P}$.
\end{itemize}

\paragraph{Step 3: Plain Bag-filling.}
Up to this point, agent $i$ has used a total value of at most $3s + k/2$:
\begin{itemize}
    \item The $2s$ high-value goods used in Step 1 plus their remainders $r_{d}^{P}$.
    \item An additional $s$ high-value goods used in Step 2.
    \item The remainders $r_{c}^{U}$ for $c \in [k]$ used in Step 2, each of which has a value of less than $\frac{1}{2}$.
\end{itemize}

Taking account of each of the $k$ unacceptable bundles which have value less than $1$, the total remaining value is more than 
\begin{align*}
n - (3s + k/2) - k \geq \frac{3}{2}\Targ - 3s - \frac{3}{2}k \\
\frac{3}{2}(\Targ - 2s - k) = \frac{3}{2}(n' - 2s).
\end{align*}
Furthermore, there are exactly $n' - 2s$ high-value goods remaining, and the value of all other remaining goods is at most $\frac{1}{2}$. Therefore, by plain bag-filling as in Case 1, agent $i$ can fill the required $n' - 2s$ bags.
\end{proof}

\subsection{Proof of \cref{thm:two_thirds_poly}}

\begin{algorithm}[t] \small
\SetKwInOut{Input}{Input}
\SetKwInOut{Output}{Output}
    \Input{A normalized instance $I=\ins{N, M, V}$}
    \Output{A $\MMS{\frac{2}{3},1}$ allocation $A$ on $I$}
    Let $N'$ be a subset of $\Targ$ agents from $N$\;
    Let $M'  = M$ \;
    \While{$|N'| > 0$ and $|M'| > 0$ } {
        Select an agent $i$ from $N'$ to be the divider\;
        Let $h = |\{g \vert v_{i}(g) > \frac{1}{2}\}|$ and $s = \max(h - \Targ, 0)$\;
        \uIf{$s = 0$} {
            Bag-fill $n'$ bundles\;
        }
        \Else{
            Pair $\min(n', s)$ bundles\;
            \If{$s < n'$}{
                Bag-fill $n' - s$ bundles\;
            }
        }
        Let $G = (N' \cup D, E)$ be the acceptability graph\;
        Find a non-empty envy-free matching $\mathcal{M}$ of $G$\;
        \ForEach{$(j, D_{k}) \in \mathcal{M}$} {
            Allocate $D_{k}$ to agent $j$\;
            $N' = N' \setminus \{j\}$\; 
            $M' = M' \setminus D_{k}$\;
        }
    }
\caption{A polynomial-time variant of \cref{alg:twothird_one} without restricted bag-filling.}
\label{alg:lone_divider_poly}
\end{algorithm}

\thmtwothirdsproof*
\begin{proof}
Our algorithm follows the intuition of the proof of \cref{thm:two_thirds_existence}. The main difference is that we relax strong normalization (\cref{lem:strong_normalization}) to normalization (\cref{lem:normalize}), and we skip the restricted bag-filling step. We show that the resulting algorithm, seen in \cref{alg:lone_divider_poly}, computes an $\MMS{\frac{2}{3}}{1}$ allocation when $n < 9$.

Using \cref{lem:normalize}, we assume that the instance is normalized; that is, the instance is ordered, scaled so that $v_{i}(M) = n$, and reduced $v_{i}(g_{1}) < 1$ and $v_{i}(\{g_{n}, g_{n+1}\}) < 1$. As in the proof of \cref{thm:two_thirds_existence}, we use the lone divider technique. We begin with $N'$ as an arbitrary subset of $\Targ$ agents. We select one agent to divide items into $n' = |N'|$ bundles. We create a bipartite graph with agents of $N'$ on one side and the bundles $(D_{1}, \ldots, D_{n'})$ on the other. Since there are $n'$ bundles which are each acceptable to the divider, an envy-free matching exists by \cref{lem:EFMatching}. We find such a matching and allocate the matched bundles to the corresponding agents. Then we remove the agents and bundles from the market and repeat this process until no agents remain. As in \cref{thm:two_thirds_existence}, we require that each allocated bundle contains exactly one good from $\{g_{1}, \ldots, g_{\Targ}\}$.

As in \cref{thm:two_thirds_existence}, the key challenge is to show that at each iteration, the divider is able to form $n'$ bundles. Consider an arbitrary iteration with an arbitrary divider $i$. Suppose that $k$ bundles $(B_{c})_{c=1}^{k}$ have previously been allocated. Since agent $i$ was not envious of the agents who received these bundles, $v_{i}(B_{c}) < 1$. We define a good $g$ to be high-value if $v_{i}(g) \geq \frac{1}{2}$. Since $v_{i}(\{g_{n}, g_{n+1}\}) < 1$, $v_{i}(g_{n+1}) < \frac{1}{2}$ implying that there are at most $n$ high-value goods. 

We consider two cases dependent upon the number of high-value goods.

\paragraph{Case 1 $h \leq \Targ$.}
The total value remaining is at least $n - k$. In this case, agent $i$ forms bundles by bag-filling with each bag initialized with one of the remaining $n'$ goods from $\{g_{1}, \ldots, g_{\Targ}\}$. Each bag is filled using low-value goods in order of decreasing value until the value of the bundle falls in the range $[1, \frac{3}{2})$. This process is continued until the remaining goods have value less than $1$. Suppose that $t$ bundles are filled this way. Then the total value of all goods is less than $\frac{3}{2}t + 1 > n - k$. Thus $t \geq \frac{2}{3}n - \frac{2}{3}k - \frac{2}{3}$. Since $t, n, k$ are all integers, $t \geq \Targ - k = n'$.

\paragraph{Case 2 $h > \Targ$.}
In this case, define $s = h - \Targ$ to be the number of surplus high-value goods. Agent $i$ begins by pairing $\min(n', s)$ bundles each containing one high-value good from $\{g_{1}, \ldots, g_{\Targ}\}$ and one high-value good from $\{g_{\Targ + 1}, \ldots, g_{h}\}$. If $n' \leq s$, then there are sufficiently many bundles for the envy-free matching and we are done. If $s < n'$, then we let agent $i$ form the remaining bundles using the simple bag-filling of Case 1.

We now show that when $n < 9$, this bag-filling will form at least $n' - s$ bundles. Suppose that $t$ bundles were formed during this bag-filling.

Consider the following cases which depend on $h$, $s$, and $t$.
    
\begin{enumerate}
    \item Suppose $t \geq n' - s$. Then since $\Targ - s \geq n' - s$, we have sufficiently many bundles continue to the envy-free matching.
    
    \item Suppose $t \leq n' - 2s$. Since the instance is ordered, the $(h - t)$ high-value goods not allocated during bag-filling all have value at most $v_{i}(g_{t+1})$. Thus the total remaining low-value goods have value at least $n - k - \frac{3}{2}t - (n' + s - t)v_{i}(g_{t+1})$.
    
    \begin{align*}
        n - k - \frac{3}{2}t - (n' + s - t)v_{i}(g_{t+1}) &\geq \frac{3}{2}\Targ - k - \frac{3}{2}t - (n' + s - t)v_{i}(g_{t+1}) \\
        &= \frac{1}{2}\Targ + (\Targ - k) - \frac{3}{2} t - (n' + s - t)v_{i}(g_{t+1}) \\
        &= \frac{1}{2}\Targ + n' + s - s - \frac{3}{2} t - (n' + s - t)v_{i}(g_{t+1}) \\
        &= \frac{1}{2}\Targ - \frac{1}{2}t - s + (n' + s - t) - (n' + s - t)v_{i}(g_{t+1}) \\
        &= \frac{1}{2}(\Targ - 2s - t) + (n' + s - t)(1 - v_{i}(g_{t+1}))
    \end{align*}
    
    Since $t \leq n' - 2s$, $\Targ - 2s - t \geq 0$ and $n' + s - t > 0$ implying that the total remaining value from low-value goods is at least $1 - v_{i}(g_{t+1})$ and another bundle could have been filled. This contradicts that bag filling stopped with $t \leq n' - 2s$.
    
    \item The remaining cases arise when $n' - 2s < t < n' - s$. We enumerate each case to show that they do not occur when $n < 9$. We consider cases as a tuple of the form $(n, n', s, t)$ for clarity. Since $n < 9$, and $n' - 2s < t < n' - s$, there are $12$ cases, seen in \cref{tab:lone_divider_cases}
    
    The cases $(8, 5, 3, 0)$, $(8, 4, 3, 0)$, $(8, 3, 2, 0)$, $(7, 4, 3, 0)$, $(7, 3, 2, 0)$, $(6, 3, 2, 0)$, and $(5, 3, 2, 0)$ assume that no bundles are allocated during bag-filling. However, combining the facts that there is $n - k$ value remaining, $v_{i}(g) \leq v_{i}(g_{1}) < 1$, there are at most $n - k$ high-value goods remaining, and $v_{i}(M) = n$, there must be at least $(n - k) - (n - k)(v_{i}(g_{1})) \geq (n - k)(1 - v_{i}(g_{1})) \geq 1 - v_{i}(g_{1})$ value from low-value goods. Thus at least one bundle is allocated during bag-filling and these cases do not occur.
    
    The cases $(6, 4, 2, 1)$ and $(8, 5, 3, 1)$ only occur if agent $i$ believes there are $n$ high-value goods. Notice also that $n' = \Targ$ in both cases so no bundles were previously allocated. Since $v_{i}(\{g_{n}, g_{n+1}\}) < 1$ all low-value goods are worth less than $1 - v_{i}(g_{n})$. If $v_{i}(\{g_{1}, g_{n+1}\}) < 1$ then the first bundle from bag-filling was $\{g_{1}, g_{n+1}\}$. In this case, the amount of value remaining from low-value goods is at least:
    \begin{align*}
        &n - (v_{i}(g_{1}) + v_{i}(g_{n+1})) - \sum_{k = 2}^{n} v_{i}(g_{k}) \\
        &= n - (v_{i}(g_{n}) + v_{i}(g_{n+1})) - \sum_{k = 1}^{n-1} v_{i}(g_{k}) \\
        &> n - 1 - \sum_{k = 1}^{n-1} v_{i}(g_{k}) \\
        &\geq (1 - v_{i}(g_{2})) + (n - 2) - \sum_{k = 1, k \neq 2}^{n-1} v_{i}(g_{k}) \\
        &> (1 - v_{i}(g_{2})) + (n - 2) - (n-2) \\
        &> 1 - v_{i}(g_{2})
    \end{align*}
    
    If $\{g_{1}, g_{n+1}\}$ is not worth $1$ agent $i$, then because low-value goods are added in descending order, the maximum value from low-value goods allocated in the first bundle is $2(1 - v_{i}(g_{1}))$. Thus the value remaining from low-value goods is at least:
    \begin{align*}
        &n - \sum_{k = 1}^{n} v_{i}(g_{k}) - 2(1 - v_{i}(g_{1})) \\
        &\geq (n - 2) - \sum_{k = 2}^{n-1} v_{i}(g_{k}) + (1 - v_{i}(g_{1})) + (1 - v_{i}(g_{n}) \\
        &\qquad - 2(1 - v_{i}(g_{1}))  \\
        &\geq (n - 2) - \sum_{k = 2}^{n-1} v_{i}(g_{k}) \\
        &= (1 - v_{i}(g_{2})) + (n - 3) - \sum_{k = 3}^{n-1} v_{i}(g_{k}) \\
        &\geq (1 - v_{i}(g_{2})) + (n - 3) - (n - 3) \\
        &= 1 - v_{i}(g_{2})
    \end{align*}
    In either case, there is at least enough value from low-value goods to fill another bundle during bag-filling contradicting that bag-filling stopped. Thus the cases $(6, 4, 2, 1)$ and $(8, 5, 3, 1)$ do not occur.
   
    The next cases we address are $(7, 4, 2, 1)$ and $(8, 5, 2, 2)$. In these cases, $n' = \Targ$, and there are $n-1$ high-value goods. We observe that the amount of value from low-value goods in these cases is at least:
    \begin{align*}
        &n - \sum_{k = 1}^{n-1} v_{i}(g_{k}) \\
        &\geq 1 + (n - 1) - v_{i}(g_{1}) - v_{i}(g_{2}) - v_{i}(g_{3}) - \sum_{k = 4}^{n-1} v_{i}(g_{k}) \\
        &= 1 + (1 - v_{i}(g_{1})) + (1 - v_{i}(g_{2})) + (1 - v_{i}(g_{3})) \\
        &\qquad + (n - 4) - \sum_{k = 4}^{n-1} v_{i}(g_{k}) \\
        &> 1 + (1 - v_{i}(g_{1})) + (1 - v_{i}(g_{2})) + (1 - v_{i}(g_{3})) \\
        &\qquad + (n - 4) - (n-4) \\
        &> (\frac{3}{2} - v_{i}(g_{1})) + (\frac{3}{2} - v_{i}(g_{2})) + (1 - v_{i}(g_{3}))
    \end{align*}
    Since the first two bundles have total value at most $\frac{3}{2}$, there is at least enough value to fill the first two bundles and still have enough low-value goods to fill a third bundle. This contradicts that only one or two bundles were filled during bag-filling (based on the cases). Thus neither cases $(7, 4, 2, 1)$ and $(8, 5, 2, 2)$ occur.
    
    The last case is $(8, 4, 2, 1)$. In this case, one bundle has previously been allocated which leaves at least $n - k = 7$ value. Accounting for the $4$ high-value goods used in pairing and the $\frac{3}{2}$ value allocated in the first bundle from bag-filling, there is at least $7 - 4 - \frac{3}{2} = \frac{3}{2}$ value remaining from only low-value goods. Thus a second bundle is filled during bag-filling and the case $(8, 4, 2, 1)$ does not occur.
    
    \begin{table}[h!]
        \centering
        \begin{tabular}{c|c|c|c}
            $n$ & $n'$ & $s$ & t \\\hline
            8 & 5 & 2 & 2 \\
            8 & 5 & 3 & 1 \\
            8 & 5 & 3 & 0 \\
            8 & 4 & 2 & 1 \\
            8 & 4 & 3 & 0 \\
            8 & 3 & 2 & 0 \\
            7 & 4 & 2 & 1 \\
            7 & 4 & 3 & 0 \\
            7 & 3 & 2 & 0 \\
            6 & 4 & 2 & 1 \\
            6 & 3 & 2 & 0 \\
            5 & 3 & 2 & 0
        \end{tabular}
        \caption{The remaining cases where $n' - 2s < t < n' - s$.}
        \label{tab:lone_divider_cases}
    \end{table}
\end{enumerate}
\end{proof}



\section{Additional Material for Section~\ref{sec:experiments}}

\Cref{fig:numbers_reduced} shows the number of agents that were removed during the reduction phase. It is important to observe that with uniformly random valuations, the more goods there are, the closer each good's value is to $\frac{n}{m}$ after valuations are scaled so that $v_{i}(M) = n$. 

This observation likely explains the strong banding apparent in \cref{fig:numbers_reduced}. When $m < 2n$, the pigeonhole principle guarantees that there is a bundle of any MMS partition which contains exactly one good. Using a simple swapping argument, we see that $v_{i}(g_{1}) \geq \text{MMS}_{i}^{n}$ when $m < 2n$. While we do not explicitly implement this reduction in our experiment, the band close to $m = 2n$ illustrates that random valuations will still exhibit this behavior.

\begin{figure}[h!]
    \centering
    \includegraphics[width=0.8\textwidth]{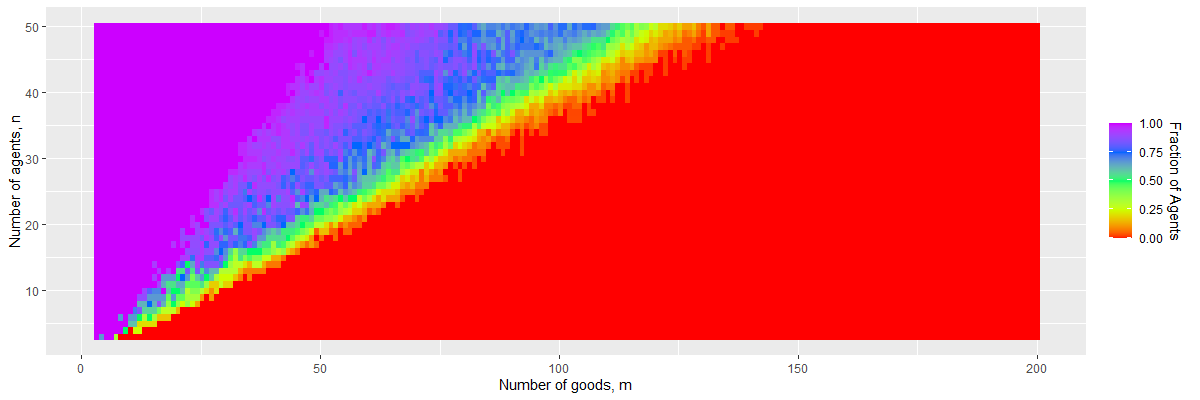}
    \caption{The fraction of agents removed during the reduction phase of normalization in our experiments.}
    \label{fig:numbers_reduced}
\end{figure}

\end{document}